\newcommand{\problemDefn}{\hyperref[defn:problem_defn]{Robust Multivariate Polynomial Regression Problem}\xspace}
\title{Outlier Robust Multivariate Polynomial Regression}
\author[1]{Vipul Arora\thanks{Supported in part by NRF-AI Fellowship R-252-100-B13-281.}}
\affil{\footnotesize National University of Singapore. $\{$\texttt{\href{mailto:vipul@comp.nus.edu.sg}{vipul},\href{mailto:arnab@comp.nus.edu.sg}{arnab}}$\}$\texttt{@comp.nus.edu.sg}.}
\author[1]{Arnab Bhattacharyya\thanks{Supported in part by NRF-AI Fellowship R-252-100-B13-281, Amazon Faculty Research Award, and Google South \& Southeast Asia Research Award.}}
\author[2]{Mathews Boban\thanks{Supported in part by NRF-AI Fellowship R-252-100-B13-281.}}
\affil{\footnotesize National University of Singapore. \texttt{\href{mailto:mathewsboban242@gmail.com}{mathewsboban242@gmail.com}}}
\author[3]{Venkatesan Guruswami\thanks{Supported in part by NSF CCF-2211972 and a Simons Investigator Award}}
\affil{\footnotesize University of California, Berkeley. \texttt{\href{mailto:venkatg@berkeley.edu}{venkatg@berkeley.edu}}.}
\author[4]{Esty Kelman\thanks{Supported in part by an Amazon Faculty Research Award to AB, in part by ERC grant 834735, and in part by NSF TRIPODS program (award DMS-2022448) }}
\affil{\footnotesize Massachusetts Institute of Technology, and Boston University. \texttt{\href{mailto:ekelman@mit.edu}{ekelman@mit.edu}}.}
\date{February 2024}
\begin{document} 
\parskip=0.5ex
\maketitle

\thispagestyle{empty}

\allowdisplaybreaks

\begin{abstract}
We study the problem of \emph{robust multivariate polynomial regression}: 
let $p\colon\mathbb{R}^n\to\mathbb{R}$ be an unknown $n$-variate polynomial of degree at most $d$ in each variable. We are given as input a set of random samples $(\mathbf{x}_i,y_i) \in [-1,1]^n \times \mathbb{R}$  that are noisy versions of $(\mathbf{x}_i,p(\mathbf{x}_i))$. More precisely, each $\mathbf{x}_i$ is sampled independently from some distribution $\chi$ on $[-1,1]^n$, and for each $i$ independently, $y_i$ is arbitrary (i.e., an outlier) with probability at most $\rho < 1/2$, and  otherwise satisfies $|y_i-p(\mathbf{x}_i)|\leq\sigma$. The goal is to output a polynomial $\hat{p}$, of degree at most $d$ in each variable, within an $\ell_\infty$-distance of at most $O(\sigma)$ from $p$.

\smallskip
Kane, Karmalkar, and Price [FOCS'17] solved this problem for $n=1$. We generalize their results to the $n$-variate setting, showing an algorithm that achieves a sample complexity of $O_n(d^n\log d)$, where the hidden constant depends on $n$, if $\chi$ is the $n$-dimensional Chebyshev distribution. The sample complexity is $O_n(d^{2n}\log d)$, if the samples are drawn from the uniform distribution instead. The approximation error is guaranteed to be at most $O(\sigma)$, and
the run-time depends on $\log(1/\sigma)$. In the setting where each $\mathbf{x}_i$ and $y_i$ are known up to $N$ bits of precision, the run-time's dependence on $N$ is linear.  We also show that our sample complexities are optimal in terms of $d^n$. 
Furthermore, we show that it is possible to have the run-time be independent of $1/\sigma$, at the cost of a higher sample complexity.

\end{abstract}

\clearpage
\section{Introduction}\label{sec:intro}
``Curve fitting'' or {\em polynomial regression} is one of the oldest and most fundamental learning problems: find a polynomial that approximately satisfies the input-output relationship displayed by a collection of data points. Polynomial regression has a vast range of applications, from the physical sciences to statistics and machine learning; see, e.g., the books \cite{wolberg2006data, zielesny2011curve} for discussions and references. 

The focus of this work is on {\em multivariate} polynomial regression, 
which is the task of learning the class of bounded degree polynomials from random noisy samples. Multivariate polynomial regression is a natural requirement in many applications. For example, in computer vision, boundaries of objects are often modeled as low-degree bivariate polynomials, so it is well-motivated to fit curves to estimates of object boundaries.
Our goal is to design {\em robust} regression algorithms, which can withstand having a constant fraction of the input data be arbitrary outliers in the same setting as in \cite{arora-khot,guruswami-zuckerman, kkp}.

We next formally state the problem of robust multivariate regression. Let us denote by $\Pd$ the class of all $n$-variate \emph{individual} degree-$d$ polynomials, which are the polynomials with degree at most $d$ in each variable\footnote{This is in contrast to the usual convention of the \emph{total} degree being at most $d$. Note that the class of polynomials of \emph{total} degree at most $d$ is strictly included in $\Pd$. Our results (for $\Pd$) can be translated for the class of total degree-$d$ polynomials; See discussion in \autoref{rem:indiv-to-total}.}.  
\begin{restatable}{problem_defn}{problemdefn}\label{defn:problem_defn}

    Let $\sigma>0$ be a noise bound, $C> 1$ be an approximation factor, $\rho \in [0,1]$ be the outlier probability, $\chi$ be a probability distribution over  $[-1,1]^n$.  Fix an unknown $p \in \Pd$ and let $S=\{(\mathbf{x}_1, y_1),\dots,(\mathbf{x}_M, y_M)\}$ be a set random samples where for each $i\in[M]$ independently, $\mathbf{x}_i$ sampled from $\chi$, and $y_i\in\R$ is an \emph{inlier} satisfying $|{y}_i-p(\mathbf{x}_i)|\leq\sigma$ with probability $1-\rho$ and otherwise, it may be an \emph{outlier}, i.e., the noise may be arbitrarily large.
 The goal is to design an efficient algorithm that, given the set $S$ of random samples as input, recovers a polynomial $\hat{p}\in \Pd$ satisfying
\[  \max_{\mathbf{x}\in [-1,1]^n} |p(\mathbf{x})-\hat{p}(\bfx)|\leq C\sigma,\] 
with probability at least $1-\delta$.
   \end{restatable}

Note that though the locations of the outliers are random, i.e., each sample is an {outlier} with probability $\rho$ independently, the noise for both the {inliers} and the {outliers} is still allowed to be chosen in an adversarial way (meaning an adversary can choose the values of all the $y_i$'s after seeing the entire sample set $\{\bfx_i\}$). 

In the univariate setting, recovery for non-trivial values of $\rho$ was first shown by Guruswami and Zuckerman \cite{guruswami-zuckerman} for $\rho < 1/\log d$. Previously, Arora and Khot \cite{arora-khot} had shown $\rho<1/2$ was information-theoretically necessary for unique recovery. Subsequently, Kane, Karmalkar, and Price \cite{kkp} designed a simple and optimal (up to constants) algorithm that runs in polynomial time for any $\rho < 1/2$, uses  $\Theta(d\log d)$ samples from the Chebyshev measure on $[-1,1]$, or $\Theta(d^2)$ uniform samples 
and outputs a degree-$d$ univariate polynomial $\hat{p}$ satisfying $\max_{x \in [-1,1]} |p(x)-\hat{p}(x)| \leq C\sigma$. They show how to achieve $C$ as close to $2$ as desired. In addition, they show that to solve the problem for $d=2$ with probability at least $2/3$, $C>1.09$ is needed, while for general $d$, to succeed with constant probability, one needs $C>1+\Omega(1/d^3)$.

\subsection{Main results}

 We wish to minimize the sample complexity $M$. 
  Our algorithmic results are mainly when the measure $\chi$  is either the uniform distribution or the $n$-dimensional Chebyshev measure, i.e., the $n$-fold product of the Chebyshev measure on $[-1,1]$, with the probability density function $\propto1/\sqrt{1-x^2}$ for $x\in[-1,1]$.

Note that when $n$ is large, for some distributions, solving the multivariate polynomial regression problem requires $\exp(n)$ many samples, even for polynomials of \emph{total} degree $d=1$ (for completeness, we provide a proof in \autoref{subsec:lower-bound-linear}). So, for sample-efficient algorithms, it is prudent to assume $n>1$ being a constant. In this setup, then, the total degree of an individual degree-$d$ polynomial is at most $nd$, i.e., $O(d)$, and hence the multivariate polynomial regression problem becomes oblivious to the degree being total or individual. Thus, we focus on learning the class $\Pd$ of \emph{individual} degree-$d$ polynomials in a constant number of variables.

We now state our main results. Denote the cube $[-1,1]^n$ by $\cube_n$; we will omit the subscript when the dimension is clear from the context. Let $\|\cdot\|_{\cube, \infty}$ denote the $\ell_\infty$ norm over $\cube_n$.

\begin{restatable}{thm}{linftyregressionupperboundintro} \label{thm:l_infty_regression_upper_bound_intro}
     Let $\sigma\geq 0,\adderr>0$, and $\rho$ be any constant $<1/2$. 
   There is an algorithm that almost solves the \problemDefn with a constant approximation factor,  up to an additive error of $\adderr$. The output of the algorithm is $\hat{p}\in \Pd$ that satisfies
   \[|p(\bfx)-\hat{p}(\bfx)|\leq O(\sigma)+\eta,\qquad \text{ for all } \bfx\in\cube_n,\]
with probability at least $2/3$. It uses   
 $M=O_n(d^n\log d)$ samples drawn from the multidimensional Chebyshev distribution, or $M=\widetilde{O}_{n}(d^{2n})$ if the samples are drawn from the  uniform measure.
 Its run-time is at most $\poly(\log\norm{p}_{\cube,\infty},M,\log(1/\eta))$.
\end{restatable}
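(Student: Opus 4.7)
The plan is to generalize the univariate KKP algorithm to $n$ variables through two steps: (i) establish a \emph{multivariate polynomial sampling theorem} on $\cube_n$, stating that $M=O_n(d^n\log d)$ Chebyshev samples (resp.\ $\widetilde{O}_n(d^{2n})$ uniform samples) suffice so that, with probability $\geq 2/3$, every $q\in\Pd$ which is $\sigma$-small on more than a $(1-\rho')$-fraction of the samples has $\norm{q}_{\cube,\infty}=O(\sigma)$; (ii) solve a convex consistency-finding program producing $\hat{p}\in\Pd$ that is $\sigma$-consistent with a $(1-\rho')$-fraction of samples, for some fixed $\rho<\rho'<1/2$. Granted (i) and (ii), setting $q=\hat{p}-p$ immediately yields the claimed $\ell_\infty$ bound: $p$ itself is $\sigma$-consistent with the $(1-\rho)$-fraction of inliers, so a feasible $\hat{p}$ exists, and $q$ is $2\sigma$-small on at least a $(1-2\rho')$-fraction of samples.

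For (i), the strategy is a tiling plus Bernstein--Markov argument. Tile $\cube_n$ by $\Theta((d+1)^n)$ tensor-product Bernstein--Markov boxes of side-length $\Theta\bigl(\sqrt{1-x_j^2}/d+1/d^2\bigr)$ in coordinate $j$; each such box has Chebyshev mass $\Theta((d+1)^{-n})$ and uniform mass $\Theta((d+1)^{-2n})$. The multivariate Bernstein--Markov inequality implies that whenever $|q(\mathbf{z}^{\ast})|>K\sigma$ at some point $\mathbf{z}^{\ast}$ of the cube, $|q|>(K-1)\sigma$ throughout a constant-fraction sub-region of the box containing $\mathbf{z}^{\ast}$. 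With the stated sample count, a Chernoff bound plus a union bound over the boxes---refined to a polynomial-sized $\varepsilon$-net of $\cube_n$ where sharper control is needed---forces a $\rho'$-fraction of samples to witness $|q|>\sigma$, contradicting the hypothesis of (i).

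For (ii), the consistency search is carried out by a convex program: compute tensor-product Chebyshev-node medians $\tilde{y}_\alpha=\mathrm{median}\{y_i:\bfx_i\in B_\alpha\}$, which lie in the inlier range of $p$ on $B_\alpha$ because $\rho<1/2$ inside each box with high probability, and then approximately minimize $\max_\alpha|\hat{p}(\mathbf{z}_\alpha)-\tilde{y}_\alpha|$ over $\hat{p}\in\Pd$ by interior-point methods, achieving accuracy $\eta$ in $\poly(M,\log\norm{p}_{\cube,\infty},\log(1/\eta))$ time. The main obstacle lies in (i): the univariate KKP proof exploits sharp Chebyshev-polynomial extremal identities that do not tensorize cleanly, so the $n$-variate proof must instead combine the multivariate Bernstein--Markov inequality with uniform lower bounds on the Chebyshev mass of Bernstein--Markov boxes near the boundary of $\cube_n$, where the measure concentrates anisotropically; balancing these estimates against the union-bound complexity is exactly what controls whether the final sample count is the advertised $O_n(d^n\log d)$ or suffers extra polylogarithmic overhead.
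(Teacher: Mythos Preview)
Your sampling theorem (i) is false as stated, and this is the central gap. Take $q(\bfx)=\prod_{i=1}^n T_d(x_i+\alpha/d^2)\big/T_d(1+\alpha/d^2)^{n-1}$, the polynomial from the paper's own lower bound: it satisfies $|q|\le 1$ on $\cube_n\setminus[1-\alpha/d^2,1]^n$ while $\|q\|_{\cube,\infty}\ge T_d(1+\alpha/d^2)$, which can be made as large as desired by increasing $\alpha$. The corner region $[1-\alpha/d^2,1]^n$ has Chebyshev measure $O_\alpha(d^{-n})$, so with $M=O_n(d^n\log d)$ samples it captures only an $O(d^{-n})$ fraction of them. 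Thus for any fixed $\rho'\in(\rho,1/2)$ and any target constant in your $O(\sigma)$, this $q$ is $\sigma$-small on far more than a $(1-\rho')$-fraction of the samples yet has arbitrarily large $\ell_\infty$ norm. Your second paragraph breaks precisely at the last sentence: Bernstein--Markov only shows $|q|$ is large on a constant-fraction sub-region of \emph{one} box, which accounts for a $\Theta(d^{-n})$ fraction of all samples, not a $\rho'$-fraction; no $\varepsilon$-net over $\cube_n$ repairs this, since the issue is the measure of the region, not discretization.

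The paper avoids this obstruction by replacing the global-fraction hypothesis with a \emph{per-cell} one and, crucially, by \emph{iterating}. Once the sample set is $\alpha$-good (fewer than $\alpha<1/2$ outliers in every cell of an $(m,n)$-Chebyshev partition with $m=\Theta(dn/\varepsilon)$), each cell-median $\tilde y_{\bmj}$ lies within $\sigma$ of $p$ at some point of $\cube_{\bmj}$. A piecewise-constant approximation theorem (proved by a hybrid argument that walks along the axes and applies the univariate KKP lemma coordinate by coordinate) then bounds a \emph{single} median-fit step by $\|p-\hat p\|_{\cube,\infty}\le(2+\varepsilon)\sigma+\varepsilon\|p\|_{\cube,\infty}$. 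The residual $\varepsilon\|p\|_{\cube,\infty}$ term is unavoidable in one shot---it reflects exactly the uncertainty about where in each cell the median value is attained---so the paper repeats the refine step on the residual $O(\log_{1/\varepsilon}(\|p\|_{\cube,\infty}/\eta))$ times to drive it down to $\eta$. That iteration is the source of both the $\log\|p\|_{\cube,\infty}$ and $\log(1/\eta)$ factors in the run-time; your single convex program in (ii), which is essentially one call to the paper's \textsc{Refine} subroutine, would leave an $\varepsilon\|p\|_{\cube,\infty}$ additive error rather than $\eta$.
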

The notations $\widetilde{O}$, $\widetilde{\Theta}$  hide factors proportional to $\log d$ above; the dependence on $\eta$ or $\sigma$ is kept explicit.

In case $\sigma$ is known to be at least $ 2^{-N}$, one may choose $\eta=2^{-N}$  to guarantee $\|\hat{p}-p\|_{\cube_n,\infty}\le O(\sigma)$ and run-time proportional to $\poly(N)$. Generalizing this observation, we consider the {\em $N$-bit precision setting}, where both the sample locations $\bfx_i$ and the labels $y_i$ are truncated to $N$ bits of precision; this is consistent with a computational model where real numbers can only be specified up to $N$ bits of precision.  We  show that in the $N$-bit precision setting, a variant of our algorithm achieves a constant approximation factor without any additional additive error. 
 \begin{restatable}{thm}{finitepresicionintro} \label{thm:finite_presicion_intro}
     Let $N$ be the number of bits of precision,  $\sigma\ge 2^{-N}$, and 
$\rho$ be any constant $<1/2$. 
   There exists an algorithm for the \problemDefn, wherein each $\bfx_i$  is now drawn from a continuous distribution $\chi$ and then rounded to $N$ bits of precision, and each $y_i$ is similarly rounded. The  output of the algorithm is $\hat{p}\in \Pd$, that satisfies  
   \[|p(\bfx)-\hat{p}(\bfx)|\leq O(\sigma), \qquad \text{ for all } \bfx\in\cube_n,\]
    with probability at least  $2/3$.
 It uses   
 $M=O_n(d^n \log d)$ samples drawn from the  multidimensional Chebyshev distribution, or $M=\widetilde{O}_{n}(d^{2n})$ if the samples are drawn from  the uniform measure. 
 Its run-time is at most $\poly(\log\norm{p}_{\cube,\infty},M,N)$.
\end{restatable}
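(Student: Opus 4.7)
The strategy is to reduce \autoref{thm:finite_presicion_intro} to \autoref{thm:l_infty_regression_upper_bound_intro} by choosing the additive-error parameter $\eta$ so that the hypothesis $\sigma \ge 2^{-N}$ absorbs it, and by bounding the effect of the $N$-bit rounding on the inlier noise.

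First, I would invoke \autoref{thm:l_infty_regression_upper_bound_intro} with $\eta = 2^{-N}$. Since $\sigma \ge 2^{-N}$ by hypothesis, its conclusion directly gives
\[
\|p - \hat{p}\|_{\cube,\infty} \;\le\; O(\sigma) + \eta \;\le\; O(\sigma),
\]
as desired, and because $\log(1/\eta) = N$, its runtime $\poly(\log\|p\|_{\cube,\infty}, M, \log(1/\eta))$ becomes $\poly(\log\|p\|_{\cube,\infty}, M, N)$. The sample complexities and the $2/3$ success probability carry over verbatim.

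Second, one must show that running the algorithm on the rounded samples $(\bfx_i', y_i')$ rather than the exact $(\bfx_i, y_i)$ only inflates the effective inlier noise by $O(\sigma)$. For each inlier $i$, the triangle inequality combined with the multivariate Markov brothers' inequality ($\|\partial_j p\|_{\cube,\infty} \le d^2 \|p\|_{\cube,\infty}$ for each coordinate $j$) gives
\[
|y_i' - p(\bfx_i')| \;\le\; |y_i' - y_i| + |y_i - p(\bfx_i)| + |p(\bfx_i) - p(\bfx_i')| \;\le\; \sigma + 2^{-N}\bigl(1 + n d^2 \|p\|_{\cube,\infty}\bigr).
\]
A coarse upper bound $V \ge \|p\|_{\cube,\infty}$ can be extracted from the rounded labels by taking a robust quantile of $|y_i'|$ that discounts a $\rho$-fraction of outliers; since the regression problem is meaningful at precision $N$ only when $n d^2 V \cdot 2^{-N} = O(\sigma)$, in that regime the rounded data obey an inlier bound of $O(\sigma)$ and the first step applies.

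The main obstacle is exactly this Markov-type amplification: a fixed coordinate error of $2^{-N}$ in $\bfx_i$ can move $p(\bfx_i)$ by up to $n d^2 \|p\|_{\cube,\infty} \cdot 2^{-N}$. The hypothesis $\sigma \ge 2^{-N}$ provides precisely the slack needed to absorb this into $O(\sigma)$ for instances that are well-posed at precision $N$, and the extra $\log\|p\|_{\cube,\infty}$ bits incurred in the internal bookkeeping remain within the runtime budget. All other ingredients---the concentration estimates, the sample counts $O_n(d^n \log d)$ (Chebyshev) and $\widetilde{O}_n(d^{2n})$ (uniform), and the output guarantee $\hat{p}\in\Pd$---are inherited from \autoref{thm:l_infty_regression_upper_bound_intro} without modification.
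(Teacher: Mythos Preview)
Your first step---setting $\eta = 2^{-N}$ so that the additive error is absorbed by $\sigma$ and the runtime becomes $\poly(\log\|p\|_{\cube,\infty}, M, N)$---is correct and matches the paper.

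Your second step has a genuine gap. The Markov-type bound
\[
|p(\bfx_i) - p(\bfx_i')| \;\le\; n d^{2}\,\|p\|_{\cube,\infty}\cdot 2^{-N}
\]
is fine, but concluding that this is $O(\sigma)$ requires $n d^{2}\|p\|_{\cube,\infty}\le O(\sigma\cdot 2^{N})$. The theorem only assumes $\sigma\ge 2^{-N}$; it places \emph{no} restriction on $\|p\|_{\cube,\infty}$, so this inequality can fail badly (take $\|p\|_{\cube,\infty}=2^{2N}$). Your phrase ``the regression problem is meaningful at precision $N$ only when $n d^{2} V\cdot 2^{-N}=O(\sigma)$'' is exactly the insertion of an extra hypothesis that the theorem does not have, so as written your argument does not prove the stated result.

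The paper sidesteps this entirely by a structural observation rather than a Lipschitz bound: the median-recovery algorithm needs from each sample only (i) the label $y_i$ and (ii) which Chebyshev cell $\bfx_i$ falls into---never the exact location $\bfx_i$. Rounding $y_i$ by $2^{-N}\le\sigma$ at most doubles the inlier noise, handling (i). For (ii), rounding $\bfx_i$ by $2^{-N}$ can only change the cell if $\bfx_i$ is within $2^{-N}$ of a cell boundary, so the algorithm simply discards any sample whose rounded location lies in that boundary layer. Concretely, the paper overlays a $(3m,n)$-Chebyshev refinement on the $(m,n)$ partition and retains only those samples landing in the innermost sub-cell of each original cell; the retained samples are then guaranteed to have correct cell membership. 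This filtering costs only a factor depending on $n$ in the sample complexity (the interior sub-cell has Chebyshev mass $1/3^{n}$ of the full cell), and requires no control on $\|p\|_{\cube,\infty}$ whatsoever.
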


To avoid a run-time dependent on $\norm{p}_{\cube,\infty}$ and $1/\eta$, in case they are unknown or too large, we also obtain a variant of the algorithm that achieves an explicit constant multiplicative approximation factor, as close to $2$ as desired and independent of $\norm{p}_{\cube, \infty}$ and $1/\eta$, at the cost of a higher sample complexity. 
\begin{restatable}{thm}{linftywithloneupperboundintro}\label{cor:final-cor-with-l-1-intro}
 Let $\eps>0$, $\sigma\geq 0$, and a constant $\rho<1/2$.
   There exists an algorithm that solves the \problemDefn. The  output of the algorithm is $\hat{p}\in \Pd$, that satisfies    
   \[|p(\bfx)-\hat{p}(\bfx)|\leq (2+\eps)\sigma\qquad \text{ for all } \bfx\in\cube_n,\]
    with probability at least  $2/3$.
It uses $M=\poly(d^{n^2},1/\eps^n)$ samples drawn from either the multidimensional Chebyshev distribution or the uniform distribution.
 Its run-time is $\poly(M)$.
\end{restatable}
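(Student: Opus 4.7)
The plan is to prove \autoref{cor:final-cor-with-l-1-intro} by designing a linear-programming-based algorithm whose runtime is $\poly(M)$ and depends neither on $\|p\|_{\cube,\infty}$ nor on $\sigma$, at the price of a polynomially larger sample budget than in \autoref{thm:l_infty_regression_upper_bound_intro}. The strategy is to solve an outlier-robust $\ell_1$-style LP over the $(d+1)^n$-dimensional coefficient space of $\Pd$ and then convert its empirical guarantee into a uniform $\ell_\infty$ guarantee through a multivariate Remez-type inequality. Since the LP only manipulates coefficients and sample values, it runs in $\poly(M)$ time by standard interior-point methods, introducing no $\log(1/\sigma)$ or $\log\|p\|_{\cube,\infty}$ factors.

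Concretely, I would draw $M=\poly(d^{n^2},\eps^{-n})$ samples and, after a simple preprocessing step that robustly prunes samples whose $y_i$ sits far from the median of $\{y_j\}_j$ (so that no single adversarial outlier can inflate any objective), solve the LP: find $\hat p\in\Pd$ and nonnegative slacks $s_i$ minimizing $\sum_i s_i$ subject to $|y_i-\hat p(\bfx_i)|\le(1+\eps/8)\sigma+s_i$. The choice $\hat p=p$ is feasible with $s_i=0$ on every surviving inlier, so the LP optimum is at most $O(M\sigma)$. By a uniform-convergence bound over $\Pd$, whose pseudo-dimension is $O((d+1)^n)$, the empirical fraction $\frac{1}{M}|\{i:|y_i-\hat p(\bfx_i)|>(1+\eps/4)\sigma\}|$ lies within $\eps/C$ of its $\chi$-probability, uniformly over $\hat p\in\Pd$. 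Combining with $|y_i-p(\bfx_i)|\le\sigma$ on inliers and the triangle inequality yields $\chi\bigl(\{\bfx:|\hat p(\bfx)-p(\bfx)|>2\sigma\}\bigr)\le \kappa(\rho,\eps,d,n)$, where $\kappa$ shrinks as $M$ grows.

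The final step invokes a Brudnyi--Ganzburg-type multivariate Remez inequality: since $\hat p-p$ has individual degree at most $d$ on $\cube_n$, if the exceptional set $E=\{\bfx:|(\hat p-p)(\bfx)|>2\sigma\}$ has $\chi$-measure below an explicit threshold $1/\poly(d^n)$, then $\|\hat p-p\|_{\cube,\infty}\le(1+\eps/2)\sup_{\bfx\notin E}|(\hat p-p)(\bfx)|\le(2+\eps)\sigma$. The main obstacle will be pushing the Remez constant down to $(1+\eps/2)$ for every $\rho<1/2$ simultaneously: the constant worsens as $\chi(E)$ approaches $1$ and as the total degree $nd$ grows, and navigating this trade-off against the uniform-convergence rate over $\Pd$ is precisely what forces the $d^{n^2}$ and $\eps^{-n}$ factors in $M$. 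The two LP slack thresholds $(1+\eps/8)\sigma$ and $(1+\eps/4)\sigma$ are tuned so that LP-feasibility, uniform-convergence, and Remez errors compose exactly to the claimed $(2+\eps)\sigma$ bound.
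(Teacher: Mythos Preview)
Your proposal takes a genuinely different route from the paper, but the central step is broken.

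The paper's proof proceeds in two stages. First, it solves a \emph{cell-weighted} $\ell_1$ regression over the $(m,n)$-Chebyshev partition with $m=(cd)^{2n+1}/\eps$ (\autoref{defn:l1-minimzer}); exploiting that each cell has fewer than half outliers, it shows $\|p-\hat p_{\ell_1}\|_{\cube,1}=O_\rho(2^n\sigma)$ (\autoref{lem:mult-l1-error-bound}), and then invokes \autoref{thm:improved-sandwich} to convert this into $\|p-\hat p_{\ell_1}\|_{\cube,\infty}=O_\rho((8d^2)^n\sigma)$. Second, it runs $O(n\log_{1/\eps}d+\log_\eps(1-2\rho))$ rounds of median-based refinement (\autoref{alg:refinement}) starting from $\hat p_{\ell_1}$; each round contracts the $\ell_\infty$ error by a factor $\eps$ via \autoref{thm:optimal-l-infty-bound}, so the error drops to $(2+\eps)\sigma$. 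Neither stage uses $\sigma$ as input, and the run-time is $\poly(M)$.

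Your slack-LP plus Remez plan breaks at the claim that ``the LP optimum is at most $O(M\sigma)$.'' Taking $\hat p=p$, the slacks on surviving outliers are $s_i=(|y_i-p(\bfx_i)|-(1+\eps/8)\sigma)_+$, and these need not be $O(\sigma)$: your median-based pruning bounds $|y_i|$ relative to the global label median, not $|y_i-p(\bfx_i)|$, and since outlier \emph{locations} are random, $p(\bfx_i)$ ranges over all of $[-\|p\|_{\cube,\infty},\|p\|_{\cube,\infty}]$. Thus the LP value at $\hat p=p$ is only $\le \rho M\cdot O(\|p\|_{\cube,\infty})$, and you cannot infer that the minimizer has few samples with residual exceeding $(1+\eps/4)\sigma$. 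A correct comparison of inlier versus outlier contributions (which is exactly what \autoref{lem:mult-l1-error-bound} does, using $\rho<1/2$ to cancel) yields only $\|\hat p-p\|_{\cube,1}=O_\rho(\sigma)$; turning this into the pointwise $(2+\eps)\sigma$ bound by a single Remez step would require the exceptional set to have measure $O((\eps/d^2)^n)$, which an $\ell_1$ bound alone does not deliver. The paper's iterated median refinement is precisely the mechanism that closes this gap.
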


We complement the above results by showing lower bounds on the sample complexity of robust multivariate polynomial regression. 

\begin{restatable}{thm}{lowerboundintro}\label{thm:lower-bound-intro}
For any approximation factor $C>1$, there exists  $c=c(C)>0$ such that any algorithm, that solves the \problemDefn, requires at least $(cd)^{2n}$ samples drawn from the uniform measure to succeed with probability more than  $2/3$.
This holds for any outlier probability $\rho$.  
\end{restatable}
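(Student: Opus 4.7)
The plan is a standard two-point / indistinguishability argument. First I would reduce the lower bound to a uniform-measure anti-concentration statement: it suffices to exhibit a polynomial $r \in \Pd$ with $\|r\|_{\cube_n,\infty} = 1$ and
\[
\Pr_{\bfx \sim \mathrm{Unif}(\cube_n)}\!\bigl[\,|r(\bfx)| > 1/C\,\bigr] \;\leq\; \alpha .
\]
Then I would set $p_1 \equiv 0$ and $p_2 := (2C+\varepsilon)\sigma \cdot r$ for an arbitrarily small $\varepsilon>0$, so that $\|p_1-p_2\|_{\cube_n,\infty} > 2C\sigma$. At every $\bfx$ with $|r(\bfx)| \leq 1/C$, the label-intervals $[p_1(\bfx)-\sigma,p_1(\bfx)+\sigma]$ and $[p_2(\bfx)-\sigma,p_2(\bfx)+\sigma]$ overlap, so the adversary can pick a single value $y$ compatible with both polynomials under the inlier constraint (this works already with $\rho=0$, hence \emph{a fortiori} for any outlier probability). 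Conditional on none of the $M$ samples landing in $B:=\{|r|>1/C\}$---an event of probability at least $(1-\alpha)^M \geq 1-\alpha M$---the joint distribution of the samples is identical under the two ground truths, while $\|p_1-p_2\|_{\cube_n,\infty}>2C\sigma$ forces every output $\hat p$ to be farther than $C\sigma$ from at least one of them. Letting the adversary choose the worse polynomial yields failure probability at least $\tfrac{1}{2}(1-\alpha M)$, contradicting success probability $>2/3$ as soon as $M \leq c_0/\alpha$ for a small absolute constant $c_0$.

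Next I would construct $r$ as an $n$-fold tensor product $r(x_1,\dots,x_n) := \prod_{i=1}^n \pi_d(x_i)$, where $\pi_d$ is a univariate polynomial of degree $d$ satisfying $\|\pi_d\|_{[-1,1],\infty}=\pi_d(1)=1$ and $\mathbb{E}_{x\sim\mathrm{Unif}[-1,1]}[\pi_d(x)^2] = O(1/d^2)$. The sup-norm of $r$ on $\cube_n$ equals $1$ (attained at the corner $(1,\dots,1)$), and since $|\pi_d(x_i)|\leq 1$ coordinatewise, the event $\prod_i|\pi_d(x_i)| > 1/C$ forces $|\pi_d(x_i)|>1/C$ in every coordinate. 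By independence across coordinates and Markov's inequality applied to $\pi_d^2$,
\[
\Pr\!\bigl[\,|r(\bfx)|>1/C\,\bigr]
\;\leq\; \bigl(\Pr[|\pi_d(x)|>1/C]\bigr)^n
\;\leq\; \bigl(C^2\,\mathbb{E}[\pi_d^2]\bigr)^n
\;=\; O\!\bigl((C/d)^{2n}\bigr).
\]
Plugging $\alpha = O((C/d)^{2n})$ into the reduction gives $M \geq (c(C)\,d)^{2n}$ with $c(C) = \Omega(1/C)$ depending only on $C$, as claimed.

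The one genuinely quantitative step, and the main obstacle, is producing the univariate $\pi_d$ with the stated $\ell_\infty$ / $L^2$ trade-off. I would use the Christoffel--Darboux reproducing kernel of the uniform measure at the endpoint $x=1$. Let $\{\tilde P_k\}_{k=0}^{d}$ be the Legendre polynomials orthonormalized in $L^2(\mathrm{Unif}[-1,1])$, and set
\[
\pi_d(x) \;:=\; \frac{K_d(x,1)}{K_d(1,1)},
\qquad
K_d(x,y) \;:=\; \sum_{k=0}^{d} \tilde P_k(x)\,\tilde P_k(y).
\]
Then $\deg \pi_d = d$ and $\pi_d(1)=1$ by construction. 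The reproducing property (orthonormality) gives $\mathbb{E}[\pi_d(x)^2] = 1/K_d(1,1)$, and the classical identity $\tilde P_k(1)^2 = 2k+1$ yields $K_d(1,1) = (d+1)^2 = \Theta(d^2)$, delivering the $L^2$ bound. Finally the Cauchy--Schwarz bound $|K_d(x,1)| \leq \sqrt{K_d(x,x)\,K_d(1,1)}$, combined with the elementary Legendre bound $|P_k(x)|\leq 1 = P_k(1)$ on $[-1,1]$ (which forces $K_d(x,x) \leq K_d(1,1)$), yields $|\pi_d(x)|\leq 1$ on $[-1,1]$. In essence this is the sharp $\Theta(1/d^2)$ behaviour of the Christoffel function of the uniform measure at the endpoint; tensorizing across coordinates then promotes the univariate $d^2$ into the multivariate $d^{2n}$ that matches (up to constants and log factors) the algorithmic upper bound in \autoref{thm:l_infty_regression_upper_bound_intro}.
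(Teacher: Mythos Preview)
Your argument is correct in structure and yields the stated bound, but it takes a different route from the paper. Both proofs are two-point indistinguishability arguments built from an $n$-fold tensor product of a univariate ``spike near $x=1$'' polynomial; the difference is in the choice of that univariate building block and in how the distinguishable region is controlled. The paper uses the shifted Chebyshev polynomial $T_d(x+\alpha/d^2)$ (with $\alpha=4\sqrt{C}$) and normalizes so that the distinguishable set is exactly the corner cube $[1-\alpha/d^2,1]^n$, whose uniform mass $(\alpha/(2d^2))^n$ is read off directly; this gives $c(C)=\Theta(C^{-1/4})$. You instead take $\pi_d$ to be the normalized Legendre Christoffel--Darboux kernel at $1$, so that $\|\pi_d\|_\infty=1$ and $\mathbb{E}[\pi_d^2]=1/(d+1)^2$ by orthonormality, and then control the distinguishable set by Markov's inequality coordinatewise (the inclusion $\{\prod_i|\pi_d(x_i)|>1/C\}\subseteq\bigcap_i\{|\pi_d(x_i)|>1/C\}$ is valid because each factor is at most $1$). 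This is a clean, self-contained approximation-theoretic argument that avoids any estimate on $T_d$ outside $[-1,1]$; the price is a slightly weaker constant $c(C)=\Theta(1/C)$, which is immaterial for the theorem as stated.

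One small parameter slip to fix: with $p_2=(2C+\varepsilon)\sigma\,r$, on the set $\{|r|\le 1/C\}$ you only get $|p_2(\bfx)|\le(2+\varepsilon/C)\sigma$, so the two inlier intervals need \emph{not} overlap. The clean fix is to run the argument with threshold $t:=1/(C+\varepsilon)$ in place of $1/C$ and set $p_2:=2(C+\varepsilon)\sigma\,r$; then on $\{|r|\le t\}$ one has $|p_2|\le 2\sigma$ so the intervals overlap, while $\|p_1-p_2\|_{\cube_n,\infty}=2(C+\varepsilon)\sigma>2C\sigma$ as required. The Markov bound then reads $\Pr[|\pi_d|>t]\le (C+\varepsilon)^2/(d+1)^2$, which changes nothing asymptotically.
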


The lower bound matches the upper bound of \autoref{thm:finite_presicion_intro}  up to lower order terms (in the case of uniform sampling) for constant $n$, and $C$, and holds even for $\rho=0$, where there are no outliers. The following result shows that our result in \autoref{thm:finite_presicion_intro} for the multidimensional Chebyshev measure matches the optimal sample complexity over arbitrary distributions\footnote{Similarly, the lower bounds match the respective sample complexities of \autoref{thm:l_infty_regression_upper_bound_intro}, when the additive error $\eta$ approaches $0$, since the algorithm's run-time grows as $\eta\to 0$, but the sample complexities remain unchanged.}.

\begin{restatable}{thm}{distfreelowerboundintro}\label{thm:dist-free-lower-bound-intro}
For any approximation factor $C>1$, and any  outlier probability $\rho>0$, there exists $c=c(C,\rho)>0$ such that any algorithm, that solves the \problemDefn, requires at least ${(cd)^{n}\log d}$ samples drawn from any measure over $[-1,1]^n$ to succeed with probability more than $2/3$.
\end{restatable}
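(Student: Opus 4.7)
The plan is to prove the lower bound via Yao's minimax principle with a Bayesian hard instance, lifting the univariate $\Omega(d\log d)$ lower bound of Kane--Karmalkar--Price to the $n$-variate setting by tensoring around the tensor Chebyshev grid. The $(cd)^n$ factor will reflect that there are $\Theta(d^n)$ ``critical'' Chebyshev regions in $\cube_n$ whose local behavior the algorithm must recover; the $\log d$ factor will come from a coupon-collector argument, since each region must be hit by at least one inlier sample that survives the adversarial outlier budget, and collecting one inlier per region from \emph{any} measure $\chi$ requires $\Theta_n(d^n\log d)$ samples.

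Set $\theta_k=\cos(k\pi/d)$ and, for $\mathbf{k}\in[d+1]^n$, let $\Theta_{\mathbf{k}}=(\theta_{k_1},\ldots,\theta_{k_n})$. Take disjoint $\ell_\infty$-neighborhoods $R_{\mathbf{k}}$ of radius $\Theta(1/d)$ around each $\Theta_{\mathbf{k}}$. Using tensor products of one-dimensional Chebyshev ``needle'' polynomials---individual-degree-$d$ polynomials that are $\Omega(1)$ at a single Chebyshev node and rapidly decaying elsewhere---I would build localized bump polynomials $B_{\mathbf{k}}\in\Pd$ with $B_{\mathbf{k}}(\Theta_{\mathbf{k}})\ge 1$, $\|B_{\mathbf{k}}\|_{\cube,\infty}=O(1)$, and $|B_{\mathbf{k}}(\mathbf{x})|$ small off $R_{\mathbf{k}}$. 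The hard family is then $\{p_S\}_{S\subseteq[d+1]^n}$ defined by $p_S(\mathbf{x})=3C\sigma\sum_{\mathbf{k}\in S}B_{\mathbf{k}}(\mathbf{x})$; for any $S\ne S'$ and any $\mathbf{k}\in S\triangle S'$, one has $|p_S(\Theta_{\mathbf{k}})-p_{S'}(\Theta_{\mathbf{k}})|>2C\sigma$, so any $\hat p$ with $\|\hat p-p_S\|_{\cube,\infty}\le C\sigma$ uniquely identifies $S$.

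Put the uniform prior on $S\in\{0,1\}^{(d+1)^n}$, and consider the following adversarial labeling. For $\mathbf{x}_i\notin\bigcup_{\mathbf{k}}R_{\mathbf{k}}$, the small tails of the $B_{\mathbf{k}}$'s let the adversary pick $y_i$ within $\sigma$ of $p_S(\mathbf{x}_i)$ for every $S$ simultaneously, revealing no information; for $\mathbf{x}_i\in R_{\mathbf{k}}$, flip an independent $\rho$-biased coin---on heads, declare the sample an outlier and fix $y_i$ to a value independent of the bit $S_{\mathbf{k}}$, and on tails, set $y_i=p_S(\mathbf{x}_i)$, which reveals $S_{\mathbf{k}}$. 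Under this construction, the bit $S_{\mathbf{k}}$ is revealed only by inlier samples in $R_{\mathbf{k}}$, any unrevealed bit is correctly guessed with probability $\tfrac12$, and any wrong guess violates the $\ell_\infty$-guarantee at $\Theta_{\mathbf{k}}$. Because the $R_{\mathbf{k}}$ are disjoint, revealedness is independent across $\mathbf{k}$, and
\[
\Pr[\text{success}]\le\prod_{\mathbf{k}\in[d+1]^n}\Bigl(1-\tfrac12(1-q_{\mathbf{k}})^M\Bigr)\le\exp\Bigl(-\tfrac12\sum_{\mathbf{k}}(1-q_{\mathbf{k}})^M\Bigr),\qquad q_{\mathbf{k}}=(1-\rho)\chi(R_{\mathbf{k}}).
\]

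Since $\sum_{\mathbf{k}}q_{\mathbf{k}}\le 1-\rho$ and $x\mapsto(1-x)^M$ is convex, Jensen's inequality yields $\sum_{\mathbf{k}}(1-q_{\mathbf{k}})^M\ge(d+1)^n\left(1-\frac{1-\rho}{(d+1)^n}\right)^M$, which exceeds $2\log(3/2)$ whenever $M\le c_\rho(d+1)^n\log(d+1)$; in that regime $\Pr[\text{success}]\le 2/3$ for \emph{any} measure $\chi$, giving the lower bound $M\ge(cd)^n\log d$ with $c=c(C,\rho)>0$. The principal technical obstacle is constructing the needle polynomials $B_{\mathbf{k}}$ with tail decay sharp enough that the adversary can pick a single inlier-consistent $y_i$ outside $\bigcup R_{\mathbf{k}}$ for every $S$ simultaneously, since the accumulated tails of all $(d+1)^n$ bumps must stay inside the $\sigma$ budget; this forces each one-dimensional needle to have tail of order $(d+1)^{-n}$ outside a $\Theta(1/d)$-window, which can be handled using standard Chebyshev-polynomial constructions as in the KKP lower bound, possibly restricting to a sub-family of $S$'s if needed to keep accumulated tails controlled without losing the $\log d$ factor.
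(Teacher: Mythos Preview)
Your high-level plan (random subset $S$ of a tensor grid, localized bump polynomials, coupon-collector/Jensen for the $\log d$ factor) is the right shape and matches the paper's strategy. But the step you flag as the ``principal technical obstacle'' is in fact a genuine gap, and the paper's proof avoids it by a different adversarial coupling rather than by sharper needle polynomials.

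You require that for $\mathbf{x}\notin\bigcup_{\mathbf{k}}R_{\mathbf{k}}$ the adversary can pick a single $y$ within $\sigma$ of $p_S(\mathbf{x})$ for \emph{every} $S$ simultaneously. This forces $\sup_{\mathbf{x}}\sum_{\mathbf{k}}|B_{\mathbf{k}}(\mathbf{x})|=O(1)$. With the standard one-dimensional needle $|q_a(x)|\le \min\{1,\,2/(d|x-a|)\}$ (this is exactly the paper's Lemma~5.2/\ref{lem:kkp-dist-free-aux-1}), the tensor bump satisfies $|B_{\mathbf{k}}(\mathbf{x})|\le\prod_i\min\{1,2/(d|x_i-\theta_{k_i}|)\}$, and summing over $\mathbf{k}$ factorizes into a product of one-dimensional sums, each of size $\Theta(\log d)$; so $\sum_{\mathbf{k}}|B_{\mathbf{k}}(\mathbf{x})|=\Theta((\log d)^n)$, not $O(1)$. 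Pushing the per-coordinate tail down to $O((d+1)^{-n})$ outside a $\Theta(1/d)$-window is not available from degree-$d$ polynomials, and thinning $S$ enough to control the tail sum would cost you the $(cd)^n$ count. A secondary issue: around Chebyshev extrema $\theta_k=\cos(k\pi/d)$ the spacing near $\pm 1$ is $\Theta(1/d^2)$, so $\ell_\infty$-balls of radius $\Theta(1/d)$ are not disjoint there; the paper uses an equispaced grid $b_j=-1+2j/m$ with $m=\Theta(d)$ precisely to keep the regions disjoint.

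The paper's fix is not better needles but a weaker indistinguishability requirement. The adversary first looks at the sample locations and the realized outlier coins, finds a node $\mathbf{j}^\ast$ whose entire Voronoi region $L_{\mathbf{j}^\ast}$ consists of outliers (a ``small'' region has $\le 2M/m^n$ samples, and at least $m^n/2$ regions are small, so one is outlier-full with probability $>2/3$ when $M<(cd)^n\log d$), sets $S'=S\triangle\{\mathbf{j}^\ast\}$, and answers every query with $p'(\mathbf{x}_\beta)$ for $p'$ uniform in $\{f_S,f_{S'}\}$. Now only \emph{one} bump's tail matters: for any inlier sample $\mathbf{x}$ we have $\mathbf{j}^\ast\neq\mathbf{k}(\mathbf{x})$, hence $|f_S(\mathbf{x})-f_{S'}(\mathbf{x})|=|p_{\mathbf{b}_{\mathbf{j}^\ast}}(\mathbf{x})|\le 2/(d\cdot(1/m))\le\sigma$. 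Since $\mathbf{j}^\ast$ depends only on the outlier pattern and not on $S$, the distributions of $S$ and $S'$ coincide, so the algorithm sees the same transcript under either hypothesis while $\|f_S-f_{S'}\|_{\cube_n,\infty}\ge 1>2C\sigma$. If you rework your argument so that the adversary commits to hiding a single post-hoc bit rather than all bits at once, the tail-sum obstruction disappears and the rest of your counting (including the Jensen step) goes through.
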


 A comparison of the results across parameter regimes may be given via the following table, wherein $M$ is the sample complexity, and $C_p=\log \|p\|_{\cube,\infty}$:
\begin{table}[h]
    \centering
    \begin{tabular}{|c|c|c|c|c|}
    \hline
         Setting & Approximation & Chebyshev Measure & Uniform Measure & Run-time \\
         \hline
         Exact & $O(\sigma)+\eta$ & $O_n(d^n \log d)$ & $\widetilde{O}_n(d^{2n})$ & $\poly(C_p,M,\log(1/\eta))$ \\
         $N$-bit & $O(\sigma)$ & $\Theta_n(d^n\log d)$ & $\widetilde{\Theta}_n(d^{2n})$ & $\poly(C_p,M,N)$ \\
         Small $\eps$ & $(2+\eps)\sigma$ & $\poly(d^{n^2},1/\eps^n)$ & $\poly(d^{n^2},1/\eps^n)$ & $\poly(M)$\\
         \hline
    \end{tabular}\caption{The upper bounds in the first line follow from \autoref{thm:l_infty_regression_upper_bound_intro}, the second line from \autoref{thm:finite_presicion_intro}, and the third line from \autoref{cor:final-cor-with-l-1-intro}. The lower bounds in the second line for the Uniform Measure follow from \autoref{thm:lower-bound-intro}, and for the Chebyshev Measure from \autoref{thm:dist-free-lower-bound-intro}.}
    \label{tab:comparison}
\end{table}
\begin{remark}\label{rem:kkp-highlight}
\cite{kkp} were able to achieve both optimal sample complexity and efficient run-time independent of $\norm{p}_\infty/\sigma$ with a single algorithm in the univariate setting. In contrast, as we elaborate in the proof overview, our \autoref{cor:final-cor-with-l-1-intro} incurs an additional blowup in the sample complexity; we leave open the problem of realizing the error guarantees of \autoref{cor:final-cor-with-l-1-intro} with the optimal number of samples.
\ignore{
    KKP didn't highlight the result corresponding to \autoref{thm:l_infty_regression_upper_bound_intro} separately. 
    They successfully showed how to achieve both optimal sample complexity and efficient running time independent of $\norm{p}_\infty/\sigma$ in a single algorithm by reusing the same set of samples for both the steps, corresponding to \autoref{cor:final-cor-with-l-1-intro} and \autoref{thm:l_infty_regression_upper_bound_intro}.  In contrast, we could not achieve the optimal sample complexity for the additional step of $\ell_1$ regression (we elaborate on this step in the proof overview) that is required for \autoref{cor:final-cor-with-l-1-intro}.}
\end{remark}

\begin{remark}\label{rem:indiv-to-total}
     Both our upper and lower bounds hold for the class of total degree-$d$ polynomials, when $n$ is bounded, since total degree being at most $d$ implies individual degree being at most $d$, and individual degree being at most $d$ implies total degree being at most $dn$.
\end{remark}

\subsection{Main technical contributions}
Our main technical contributions are twofold, and they may be of interest more broadly.
First, let $\{\cube_{\bm j}\}_{\bm j\in[m]^n}$ be a partition of the cube $\cube_n$ induced by the $m$-Chebyshev extremas on each axis. We call it the $(m,n)$-Chebyshev partition\footnote{See formal \autoref{defn:Chebyshev-cells}, and \autoref{fig:2-d-grid} for an illustration of a $2$-dimensional Chebyshev partition.} of $\cube_n$. 
\begin{figure}[ht]
    \centering
    \includegraphics[scale=0.5]{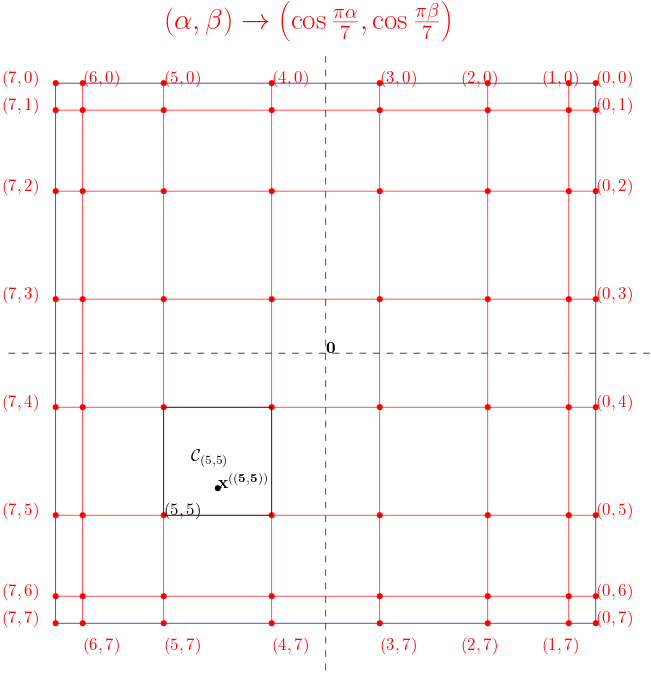}
    \caption{An illustration of a 2-dimensional $(7,2)$-Chebyshev partition (in red) super-imposed on the 2-dimensional solid cube $\cube_2=[-1,1]^2$, with boundary in blue. The cells are indexed by their bottom-left Chebyshev extremas (the red points). \autoref{thm:optimal-l-infty-bound} essentially proves that on any cell, for example, $\cube_{(5,5)}$ (in black), $p$ can be well approximated by its evaluation on one arbitrary point $\bfx^{((5,5))}\in\cube_{(5,5)}$.}
    \label{fig:2-d-grid}
\end{figure}
For $n=1$, \cite{kkp} showed how to approximate a univariate polynomial of degree at most $d$ on $[-1,1]$ by an appropriate piece-wise constant function with respect to the Chebyshev partition. We extend their result to the multivariate case. 

\begin{restatable}{thm}{optimallinftybound}[Approximation by piece-wise constant functions]\label{thm:optimal-l-infty-bound}
    Let $p:\cube_n\to\R$ be a polynomial of degree at most $d$ in each variable, and  $m\geq d$.  
    Let $r:\cube_n\to\R$ be a piece-wise constant function with respect to  the $(m,n)$-Chebyshev  partition, such that for every  $\bm j\in[m]^n$, there exist a point $ \mathbf{x}^{(\bm j)}\in \cube_{\bmj}$, such that $r(\mathbf{x})=p(\mathbf{x}^{(\bm j)})$, for all $\bfx\in \cube_{\bmj}$. Then, there exists a universal constant $C$ such that,
    \[\|p-r\|_{\cube_n,\infty}\leq C\frac{dn}{m} \|p\|_{\cube_n,\infty}.\]
\end{restatable}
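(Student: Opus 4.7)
The plan is to reduce the multivariate statement to the univariate case (which is essentially the result from \cite{kkp}) by a coordinate-wise telescoping argument, paying a factor of $n$ for the telescoping.

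Fix $\bm j \in [m]^n$, a point $\bfx \in \cube_{\bmj}$, and the representative $\mathbf{x}^{(\bmj)} \in \cube_{\bmj}$. Since the $(m,n)$-Chebyshev partition is defined axis-wise, $\cube_{\bmj}$ is the Cartesian product of $n$ one-dimensional cells, one of index $j_k$ on each axis. So both coordinates $x_k$ and $x_k^{(\bmj)}$ lie in the same $j_k$-th univariate Chebyshev cell. The first step is to write the telescoping identity
\[
p(\bfx)-p(\mathbf{x}^{(\bmj)}) \;=\; \sum_{k=1}^n \Bigl[q_k(x_k) - q_k\bigl(x_k^{(\bmj)}\bigr)\Bigr],
\]
where $q_k(t) := p\bigl(x_1,\ldots,x_{k-1},\,t,\,x_{k+1}^{(\bmj)},\ldots,x_n^{(\bmj)}\bigr)$ is obtained by fixing the first $k-1$ coordinates to those of $\bfx$ and the last $n-k$ to those of $\mathbf{x}^{(\bmj)}$.

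The second step is to observe that each $q_k$ is a univariate polynomial in $t$ of degree at most $d$ (since $p$ has individual degree at most $d$), and that $\|q_k\|_{[-1,1],\infty} \le \|p\|_{\cube_n,\infty}$ trivially, because every value of $q_k$ on $[-1,1]$ is a value of $p$ at some point of $\cube_n$. Moreover, $x_k$ and $x_k^{(\bmj)}$ lie in a common univariate Chebyshev cell of index $j_k$. Therefore the univariate ($n=1$) case of the theorem — which is exactly the KKP bound from \cite{kkp} — applies to $q_k$ with the piece-wise constant function that takes the value $q_k(x_k^{(\bmj)})$ on the $j_k$-th cell, giving
\[
\bigl|q_k(x_k) - q_k(x_k^{(\bmj)})\bigr| \;\le\; C\,\frac{d}{m}\,\|q_k\|_{[-1,1],\infty} \;\le\; C\,\frac{d}{m}\,\|p\|_{\cube_n,\infty}.
\]

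The third step is simply to sum these $n$ bounds via the triangle inequality, yielding $|p(\bfx) - r(\bfx)| \le C\,\frac{dn}{m}\,\|p\|_{\cube_n,\infty}$; since $\bfx$ and $\bmj$ were arbitrary, this is the claimed $\ell_\infty$ bound over all of $\cube_n$. The only real work is the univariate base case, which follows from Bernstein's inequality $|q'(x)| \le d\|q\|_\infty/\sqrt{1-x^2}$ combined with the fact that the $j_k$-th Chebyshev cell on $[-1,1]$ has width $O(\sqrt{1-x^2}/m)$ at any interior point $x$; these two cancel to give $O(d/m)$ variation per cell. I don't foresee a real obstacle — the potential pitfall is keeping track that the univariate sup norm $\|q_k\|_{[-1,1],\infty}$ is controlled (not blown up) by $\|p\|_{\cube_n,\infty}$, which is immediate because $q_k$ is a restriction of $p$ to an axis-aligned line inside $\cube_n$.
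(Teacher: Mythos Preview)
Your proof is correct and follows essentially the same approach as the paper: both use a coordinate-wise telescoping (hybrid) argument to reduce to the univariate KKP bound (\autoref{lem:kkp-technical-lemma}, via \autoref{claim:p.w.-constant-func-exists}), then bound the axis-parallel restriction's sup norm by $\|p\|_{\cube_n,\infty}$ and sum the $n$ resulting terms. The only minor caveat is that your parenthetical sketch of the univariate base case via Bernstein's inequality handles only the interior Chebyshev intervals; the two extreme intervals require Markov Brothers' inequality instead, but since you are citing the KKP result as a black box this does not affect the validity of your argument.
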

This is proved in \autoref{subsec:optimal-l-infty-bound}.
Second, we show how to relate the maximum value of a bounded degree polynomial on $\cube_n$ with its $\ell_1$ norm, on the same cube $\cube_n$.
\begin{restatable}{thm}{linftytolone}\label{thm:improved-sandwich}
    There exists a global constant $C>0$ such that, for every polynomial $p\in \Pd$: 
     $$ \|p\|_{\cube_n,\infty}\leq C^n d^{2n}\|p\|_{\cube_n,1}.$$
\end{restatable}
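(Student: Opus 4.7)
The plan is to prove this by induction on the number of variables $n$, reducing to a univariate Nikolskii-type inequality. The base case is the claim that every univariate polynomial $q$ of degree at most $d$ satisfies $\|q\|_{[-1,1],\infty} \le 4d^2 \|q\|_{[-1,1],1}$. I would derive this directly from Markov's inequality $\|q'\|_{[-1,1],\infty} \le d^2 \|q\|_{[-1,1],\infty}$ as follows. Let $x^* \in [-1,1]$ be a point at which $|q|$ attains its maximum $M := \|q\|_{[-1,1],\infty}$. Then for every $x \in [-1,1]$, the mean value theorem gives $|q(x)| \ge M - d^2 M |x - x^*|$, so $|q(x)| \ge M/2$ throughout $\{x : |x - x^*| \le 1/(2d^2)\} \cap [-1,1]$, an interval of length at least $1/(2d^2)$ no matter where $x^*$ sits in $[-1,1]$. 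Integrating this pointwise lower bound gives $\|q\|_{[-1,1],1} \ge M/(4d^2)$.

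For the inductive step, assume the bound $\|r\|_{\cube_{n-1},\infty} \le C^{n-1} d^{2(n-1)} \|r\|_{\cube_{n-1},1}$ for every $(n-1)$-variate individual-degree-$d$ polynomial $r$. Let $\mathbf{x}^* = (x_1^*,\dots,x_n^*) \in \cube_n$ maximize $|p|$. The restriction $q(t) := p(t, x_2^*,\dots,x_n^*)$ is a univariate polynomial of degree $\le d$, so the base case applied to $q$ yields
\[
\|p\|_{\cube_n,\infty} \;=\; |q(x_1^*)| \;\le\; 4d^2 \int_{-1}^{1} |p(t, x_2^*, \dots, x_n^*)|\, dt.
\]
For each fixed $t \in [-1,1]$, the slice $p_t(\mathbf{y}) := p(t, \mathbf{y})$ is a polynomial on $\cube_{n-1}$ of individual degree at most $d$, and trivially $|p_t(x_2^*,\dots,x_n^*)| \le \|p_t\|_{\cube_{n-1},\infty}$. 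Applying the inductive hypothesis to $p_t$ and then Fubini's theorem gives
\[
\|p\|_{\cube_n,\infty} \;\le\; 4d^2 \cdot C^{n-1} d^{2(n-1)} \int_{-1}^{1}\!\!\int_{\cube_{n-1}} |p(t,\mathbf{y})|\, d\mathbf{y}\, dt \;=\; C^n d^{2n}\, \|p\|_{\cube_n,1},
\]
for a sufficiently large universal constant $C$ (e.g.\ $C = 4$ suffices).

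The main ingredient is really the univariate base case, and in particular the appeal to Markov's inequality; the induction itself is essentially mechanical. The one subtlety worth flagging is that after applying the univariate bound in the first coordinate, one must resist the temptation to keep pointwise information in the remaining $n-1$ coordinates — doing so would leave us with the non-polynomial function $t \mapsto \int |p(t,\cdot)|\, d\mathbf{y}$ and no way to iterate. Replacing the integrand at the fixed point $(x_2^*,\dots,x_n^*)$ by the $L_\infty$-norm of $p_t$ is the small move that makes the induction go through, since $p_t$ is once again an individual-degree-$d$ polynomial in one fewer variable. The constant $C^n$ arises naturally from accumulating one $d^2$ and one $O(1)$ factor per coordinate peeled off.
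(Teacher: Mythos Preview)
Your proof is correct and yields the same constant $C=4$ as the paper, but the route is genuinely different. The paper does not induct on the inequality itself; instead it proves a structural lemma (\autoref{lem:gen-prod-set}) that, starting from a maximizer $\mathbf{x}^*$, inductively \emph{constructs a set} $J\subset\cube_n$ of measure at least $(2d^2)^{-n}$ on which $|p|\ge \|p\|_{\cube_n,\infty}/2^n$. Each inductive step extends a $k$-dimensional set $J_k$ to a $(k{+}1)$-dimensional set by attaching, to every point of $J_k$, an axis-parallel segment of length $\ge 1/(2d^2)$ on which $|p|$ drops by at most a factor of $2$ (via the same Markov/mean-value estimate you use); a translation argument then shows the union has the right measure. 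Integrating the pointwise lower bound over $J$ gives the theorem. Your approach instead peels off one coordinate, immediately relaxes the integrand to the slice $\ell_\infty$-norm, and invokes the $(n{-}1)$-dimensional inequality as a black box, with Fubini reassembling the $\ell_1$-norm. This is cleaner and shorter: it sidesteps the set construction and the measure-of-union argument entirely. What the paper's approach buys is the stronger geometric statement \autoref{lem:gen-prod-set} (a large set where $p$ is uniformly large), which may be of independent interest, whereas your argument extracts only the norm inequality.
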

We also prove the tightness of the relation between $\ell_\infty$ and $\ell_1$ norms.
\begin{restatable}{prop}{tightnessofinduction}\label{lem:tightness_of_induction}
There exists a global constant $c>0$ such that for
    every odd $d$, there exists a family of individual degree-$d$ polynomials $\{f_n\}_{n\in\mathbb{N}}$, where $f_n:\cube_n\to\R$, such that $\|f_n\|_{\cube_n,\infty} \geq c^nd^{2n} \|f_n\|_{\cube_n,1}$.
\end{restatable}
These are proved in \autoref{sec:l-infty-to-l-1}.
In \autoref{sec:wilhwlmwnsen}, we show how to prove a similar result, using a previous work by \cite{WILHELMSEN1974216} and a simpler argument, but it has
 factors that are worse by a multiple of $\sim n^{2.5n}$.

\subsection{Related Work}
Given the fundamental nature of the polynomial regression problem, there is a long history of work on the problem, but mostly in the univariate setting. Arora and Khot \cite{arora-khot} were the first to study this problem in our random outlier noise model, giving an algorithm that in $O(\frac{d^2}{\sigma}\log\frac{d}{\sigma})$ random noisy samples outputs an $O(\sigma)$-approximation (in $\ell_\infty$) to the (actual) hidden polynomial, where the outlier rate $\rho=0$. This was improved in a work by Guruswami and Zuckerman \cite{guruswami-zuckerman},
who gave a computationally efficient algorithm for all $\rho < 1/\log d$. Finally, in a significant improvement, Kane, Karmalkar and Price \cite{kkp} obtained computationally efficient algorithms for any $\rho < 1/2$, while having no additional requirements for $\sigma$ or $\|p\|_\infty$. 
As far as we know, Daltrophe, Dolev and Lotker \cite{DDL2018} were the first to consider the \emph{multivariate} setting of this problem. For the two-dimensional case ($n=2$), they gave an algorithm that with $O(\frac{d^4}{\sigma}\log \frac{d}{\sigma})$ random noisy samples outputs a $c(2)\cdot \sigma$-approximation (in $\ell_\infty$), for any $\rho <\frac{1}{2}$, where $c(2)=3$. A limitation of their result is that $c(n)$ grows exponentially in $n$. In contrast, we obtain a constant factor approximation for all $n$.

There has also been a surge of recent research in the related area of robust statistics. Here, instead of the outliers being randomly placed, their locations are chosen adversarially. For the setting when the \emph{total} degree is fixed, and the dimension $n$ is growing, Klivans, Kothari and Meka
\cite{KKM2018} gave an algorithm using the sum-of-squares method. However, their sample complexity is $\mathrm{poly}(n^d)$, which is exponential in the degree; moreover, the output guarantee is with respect to the $\|\cdot\|_2$ norm, instead of the $\|\cdot\|_\infty$ norm in our setting. Other related works in this spirit are that of Diakonikolas, Kamath, Kane, Li, Steinhardt and Stewart \cite{diakonikolas2019sever} and Prasad, Suggala, Balakrishnan and Ravikumar \cite{prasad2020robust}. The work of Diakonikolas, Kong, and Stewart \cite{diakonikolas2019efficient} also studied the related problem of adversarially robust linear regression, but with the assumption that the $\bfx_i$'s are drawn from a Gaussian.

\subsection{Technical Overview}
We first sketch the algorithm designed by Kane, Karmalkar, and  Price \cite{kkp}, henceforth KKP, and their analysis for the univariate case, $n=1$.
For univariate polynomial interpolation, the points at which the noisy samples are located play an important role in determining the interpolation error. 
Choosing the points to be the Chebyshev nodes, which are the roots of Chebyshev polynomials (see \autoref{defn:chebyshev-polynomial}),
is a good starting point, as suggested by approximation theory literature. However, the algorithm receives random samples, which may not necessarily be located at the Chebyshev nodes. Instead, KKP argue that they have enough inliers around each Chebyshev node. 
For this, they define a partition of the interval $[-1,1]$ on the {extremal} points of Chebyshev polynomials, which they call \emph{the size-$m$ Chebyshev partition}. 
In their algorithm and its analysis, they assume that the set of samples is \emph{good}, in the sense that in every part of the partition, there is only a small fraction of outliers; this good event is guaranteed to happen with high probability.  

\subsubsection{KKP's Algorithm and Analysis}
Formally, the {size-$m$ Chebyshev partition} of $[-1,1]$ is the set of intervals $I_j=\left[\cos\frac{\pi j}{m},\cos\frac{\pi(j-1)}{m}\right]$, for all $j\in[m]$. 
Given a set of $s$ samples $(x_i,y_i)$ where $x_i$'s are drawn from some distribution over $[-1,1]$, and $y_i$'s are the corresponding labels, the algorithm uses the idea of  \emph{median-based recovery}. For every interval $I_j$: \begin{itemize}
\itemsep=0ex
    \item Let $\tilde{y}_j$ be the median of $y_i$'s of samples for which  $x_i\in I_j$. Since the set of samples is assumed to be \emph{good}, i.e., the fraction of outliers in each interval is strictly less than $1/2$, $\tilde{y}_j$ lies in between two inliers located in the interval $I_j$.

   \item Let $\tilde{x}_j$ be an arbitrary point in $I_j$.
            \item Let $\hat{p}$ be a minimizer, over all degree-$d$ polynomials, of the empirical $\ell_\infty$ error $\max_j |\hat{p}(\tilde{x}_j)-\Tilde{y}_j|$  over all $j\in[m]$.
\end{itemize} 
As $m$ grows, the partition gets finer, and the error gets better, though at a cost of higher sample complexity. Iteratively applying the median-based recovery on the residual left from previous iteration improves the approximation, and in $\log(\max_{x\in[-1,1]}|p(x)|/\sigma)$ iterations, the error drops down to $3\sigma$. 

\ignore{We note that KKP reuses the same set of samples for all iterations, so that the number of iterations doesn't affect their sample complexity. For a result $\hat{p}$ of some iteration, the samples for the next iteration are now the set $\{(x_i,y_i-\hat{p}(x_i))\}$.
In addition, the sample complexity needed for the first step of $\ell_1$ regression is of the same order as the sample complexity needed for the iteratively applying the median-based recovery on residuals.  }

The backbone of their analysis is a {technical} result for approximating $p$ on a size-$m$ Chebyshev partition, by a piece-wise constant function (with respect to the same partition) that matches $p$ on at least one point in every part of the partition.
\begin{restatable}{lem}{kkptechnicallemma}[Lemma 2.1, \cite{kkp}]\label{lem:kkp-technical-lemma}
    Let $g:\R\to\R$ be a (univariate) degree-$d$ polynomial. Let $\{I_j\}_{j\in[m]}$ denote the $m$-Chebyshev partition of $[-1,1]$, for some $m\geq d$. Let $r:[-1,1]\to\R$ be piece-wise constant, so that for each $k\in[m]$, there exists $x_k^\ast\in I_k$, such that $r(x)=g(x_k^\ast)$ for all $x\in I_k$. Then, there exists a universal constant $C$ such that, for any $q\geq 1$,
    \[\|g-r\|_{q}\leq\frac{Cd}{m}\|g\|_q.\]
\end{restatable}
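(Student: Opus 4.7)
The plan is to pull back $g$ to a trigonometric polynomial via the substitution $x = \cos\theta$, perform a local Hölder estimate on each cell of the (now uniform) partition, and finish with the classical $L^q$-Bernstein inequality for trigonometric polynomials.

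First I would set $h(\theta) := g(\cos\theta)$ for $\theta \in [0,\pi]$; after even extension this is a degree-$d$ trigonometric polynomial. The Chebyshev partition $\{I_k\}$ pulls back to the uniform partition $J_k = [(k-1)\pi/m,\, k\pi/m]$, each of length $\pi/m$, and $r$ becomes the piecewise constant function $\tilde r(\theta) = h(\theta_k^\ast)$ on $J_k$, where $\theta_k^\ast := \arccos(x_k^\ast) \in J_k$.

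On each $J_k$, the fundamental theorem of calculus together with Hölder's inequality (with exponents $q$ and $q/(q-1)$) gives
\[
|h(\theta) - h(\theta_k^\ast)|^q \;\le\; \left(\int_{J_k}|h'(\phi)|\,d\phi\right)^q \;\le\; \left(\tfrac{\pi}{m}\right)^{q-1}\!\int_{J_k} |h'(\phi)|^q \,d\phi.
\]
Integrating in $\theta$ over $J_k$ and summing over $k$ yields $\|h-\tilde r\|_{L^q([0,\pi])}^q \le (\pi/m)^q\,\|h'\|_{L^q([0,\pi])}^q$. The classical $L^q$-Bernstein inequality for trigonometric polynomials (applied to the $2\pi$-periodic even extension of $h$) supplies $\|h'\|_{L^q([0,\pi])} \le d\,\|h\|_{L^q([0,\pi])}$, so $\|h-\tilde r\|_{L^q([0,\pi])} \le (\pi d/m)\,\|h\|_{L^q([0,\pi])}$. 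Translating back via $x = \cos\theta$, the unweighted $L^q$-norm on $\theta \in [0,\pi]$ equals the Chebyshev-weighted $L^q$-norm on $x \in [-1,1]$, which delivers the claimed bound in that norm.

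The main obstacle is matching norms. If $\|\cdot\|_q$ in the statement denotes the standard Lebesgue $L^q$-norm on $[-1,1]$, the pullback introduces a common weight $\sin\theta$ on both sides, and the cleanest fix is to invoke a weighted $L^q$-Bernstein inequality for trigonometric polynomials with the Jacobi weight $\sin\theta$ — a known strengthening. Alternatively, I would partition the indices into boundary cells (where $|I_k| = \Theta(1/m^2)$ and $\sin\theta$ vanishes) and interior cells (where $|I_k| \asymp \sqrt{1-x^2}/m$ uniformly), estimating each regime separately. The interior case is straightforward because $\sin\theta$ is essentially constant on each $J_k$; the delicate part is the boundary, but those cells are atypically short, which lets the vanishing weight be absorbed without loss.
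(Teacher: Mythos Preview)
Your approach is correct and is essentially the one the paper attributes to KKP. The paper does not prove this lemma itself; it cites it and sketches KKP's argument as: Nevai's inequality (an $\ell_q$-Bernstein-type bound $\|\sqrt{1-x^2}\,g'\|_q \le Cd\,\|g\|_q$) on the interior cells, combined with Markov Brothers' inequality on the two peripheral cells $I_1, I_m$. Your substitution $x=\cos\theta$ followed by the classical trigonometric $L^q$-Bernstein inequality is exactly Nevai's inequality in disguise (since $h'(\theta) = -g'(\cos\theta)\sin\theta$), so the ``clean'' part of your argument recovers the Chebyshev-weighted bound, and your proposed interior/boundary split for the Lebesgue norm is precisely KKP's decomposition.

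The one point you leave vague is the boundary estimate. Saying the peripheral cells are ``atypically short'' is the right intuition, but you should name the tool: on $I_1$ and $I_m$ one applies Markov Brothers' inequality $\|g'\|_{[-1,1],\infty} \le d^2 \|g\|_{[-1,1],\infty}$, and the point is that the $d^2$ (rather than $d$) is exactly compensated by $|I_1| = |I_m| = \Theta(1/m^2)$ (rather than $\Theta(1/m)$). Without making this explicit, the boundary case is an assertion rather than an argument.
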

To prove \autoref{lem:kkp-technical-lemma}, they used Nevai's inequality \cite{nevai1979bernstein}, an $\ell_q$-version of Bernstein's inequality, to bound the $\ell_q$ approximation error by a multiple of the $\ell_q$ norm of $p$. The multiple is linear in the degree $d$, and $1/m$.
The bound from Nevai's inequality works for all ``inner" parts of the Chebyshev partition, as it relies on the fact that the length of any part $I_j$, where $j\not\in \{1,m\}$, is at most $O(\sqrt{1-x^2}/m)$, for every $x\in I_j$. To bound the approximation error on the peripheral parts $I_1,I_m$, they use Markov Brothers' inequality (\autoref{lem:markovbros}). Here they strongly rely on the fact that those parts are much narrower\footnote{A pictorial demonstration of this narrowness, for 2-dimensional partitions, can be observed in \autoref{fig:2-d-grid}.} $(|I_1|=|I_m|=O(1/m^2))$ than the inner parts. This additional $1/m$ factor in the length compensates for the worse bound from \autoref{lem:markovbros}.

\autoref{lem:kkp-technical-lemma}, with $q$ set to $\infty$, is used to bound the error of the median-based recovery procedure, in terms of $\|p\|_{\cube_1,\infty}$. 
This allows the $\log\frac{\|p\|_{\cube_1,\infty}}{\sigma}$ iterations to be all that is further needed to bring the error down to $3\sigma$.

To avoid the run-time dependence on $\max_{x\in[-1,1]}|p(x)|/\sigma$, which maybe unknown or too large, KKP first run an $\ell_1$ regression, which gives an $\ell_\infty$ error of at most $O(d^2\sigma)$, and then run the median-based recovery algorithm on the residual polynomial, which in $\log d$ iterations drops the error further to at most  $3\sigma$.
\autoref{lem:kkp-technical-lemma}, with $q=1$, is used to bound the $\ell_1$-error of the $\ell_1$-minimizer by $O(\sigma)$. 
A further application of \autoref{lem:markovbros} bounds the $\ell_\infty$-error of the $\ell_1$-minimizer by $O(d^2\sigma)$. 
This then allows for a bound of $\log d$ on the number of iterations needed, and hence the algorithm's run-time.

\subsubsection{Our Results}
\paragraph{Generalizing to the multivariate case ($n>1$):}
We show that the idea of KKP generalizes to the multivariate setting by considering a tensorization of the Chebyshev partition, i.e., we divide the cube $[-1,1]^n$ into $m^n$ cells according to a grid partition, where each axis is divided into $m$ intervals, according to the size-$m$ Chebyshev partition of $[-1,1]$ defined by KKP. 
The analysis takes steps similar to the analysis done by KKP, and some of the proofs follow by `tensoring' KKP's arguments in some sense. 

There are some subtleties that we take care of along the way.
We successfully show optimal sample complexity results, in terms of the dependence on $d^n$, for the median-based recovery algorithm, while for the $\ell_1$ regression, we need more samples.
For this reason, we first analyze the median based recovery algorithm, the running time (but not the sample complexity) of which, depends on  $\max_{\bm x\in\cube_n}|p(\bm x)|$. 
Later, we show that by running the $\ell_1$ regression on weighted averages (with respect to cells) as the first step, we reach a constant approximation factor in bounded run-time at the cost of increasing the exponent in the sample complexity from $n$ to $O(n^2)$.

\paragraph{Overview of the algorithms and analyses:}
For using the median-recovery algorithm, we devise a multivariate analog (\autoref{thm:optimal-l-infty-bound}) of \autoref{lem:kkp-technical-lemma} for the $\ell_\infty$ norm. Specifically, we show that, for large enough $m$, every $n$-variate, individual degree-$d$ polynomial $p$ is well approximated by any piece-wise constant function with respect to the  $(m,n)$-Chebyshev partition that matches $p$ on at least one point in each cell. This is proved by a repeated application of the \emph{univariate} $\ell_\infty$ approximation statement from \autoref{lem:kkp-technical-lemma}. 
{Algorithmically}, we then do median-based recovery on a fine enough Chebyshev partition of $\cube_n$, and iteratively improve the output of the $\ell_\infty$ regression. After at most $\log (\|p\|_{\cube_n,\infty}/\eta)$ iterations, we achieve an $O(\sigma)+\eta$ approximation. A $\mathrm{poly}(\log \|p\|_{\cube,\infty},M,\log(1/\eta))$ run-time is thus achieved. One may set 
$\eta=\sigma$ to achieve an $O(\sigma)$ approximation, in this case the run-time is dependent on $\log (1/\sigma)$ instead.

We also consider the \emph{finite bit precision} setting where the samples are represented using at most  $N$ bits of precision. This forces $\sigma\geq 2^{-N}$, and the (location, evaluation) pairs of the random input samples are now rounded to $N$ bits. 
In this case, the samples' locations are not exact, and hence we are uncertain as to which Chebyshev cell they belong to. To deal with it,  we discard samples that lie in a small $\ell_1$ neighborhood of the boundary of the cells. (See \autoref{fig:refinement} for an illustration.) We then apply the median-based recovery algorithm on only the remaining samples in the cells' interior. 
The interior \emph{refined} sample points, by virtue of being far enough from their nearest cell boundary, would have remained in their respective cells, even after suffering from the rounding noise. 
Hence, we only have to ensure that all the interior regions have enough \emph{good} samples, which we show increases the sample complexity by a factor dependent only on $n$. It still gives a  tight upper bound on the sample complexity, in terms of $d^n$.
\begin{figure}[ht]
        \centering
        \includegraphics[scale=0.4]{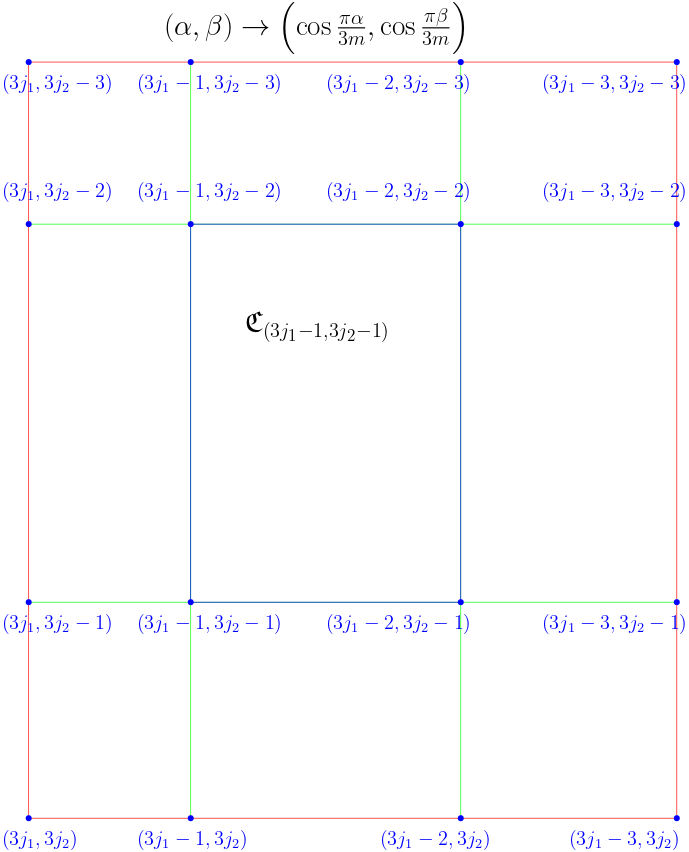}
        \caption{An illustration of cell-refinement in 2-dimensional Chebyshev grids: a $(3m,2)$-grid (in green) super-imposed on a $(m,2)$-Chebyshev cell $\cube_{(j_1,j_2)}$ (in red). The samples from middle-most cell $\cells_{(3j_1-1,3j_2-1)}$ (in blue) only are retained, and median-recovery is applied on them.}
        \label{fig:refinement}
\end{figure}

In order to avoid a run-time dependence on $(\|p\|_{\cube_n,\infty},\sigma)$, e.g., in case $\sigma$ is unknown or $\|p\|_{\cube_n,\infty}$ is too big,  we compute an $\ell_1$ minimizer $\hat{p}_{\ell_1}$ first as in KKP's approach,
However, for this analysis, we need a multivariate analog of \autoref{lem:kkp-technical-lemma} for the $\ell_1$ norm. 
The main difficulty here is the fact that now we have many more `peripheral' cells, i.e, cells on the boundary of $\cube_n$ (these cells correspond to the peripheral intervals $I_1$ and $I_m$ from the 1-dimensional Chebyshev partition, that needed Markov Brothers' Inequality).
Since these peripheral cells are narrower, and much more in number, as $n$ grows, this issue becomes more crucial. 
For example, for $n=1$,  the fraction of `peripheral' intervals is $2/m$; but for $n=2$, it is $\frac{4(m-1)}{m^2}=\frac{2}{m}(2-2/m) \gg \frac{1}{m^2}$. 
We circumvent this difficulty with our \emph{second new technical contribution} (\autoref{thm:improved-sandwich}), that relates the $\ell_\infty$ and $\ell_1$ norms of any individual degree-$d, n$-variate polynomial.
\vspace{-2ex}
\paragraph{Relating $\ell_\infty$ and $\ell_1$ norms of $p$.}
We inductively show the existence of a subset of points in $\cube_n$, with a large measure (at least $1/(2d^2)^n$), on which the valuations of $p$ can be guaranteed to be large, i.e., at least $\max_{\bfx\in\cube_n}|p(\bfx)|/2^n$. Thus we lower bound the $\ell_1$ norm of $p$ by a $1/\poly(d^n)$ factor of its $\ell_\infty$ norm, in the form of \autoref{thm:improved-sandwich}. We also note the tightness of this bound, by showing a family of polynomials for which their (resp.) $\ell_1$ norms are upper bounded by a matching $1/\poly(d^n)$ factor of the (resp.) $\ell_\infty$ norms, in the form of \autoref{lem:tightness_of_induction}.  

To begin, for any point in $\mathbf{x}\in\cube_n$, using Markov Brothers' inequality, we show the existence of a long enough line segment, on an axis-parallel line passing through it, such that $p$ on that line segment has all valuations at least $p(\mathbf{x})/2$. 
For constructing (higher) $(k+1)$-dimensional cubes from $k$-dimensional cubes, in the induction step, we prove that
all new unique line segments (one each corresponding to every point in the $k$-dimensional cube) can be translated to form a $(k+1)$-dimensional cube with a large enough Lebesgue measure. Thus, a sizable subset of points in $n$ dimensions is constructed. 
On each of these points, the valuations of $p$ are at least half of the valuations of $p$ on the corresponding points in the $k$-dimensional cube. Using the inductive hypothesis, we conclude the argument.

Using \autoref{thm:improved-sandwich} we bound the $\ell_\infty$ error of the $\ell_1$ minimizer $\hat{p}_{\ell_1}$ by $\poly(d^n)\sigma$.
We then feed $\hat{p}_{\ell_1}$ to the median based recovery procedure, which in $O(n\log d)$ iterations\footnote{The number of iterations in this case additionally depends on $O(\log (1-2\rho))$, which diverges as $\rho\rightarrow 0.5$, thus agreeing with $\rho<0.5$ being information-theoretically necessary.
}, brings down the error to $O(\sigma)$, thus proving \autoref{cor:final-cor-with-l-1-intro}.

\paragraph{Organization.}
We begin by setting up some preliminaries in \autoref{sec:prelims}. Discussion of the upper bounds follows, with \autoref{thm:l_infty_regression_upper_bound_intro} in \autoref{sec:l-infty-without-l-1}, followed by \autoref{thm:finite_presicion_intro} in \autoref{sec:finite_presicion_case}. They rely on \autoref{thm:optimal-l-infty-bound}, which is proved in \autoref{subsec:optimal-l-infty-bound}. \autoref{sec:l-infty-to-l-1} is devoted to proving $\ell_\infty$ and $\ell_1$ norms relations, in the form of \autoref{thm:improved-sandwich} and \autoref{lem:tightness_of_induction}. Among them \autoref{thm:improved-sandwich} is used to prove \autoref{cor:final-cor-with-l-1-intro} in \autoref{sec:l-infty-with-l-1}. Lower bounds are discussed in \autoref{sec:lower-bound}.

\section*{Acknowledgements}
The authors would like to thank Yuval Filmus for fruitful discussions about some aspects of the robust regression problem. 

\section{Preliminaries}\label{sec:prelims}
\paragraph{Notations.}As mentioned earlier, $\Pd$ denotes the class individual degree-$d$ polynomials, 
where a polynomial $p\colon\R^n\to\R$ is said to be of individual degree-$d$ if it can be written as
\[
p(x_1, \dots, x_n) = \sum_{\bm \alpha\in \{0,1,\dots,d\}^n} c_{\bm \alpha}x_1^{\bm \alpha_1} \cdots x_n^{\bm \alpha_n},
\]
for some set of coefficients $c_{\bm \alpha}\in\R$. We use $[m]=\{1,\hdots,m\}$, and bold font for multi-index. For example $\bmj=(j_1,\hdots,j_n)\in [m]^n$ where each entry $j_i\in [m]$. We use the math bold font $\bfx,\bfy$ for vectors. To denote random uniform sampling from a set $\caD$, we use $\sim{\caD}$.  We denote by $\cube_n=[-1,1]^n$ the $n$ dimensional solid cube and omit the subscript $n$ when it is clear from the context. Our main problem of interest is the \problemDefn, formally described in \autoref{sec:intro}.

\begin{defn}\label{defn:norms}[Norms]
    For any bounded subset $S\subsetneq \R^n$, for any $1\le q\in\R$, the $\ell_q$ norm of a function $f:\R^n\to\R$ on $S$, provided it exists, is defined as:
    \[\|f\|_{S,q}\triangleq\left(\int_S |f(\mathbf{x})|^q d\mathbf{x}\right)^{\frac{1}{q}}<\infty.\]
    The supremum norm of $f$ on $S$ is defined as $\|f\|_{S,\infty}\triangleq \lim_{q\to \infty}\|f\|_{S,q}=\sup_{\mathbf{x}\in S}\{|f(\mathbf{x})|\}$.
\end{defn}

\begin{lem}[Markov Brothers' Inequality \cite{markovbros}]\label{lem:markovbros}
    Let $p:\R\to\R$ be a degree-$d$ polynomial. Then, for all $a<b\in\R$,
    \[\|p'\|_{[a,b],\infty}\leq\frac{2d^2}{b-a}\|p\|_{[a,b],\infty}.\]
\end{lem}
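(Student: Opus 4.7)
The plan is to reduce the inequality on an arbitrary interval $[a,b]$ to the canonical Markov brothers' inequality on $[-1,1]$ and then invoke the classical proof. The reduction is via the affine change of variables $\phi(t) = \tfrac{b-a}{2} t + \tfrac{a+b}{2}$, which maps $[-1,1]$ bijectively onto $[a,b]$: setting $q(t) = p(\phi(t))$ produces a degree-$d$ polynomial on $[-1,1]$ with $\|q\|_{[-1,1],\infty} = \|p\|_{[a,b],\infty}$, while the chain rule gives $q'(t) = \tfrac{b-a}{2}\, p'(\phi(t))$, so that $\|p'\|_{[a,b],\infty} = \tfrac{2}{b-a}\, \|q'\|_{[-1,1],\infty}$. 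Hence the statement is equivalent to the classical bound $\|q'\|_{[-1,1],\infty} \leq d^2 \|q\|_{[-1,1],\infty}$, and the advertised factor $\tfrac{2d^2}{b-a}$ is exactly what the rescaling produces.

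For the classical bound on $[-1,1]$ I would partition the interval at $\pm \cos(\pi/(2d))$. In the \emph{interior region} $|y| \leq \cos(\pi/(2d))$, the algebraic form of Bernstein's inequality $|q'(y)| \sqrt{1 - y^2} \leq d \|q\|_{[-1,1],\infty}$ combined with the elementary estimate $\sqrt{1 - y^2} \geq \sin(\pi/(2d)) \geq 1/d$ (from $\sin x \geq 2x/\pi$ on $[0,\pi/2]$) immediately yields $|q'(y)| \leq d^2 \|q\|_{[-1,1],\infty}$, so this region is routine. The real content of the theorem lies in the thin \emph{boundary bands} $|y| > \cos(\pi/(2d))$, where the Bernstein factor degenerates; here I would follow V.\,A.\,Markov's classical argument, expressing $q'(y)$ via the explicit interpolation formula at the $d+1$ Chebyshev extremal nodes $\eta_k = \cos(k\pi/d)$ and bounding the resulting sum of weights by $d^2$. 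The sharp constant is pinned down by the Chebyshev polynomial $T_d$ itself: $T_d(\eta_k) = (-1)^k$, $\|T_d\|_{[-1,1],\infty} = 1$, and $T_d'(1) = d^2$ witness tightness.

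The main obstacle is the boundary case: Bernstein alone is insufficient there, and the sharp constant $d^2$ fundamentally relies on the extremality of $T_d$, which is precisely what the interpolation-at-Chebyshev-nodes step exploits. Since Markov brothers' inequality is a classical 1889 result already cited in the paper, the cleanest writeup is simply to invoke \cite{markovbros}; the sketch above is primarily to make the affine reduction explicit and to indicate why $\tfrac{2d^2}{b-a}$ is the correct constant for a general interval $[a,b]$.
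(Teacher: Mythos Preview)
The paper does not prove this lemma at all: it is stated as a classical result with the citation \cite{markovbros} and then used as a black box throughout. Your proposal is therefore consistent with the paper's treatment (indeed, you yourself conclude that ``the cleanest writeup is simply to invoke \cite{markovbros}''), and the additional proof sketch you supply---the affine reduction to $[-1,1]$ followed by the Bernstein/interpolation split---is correct and standard, but strictly more than what the paper provides.
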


\paragraph{Chebyshev Partition.} 

We partition the cube $\cube_n$ into $m^n$ cells by tensorizing the partition used by KKP for the line segment $[-1,1]$.

\begin{defn}[Chebyshev partition]\label{defn:Chebyshev-cells}
The $(m,n)$-Chebyshev partition of the cube $\cube$ is a set of $m^n$ cells indexed by $\bm j\in [m]^n$ and denoted  $\cube_{\bm j}$, such that  
$$\cube_{\bm j}=\left[\cos \frac{\pi \bm j_1}{m},\cos \frac{\pi (\bm j_1-1)}{m}\right]\times\dots\times \left[\cos \frac{\pi \bm j_n}{m},\cos \frac{\pi (\bm j_n-1)}{m}\right].$$
The grid is induced by partitioning $[-1,1]$ between the extrema points of the degree $m$ Chebyshev polynomial\footnote{see \autoref{defn:chebyshev-polynomial}, and \autoref{defn:chebyshev-extremas}} of the first kind, $T_m$, simultaneously along each axis.
\end{defn}

We generalize KKP's notion of \emph{goodness} that restricts the number of outliers in each cell:
\begin{defn}[$\alpha$-good sample set]\label{defn:alpha-good}
    We say that a set of samples $S=\{(\mathbf{x}_i,y_i)\}$ is $\alpha$-good for the $(m,n)$ Chebyshev partition,     if for every $\bmj\in[m]^n$, the fraction of outliers in the cell $\cube_{\bmj}$     is less than $\alpha$.
\end{defn}

\section{Main algorithmic result}\label{sec:l-infty-without-l-1}
In this section, we present the algorithm that solves the \problemDefn, proving the following theorem, handling an approximation factor as close to $2$ as we want, and any success probability $1-\delta$.
\begin{restatable}{thm}{linftyupperbound}[Generalized version of \autoref{thm:l_infty_regression_upper_bound_intro} ]\label{thm:l_infty_regression_upper_bound}
   Let $\eps\in(0, 1/2],\delta\in (0,\eps],\sigma\geq 0,\adderr>0$, and  $\rho<1/2$. 
   There is an algorithm that almost solves the \problemDefn 
   up to an additive error of $\adderr$. The  output of the algorithm  is   a polynomial $\hat{p}$ of degree at most $d$ in each variable, such that with probability at least $1-\delta$ (over the random input samples), $\hat{p}$ satisfies  
   \[|p(\bfx)-\hat{p}(\bfx)|\leq (2+\eps)\sigma+\eta\qquad \text{ for all } \bfx\in\cube.\]
 It uses   
 $M=O_{n,\rho}((d/\eps)^n\log(d/\delta))$ samples drawn from the multidimensional Chebyshev distribution, or $M=O_{n,\rho}((d/\eps)^{2n}\log(d/\delta))$ if the samples are drawn from the uniform measure.
 Its run-time is  that of solving $O(\log_{1/\eps}(\norm{p}_{\cube_n,\infty}/\eta)$ linear programs with $(d+1)^n<M$ variables, and $M$ constraints. 
\end{restatable}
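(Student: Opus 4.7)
The plan is to extend Kane--Karmalkar--Price's median-based recovery to $n$ dimensions by working over the tensorized Chebyshev partition of \autoref{defn:Chebyshev-cells}, and to iterate on residuals in order to drive the $\|p\|_{\cube,\infty}$-dependent error down to $\eta$. Fix $m$ so that the factor $Cdn/m$ from \autoref{thm:optimal-l-infty-bound} is at most $\eps/8$; this forces $m=\Theta(nd/\eps)$. Draw $M$ samples (to be set below) and, for $T$ rounds, in every cell $\cube_{\bmj}$ compute the median label $\tilde y_{\bmj}$ of the samples falling inside, pick any representative $\tilde{\mathbf{x}}_{\bmj}\in\cube_{\bmj}$, and find $\hat p\in\Pd$ minimizing $\max_{\bmj}|\hat p(\tilde{\mathbf{x}}_{\bmj})-\tilde y_{\bmj}|$ via the linear program in $(d+1)^n$ variables and $O(m^n)\le M$ constraints; then replace $y_i\leftarrow y_i-\hat p(\mathbf{x}_i)$ before the next round. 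The output is the sum of the round-wise $\hat p$'s.

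Correctness, conditional on $\tfrac12$-goodness (\autoref{defn:alpha-good}): in each cell two inliers sandwich the median, and the oscillation bound from \autoref{thm:optimal-l-infty-bound} applied with anchor $\tilde{\mathbf{x}}_{\bmj}$ gives $|p(\mathbf{x}_i)-p(\tilde{\mathbf{x}}_{\bmj})|\le\xi:=(Cdn/m)\|p\|_{\cube,\infty}$ for any inlier $\mathbf{x}_i$ in the cell, hence $|\tilde y_{\bmj}-p(\tilde{\mathbf{x}}_{\bmj})|\le\sigma+\xi$. Since $p$ itself is a feasible LP candidate of value $\le\sigma+\xi$, the optimum $\hat p$ satisfies $|\hat p(\tilde{\mathbf{x}}_{\bmj})-p(\tilde{\mathbf{x}}_{\bmj})|\le 2(\sigma+\xi)$ for every $\bmj$. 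Applying \autoref{thm:optimal-l-infty-bound} once more to $p-\hat p\in\Pd$ with the same anchors yields $\|p-\hat p\|_{\cube,\infty}\le 2(\sigma+\xi)+(Cdn/m)\|p-\hat p\|_{\cube,\infty}$, which rearranges to $\|p-\hat p\|_{\cube,\infty}\le 2\sigma(1+\eps/4)+(\eps/2)\|p\|_{\cube,\infty}$. Iterating on the residual (the relabeling $y_i\leftarrow y_i-\hat p(\mathbf{x}_i)$ preserves outlier identities and inlier noise bounds, so $\tfrac12$-goodness of the same sample set carries over for free, and samples can be reused across rounds) telescopes to $\|p-\sum_t\hat p_t\|_{\cube,\infty}\le \frac{2\sigma(1+\eps/4)}{1-\eps/2}+(\eps/2)^T\|p\|_{\cube,\infty}$; setting $T=\lceil\log_{2/\eps}(\|p\|_{\cube,\infty}/\eta)\rceil$ makes the second term $\le\eta$, and an $O(1)$ rescaling of $\eps$ in the choice of $m$ trims the first term to $(2+\eps)\sigma$.

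The main obstacle is the sample-complexity bookkeeping, where the Chebyshev-vs.-uniform distinction arises. Write $\mu_{\bmj}$ for the $\chi$-mass of cell $\bmj$; a Chernoff bound for the cell-occupancy count being close to $M\mu_{\bmj}$, combined with a second Chernoff bound on the conditional outlier fraction concentrating around $\rho<1/2$, and a union bound over the $m^n$ cells, shows that $M=\Omega_\rho(\log(m^n/\delta)/\mu_{\min})$ samples force $\tfrac12$-goodness with probability $\ge 1-\delta$, where $\mu_{\min}=\min_{\bmj}\mu_{\bmj}$. Under the $n$-dimensional Chebyshev measure, the change of variables $x=\cos\theta$ pushes each cell to a box in $[0,\pi]^n$ of side $\pi/m$, so every cell has mass exactly $m^{-n}$, yielding the claimed $M=O_{n,\rho}((d/\eps)^n\log(d/\delta))$. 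Under the uniform measure, the 1-D Chebyshev interval $I_j=[\cos(\pi j/m),\cos(\pi(j-1)/m)]$ has length $2\sin(\pi/(2m))\sin(\pi(2j-1)/(2m))$, which is $\Theta(1/m^2)$ for peripheral $j\in\{1,m\}$ and $\Theta(1/m)$ for interior $j$, so the $2^n$ corner cells have Lebesgue mass $\Theta(1/m^{2n})$, inflating the bound to $M=O_{n,\rho}((d/\eps)^{2n}\log(d/\delta))$, as stated. The total run-time is the cost of solving $T=O(\log_{1/\eps}(\|p\|_{\cube,\infty}/\eta))$ linear programs, each with $(d+1)^n<M$ variables and $O(M)$ constraints.
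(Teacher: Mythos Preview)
Your proposal is correct and follows essentially the same approach as the paper: the single-round analysis (median sandwiched by inliers, oscillation bound from \autoref{thm:optimal-l-infty-bound}, LP optimality, then \autoref{thm:optimal-l-infty-bound} applied to $p-\hat p$) is a slightly more streamlined version of the paper's \autoref{lem:error_bound}, the residual iteration matches \autoref{thm:l-infty-guarantee}, and the sample-complexity bookkeeping is identical to the paper's proof. The one detail you gloss over is that the algorithm cannot set $T=\lceil\log_{2/\eps}(\|p\|_{\cube,\infty}/\eta)\rceil$ directly since $\|p\|_{\cube,\infty}$ is unknown; the paper handles this by reading off a bound $\Pi\ge\|\hat p^{(1)}\|_{\cube,\infty}$ from the coefficients of the first-round output and showing $\|p\|_{\cube,\infty}\le 5\sigma+2\Pi$, so $N=O(\log_{1/\eps}(\Pi/\eta))$ rounds suffice.
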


\begin{remark}\label{rem:dep-on-rho}
   One may consider the case of non-constant values of $\rho<1/2$. Here, the number of samples increases as $\rho\to 1/2$, since the dependence of $M$ on  $\rho$ is $M\propto 1/(1-2\rho)^2$. 
\end{remark}

We remark that the idea is to show that we may achieve a multiplicative approximation factor $C$, as close to $2$ as we want (as long as $C>2$), at the cost of more samples. We may allow larger values of $\eps$, and then run our algorithm\footnote{Having $\eps'\leq 1/2$ is a limitation of the current analysis. An open question remains to make it work efficiently for any $\eps'>0$.} with $\eps'=\min\{\eps,1/2\}$. For $\eps\ge 1/2$, the dependence on $\eps$ in the sample complexity becomes constant for constant values of $n$.

\begin{remark}
In case $\sigma> 0$ is known, one may choose $\eta=\eps\sigma/2$, and set the $\eps$ parameter to be half of the desired bound to guarantee $\|\hat{p}-p\|_{\cube_n,\infty}\le (2+\eps)\sigma$. 
\end{remark}

Our  algorithm, given in \autoref{alg:median_rec} with its subroutine \autoref{alg:refinement},  is essentially the same algorithm proposed by KKP, which now uses the $(m,n)$-Chebyshev partition of the cube $\cube$ instead of the $(m,1)$-Chebyshev partition of the interval $[-1,1]$ used in KKP. Compared to their algorithm, we don't use the $\ell_1$ regression as the first step, but instead start with the $0$ polynomial as the first approximation.  
We first describe the idea of the algorithm and the median-based recovery.
\paragraph{Median-based Recovery:}
As in KKP (a similar approach was taken by  \cite{DDL2018}), for every $\bmj\in[m]^n$, we take the median $\tilde{ y}_{\bm j}$ of all the $y_i$'s corresponding to locations $\mathbf{x}_i$'s that land in the cell $\cube_{\bm j}$. We assume that the sample set $S$ is $\alpha$-good, so the fraction of outliers in each cell is strictly less than one-half (since $\alpha<1/2$) so that the median lies in between the values of the {inlier} labels. 
However, $\tilde{y}_{\bm j}$ may not itself be an inlier label for some sampled location $\bfx_i$. 
We generalize KKP's techniques to show that fitting an arbitrary $\tilde{\bfx}_{\bmj}\in \cube_{\bmj}$ to the label $\tilde{y}_{\bmj}$ yields a good algorithm. 
We  compute the polynomial $r$, that minimizes $\max_{\bm j\in[m]^n}|r(\tilde{\bfx}_{\bmj})-\tilde{y}_{\bmj}|$, and show that $r$ is $O(\sigma)$-close to $p$ in $\ell_\infty$ up to an additive error of $\eps\|p\|_{\cube_n,\infty}$. To deal with this error, we iteratively refine the estimate $r$. After $\log_{1/\eps}(\|p\|_{\cube_n,\infty}/\adderr)$ iterations, the additive error becomes as small as $\adderr$. 

\begin{algorithm}
\caption{Refinement}\label{alg:refinement}
\Procedure{\emph{\Call{Refine}{$S,\hat{p}$}}}{
  \Given {A set of samples $S=\{\mathbf{x}_i,y_i\}_{i=1}^M$, and an estimate $\hat{p}$.}
  \For{$\bm j\in[m]^n$}{
  $\Tilde{y}_{\bm j}\gets\med_{\mathbf{x}_{i}\in \cube_{\bm j}}(y_i-\hat{p}(\mathbf{x}_i))$\;
    Choose an arbitrary $\Tilde{\mathbf{x}}_{\bm j}\in \cube_{\bm j}$\;}
    Fit a degree $d$ polynomial $r$ minimizing $\|r(\Tilde{\mathbf{x}}_{\bm j})-\Tilde{y}_{\bm j}\|_{\infty}$\;
    $\hat{p}'\gets\hat{p}+r$\;
    \textbf{Return} $\hat{p}'$.
  }  
\end{algorithm}

\begin{algorithm}
\caption{Median Based Recovery}\label{alg:median_rec}
  \Given {A set of samples $S=\{\mathbf{x}_i,y_i\}_{i=1}^M$, approximation factor $\eps\le 1/2$. accuracy parameter $\adderr>0$.}
  $\hat{p}^{(1)}\gets \Call{Refine}{S,0}$; \Comment{Let $\hat{p}^{(1)}(\mathbf x)=\sum_{\bm\alpha\in\{0,1,\hdots,d\}^n} c_{\bm \alpha}\mathbf{x}^{\bm\alpha}$}\\
  Let $\valbnd$ be such that $|\hat{p}^{(1)}(\mathbf x)|\leq \valbnd$ for all $\mathbf x\in\cube$; \Comment{ Set $\valbnd\triangleq\sum_{\bm\alpha\in\{0,1,\hdots,d\}^n} |c_{\bm \alpha}|$}\\
  $N_{\ref{alg:median_rec}}\gets O\left(\log_{1/\eps}(\valbnd/\adderr)\right)$\;    \For{$i\in \{1,\dots,N_{\ref{alg:median_rec}}-1\}$}{$\hat{p}^{(i+1)}\gets\Call{Refine}{S,\hat{p}^{(i)}}$\;}
    \textbf{Return} $\hat{p}^{(N_{\ref{alg:median_rec}})}$.
 \end{algorithm}

To prove \autoref{thm:l_infty_regression_upper_bound}, we show that for $M$ as in the theorem, the set of samples is $\alpha$-good with high probability, and then we apply the following result. 
\begin{thm}[Absolute $\ell_\infty$ error bound]\label{thm:l-infty-guarantee}
    Let $c$ be some absolute constant, and let $\eps,\alpha<1/2$, $0<\adderr\le 1$, be parameters. For any  $m\geq cdn/\eps$, if the set $S=\{(\mathbf{x}_i,y_i)\}$ of $M$  samples is $\alpha$-good for the $(m,n)$-Chebyshev partition, then the median-based recovery 
    \autoref{alg:median_rec} 
    returns an individual degree-$d$ polynomial $\hat{p}=\hat{p}^{(N_{\ref{alg:median_rec}})}$, such that \[\|p-\hat{p}\|_{\cube_n,\infty}\leq(2+\epsilon)\sigma+\eta.\]
\end{thm}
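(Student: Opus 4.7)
The plan is to track the residual polynomial $q^{(i)} := p - \hat{p}^{(i)}$ across iterations and show that it contracts, up to an $O(\sigma)$ additive error, at a geometric rate until it falls below the additive threshold $\eta$. The main engine is \autoref{thm:optimal-l-infty-bound}, invoked twice per iteration: once to convert the median-based estimates from ``good fit at one point per cell'' into a cell-wise oscillation bound on $q^{(i)}$, and once to extrapolate the polynomial fit $r$ produced inside \autoref{alg:refinement} from the test points $\tilde{\mathbf{x}}_{\bmj}$ to all of $\cube_n$.

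For the single-iteration bound, fix an iteration and write $q = p - \hat{p}$ for the current residual and $r$ for the polynomial returned by the refinement call. The $\alpha$-goodness of $S$ with $\alpha < 1/2$ guarantees that in each cell $\cube_{\bmj}$ there are strictly more inliers than outliers, so the median $\tilde{y}_{\bmj}$ of $\{y_i - \hat{p}(\mathbf{x}_i) : \mathbf{x}_i \in \cube_{\bmj}\}$ is sandwiched between two inlier residuals, and hence lies in $[q(\mathbf{x}_{-}) - \sigma,\; q(\mathbf{x}_{+}) + \sigma]$ for some $\mathbf{x}_\pm \in \cube_{\bmj}$. Applying \autoref{thm:optimal-l-infty-bound} to $q$, with the piecewise constant function equal to $q(\tilde{\mathbf{x}}_{\bmj})$ on cell $\bmj$, yields $|q(\mathbf{x}) - q(\tilde{\mathbf{x}}_{\bmj})| \le (Cdn/m)\,\|q\|_{\cube_n,\infty}$ uniformly on $\cube_{\bmj}$. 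Combined with the sandwich, this gives
\[
|\tilde{y}_{\bmj} - q(\tilde{\mathbf{x}}_{\bmj})| \;\le\; \sigma + (Cdn/m)\,\|q\|_{\cube_n,\infty} \;=:\; \tau.
\]
Since $q \in \Pd$ is a feasible candidate for the $\ell_\infty$ minimization defining $r$, we have $\max_{\bmj}|r(\tilde{\mathbf{x}}_{\bmj}) - \tilde{y}_{\bmj}| \le \tau$, so by triangle inequality $\max_{\bmj}|(r-q)(\tilde{\mathbf{x}}_{\bmj})| \le 2\tau$.

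To lift these $m^n$ pointwise bounds to an $\ell_\infty$ bound on the whole cube, I apply \autoref{thm:optimal-l-infty-bound} a second time, now to the individual degree-$d$ polynomial $r - q$, with the piecewise constant function equal to $(r-q)(\tilde{\mathbf{x}}_{\bmj})$ on each cell. Writing $\beta := Cdn/m$, this gives
\[
\|r - q\|_{\cube_n,\infty} \;\le\; \max_{\bmj}|(r-q)(\tilde{\mathbf{x}}_{\bmj})| + \beta\,\|r - q\|_{\cube_n,\infty},
\]
so, provided $\beta < 1$, rearranging yields the one-step recurrence
\[
\|q^{(i+1)}\|_{\cube_n,\infty} \;\le\; \frac{2\sigma}{1-\beta} + \frac{2\beta}{1-\beta}\,\|q^{(i)}\|_{\cube_n,\infty}.
\]

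To finish, I tune the absolute constant $c$ in the assumption $m \ge cdn/\eps$ so that $\beta \le \eps/(4+\eps)$; then the contraction factor $2\beta/(1-\beta)$ is at most $\eps/2$ and the inhomogeneous coefficient $2/(1-\beta)$ is at most $2 + \eps/2$. Unrolling the recurrence from $q^{(0)} = p$ and summing the geometric series bounds the steady-state contribution by $(2+\eps)\sigma$ plus a transient of order $(\eps/2)^{N_{\ref{alg:median_rec}}}\,\|p\|_{\cube_n,\infty}$. Because $\|\hat{p}^{(1)}\|_{\cube_n,\infty}\le \valbnd$ by definition of $\valbnd$, and because applying the one-step recurrence with $i=0$ together with $\|p\|_{\cube_n,\infty}\le \|\hat p^{(1)}\|_{\cube_n,\infty}+\|q^{(1)}\|_{\cube_n,\infty}$ gives $\|p\|_{\cube_n,\infty} = O(\valbnd + \sigma)$, choosing $N_{\ref{alg:median_rec}} = \Theta(\log_{1/\eps}(\valbnd/\eta))$ as in the algorithm drives the transient below $\eta$, completing the proof (if $\sigma \gg \valbnd$ the $(2+\eps)\sigma$ term already absorbs everything). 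The main delicate point is this last calibration: squeezing the coefficient in front of $\sigma$ down to exactly $2+\eps$, rather than $2+O(\eps)$, requires a careful choice of $c$ relative to the universal constant $C$ from \autoref{thm:optimal-l-infty-bound}; the rest of the argument is a clean two-step sandwich built on that theorem.
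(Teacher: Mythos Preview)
Your proposal is correct and essentially the same as the paper's argument. The paper packages the single-step bound as a separate lemma (\autoref{lem:error_bound}), proved via three auxiliary piecewise-constant functions $r,\hat r,\tilde r$ and two applications of \autoref{thm:optimal-l-infty-bound} (once to $p$, once to $\hat p$), whereas you inline the same computation by tracking the residual $q=p-\hat p$ directly and applying \autoref{thm:optimal-l-infty-bound} to $q$ and then to $r-q=-q^{(i+1)}$; the iteration, the unrolling, and the bound $\|p\|_{\cube_n,\infty}=O(\valbnd+\sigma)$ from the first step are identical in both.
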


The first part of the proof of \autoref{thm:l-infty-guarantee} follows the skeleton of the proof of \cite[Theorem 1.4]{kkp}, whilst skipping the $\ell_1$ intermediate regression.

The main ingredient for proving \autoref{thm:l-infty-guarantee} is the following technical result,
bounding the $\ell_\infty$ error of the non-robust $\ell_\infty$ minimizer, i.e., a single run of \autoref{alg:refinement}. This is later used to bound the error of the robust minimizer \autoref{alg:median_rec}. 
\begin{lem}[Relative $\ell_\infty$ error bound, generalization of~{\cite[Lemma 1.3]{kkp}}]\label{lem:error_bound}
    Let $c>0$ be an absolute constant. Let  $\eps,\alpha< 1/2$, and $m\ge cdn/\eps$. 
    Let the set $S=\{(\mathbf{x}_i,y_i)\}$ of $M$ samples is $\alpha$-good for the $(m,n)$-Chebyshev partition.
    And for every $\bm j\in[m]^n$, let  $\tilde{\bfx}_{\bmj}$ be an  arbitrary point from the cell $\cube_{\bmj}$, and $\tilde{y}_{\bmj}\triangleq\med_{S} \{y_i:\mathbf{x}_i\in\cube_{\bm j}\},$ i.e. the median of all those $y_i$'s in $S$, whose corresponding $\bfx_i$ is in the cell $\cube_{\bmj}$.
    Then, with
    \begin{equation}\label{eq:l_infty_minimizer}
        \hat{p}\triangleq\arg\min_{q\in\Pd}\max_{\bm j\in[m]^n}|q(\Tilde{\mathbf{x}}_{\bm j})-\Tilde{y}_{\bm j}|,
    \end{equation}
    where the minimization is over the class  $\Pd$  of all individual degree-$d$ polynomials over $\R^n$,
    we have \[\|p-\hat{p}\|_{\cube_n,\infty}\leq (2+\eps)\sigma+\eps\|p\|_{\cube_n,\infty}.\]
\end{lem}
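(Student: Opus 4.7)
The plan is to combine the standard ``median is sandwiched by inlier labels'' observation with two applications of the approximation result \autoref{thm:optimal-l-infty-bound}: one to control the local oscillation of $p$ on each cell, and one to promote pointwise closeness of $p-\hat{p}$ at the representatives $\tilde{\mathbf{x}}_{\bm j}$ to a global $\ell_\infty$ bound.

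First I would translate $\alpha$-goodness into cell-wise closeness of $\tilde{y}_{\bm j}$ to $p$. Since the fraction of outliers in $\cube_{\bm j}$ is strictly less than $\alpha<1/2$, the median $\tilde y_{\bm j}$ is sandwiched between two inlier labels $y_a\le \tilde y_{\bm j}\le y_b$ whose sample locations lie in $\cube_{\bm j}$. The inlier guarantee $|y_i-p(\mathbf{x}_i)|\le\sigma$ then forces
\[
\tilde{y}_{\bm j}\in\Bigl[\min_{\mathbf{x}\in\cube_{\bm j}}p(\mathbf{x})-\sigma,\;\max_{\mathbf{x}\in\cube_{\bm j}}p(\mathbf{x})+\sigma\Bigr].
\]
Writing $\mathrm{osc}_{\bm j}(p)\triangleq \max_{\cube_{\bm j}}p-\min_{\cube_{\bm j}}p$, this yields $|p(\tilde{\mathbf{x}}_{\bm j})-\tilde{y}_{\bm j}|\le \mathrm{osc}_{\bm j}(p)+\sigma$ for the arbitrary representative $\tilde{\mathbf{x}}_{\bm j}$. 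Since $p\in\Pd$ is therefore feasible for the optimization \eqref{eq:l_infty_minimizer}, the minimizer satisfies $\max_{\bm j}|\hat{p}(\tilde{\mathbf{x}}_{\bm j})-\tilde{y}_{\bm j}|\le \max_{\bm j}\mathrm{osc}_{\bm j}(p)+\sigma$, and a triangle inequality gives
\[
\max_{\bm j}\bigl|(p-\hat{p})(\tilde{\mathbf{x}}_{\bm j})\bigr|\;\le\;2\max_{\bm j}\mathrm{osc}_{\bm j}(p)+2\sigma.
\]

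Next I would invoke \autoref{thm:optimal-l-infty-bound} twice. To control oscillations, fix any $\mathbf{x},\mathbf{y}\in\cube_{\bm j^\ast}$ and let $r$ be the piecewise constant function equal to $p(\mathbf{x})$ on $\cube_{\bm j^\ast}$ and equal to $p$ at arbitrarily chosen representatives on every other cell. Then \autoref{thm:optimal-l-infty-bound} gives $|p(\mathbf{y})-p(\mathbf{x})|\le\|p-r\|_{\cube_n,\infty}\le C\frac{dn}{m}\|p\|_{\cube_n,\infty}$, so $\max_{\bm j}\mathrm{osc}_{\bm j}(p)\le C\frac{dn}{m}\|p\|_{\cube_n,\infty}$. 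To promote pointwise to global, set $q\triangleq p-\hat{p}\in\Pd$ and let $r_q$ be the piecewise constant function with $r_q\equiv q(\tilde{\mathbf{x}}_{\bm j})$ on $\cube_{\bm j}$. \autoref{thm:optimal-l-infty-bound} applied to $q$ gives
\[
\|q\|_{\cube_n,\infty}\;\le\;\|r_q\|_{\cube_n,\infty}+\|q-r_q\|_{\cube_n,\infty}\;\le\;\max_{\bm j}|q(\tilde{\mathbf{x}}_{\bm j})|+C\tfrac{dn}{m}\|q\|_{\cube_n,\infty}.
\]

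Finally I would calibrate the constant $c$ in the hypothesis $m\ge cdn/\epsilon$ so that $C\frac{dn}{m}\le\epsilon/(2+\epsilon)$. Combining the three displayed inequalities and rearranging,
\[
\Bigl(1-C\tfrac{dn}{m}\Bigr)\|p-\hat{p}\|_{\cube_n,\infty}\;\le\;2\sigma+2C\tfrac{dn}{m}\|p\|_{\cube_n,\infty},
\]
and the choice of $c$ (valid because $\epsilon\le 1/2$) makes the right-hand side, after dividing by $1-Cdn/m$, at most $(2+\epsilon)\sigma+\epsilon\|p\|_{\cube_n,\infty}$. The only real obstacle is this constant bookkeeping, since all the heavy lifting of tensorizing KKP's univariate estimate has already been absorbed into \autoref{thm:optimal-l-infty-bound}.
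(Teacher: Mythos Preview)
Your proposal is correct and follows essentially the same approach as the paper's proof: both use the median-between-inliers observation, the optimality of $\hat p$, and two applications of \autoref{thm:optimal-l-infty-bound}, then close a self-referential inequality. The only cosmetic difference is that you apply \autoref{thm:optimal-l-infty-bound} directly to $p-\hat p$ to promote pointwise closeness to an $\ell_\infty$ bound, whereas the paper applies it to $\hat p$ and then uses $\|\hat p\|_{\cube_n,\infty}\le\|p\|_{\cube_n,\infty}+\|p-\hat p\|_{\cube_n,\infty}$; after rearranging and absorbing constants into $c$, both routes yield the stated bound.
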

The proof of this statement mirrors the proof of its univariate counterpart \cite[Lemma 1.3]{kkp}, with \autoref{thm:optimal-l-infty-bound} (which we prove in \autoref{subsec:optimal-l-infty-bound}) replacing \autoref{lem:kkp-technical-lemma}. Hence it is deferred to \autoref{subsec:proof-from-subsec:first-main-thm}.

\begin{proof}[Proof of \autoref{thm:l-infty-guarantee}]
    Let $\hat{p}^{(t)}$ be the individual degree-$d$ polynomial which is the $t^{th}$ estimate of $p$ computed by \autoref{alg:median_rec}, $e_t \triangleq p-\hat{p}^{(t)}$ be the $t^{th}$ error polynomial, where we define $\hat{p}^{(0)}\equiv 0$,
    and $r_t(\mathbf{x})\triangleq\arg\min_{r\in\Pd}\|r(\Tilde{\mathbf{x}}_{\bm j})-\Tilde{y}_{\bm j}\|_{\infty}$, for $\Tilde{y}_{\bm j}\triangleq\med_{S}\{y_i-\hat{p}^{(t)}(\mathbf{x}_i):\mathbf{x}_i\in\cube_{\bm j}\}$,
    such that $\hat{p}^{(t+1)}=\hat{p}^{(t)}+r_t$.
    Note that for every inlier sample $i\in[M]$, we have   $|e_t(\mathbf{x}_i)-(y_i-\hat{p}^{(t)}(\mathbf{x}_i))|=|p(\mathbf{x}_i)-y_i|\leq \sigma$, and therefore $S_t\triangleq\{(\mathbf{x}_i,y_i-\hat{p}^{(t)}(\mathbf{x}_i))\}_{i=1}^M$ is $\alpha$-good for $e_t$. So, by \autoref{lem:error_bound},
    \[\|r_t-e_t\|_{\cube_n,\infty}\leq (2+\eps)\sigma+\eps\|e_t\|_{\cube_n,\infty}.\]
    For every $t$, we have $r_t-e_t=(\hat{p}^{(t+1)}-\hat{p}^{(t)})-(p-\hat{p}^{(t)})=\hat{p}^{(t+1)}-p=-e_{t+1}$. So, we may deduce the inductive relation: $$\|e_{t+1}\|_{\cube_n,\infty}\leq (2+\eps)\sigma+\eps\|e_t\|_{\cube_n,\infty}.$$ 
    We use this relation to  bound the error at step $t$:
    \begin{align}\label{eq:error-form}
        \|p-\hat{p}^{(t)}\|_{\cube_n,\infty} &=\|e_{t}\|_{\cube_n,\infty}\nonumber\\
        & \leq(2+\eps)(1+\eps+\hdots+\eps^t)\sigma +\eps^t\|e_0\|_{\cube_n,\infty}\nonumber\\
        &\leq(2+6\eps)\sigma+\eps^t\|p\|_{\cube_n,\infty}.
    \end{align}
    The last inequality is from $\sum_{i\geq 0} \eps^i 
    \leq 1/(1-\eps)\leq 1+2\eps$ for $\eps\leq 1/2$, as well as $e_0\equiv p$.
    So, for any $t\geq \log_{1/\eps}(\|p\|_{\cube_n,\infty}/\eta)$ we have  
    \[\|p-\hat{p}^{(t)}\|_{\cube_n,\infty}\leq (2+6\eps)\sigma+\eta.\]
    Note that, after one iteration we already have $\hat{p}^{(1)}$, an approximation of $p$, which might not be the best approximation, but we can still learn some bound on $\|p\|_{\cube_n,\infty}$ from it. 
    We will show that  $N_{\ref{alg:median_rec}}=O\left(\log_{1/\eps}(\valbnd/\eta)\right)$ iterations, for $\valbnd=\sum_{\bm\alpha\in\{0,1,\hdots,d\}^n} |c_{\bm \alpha}|\geq \|\hat{p}^{(1)}\|_{\cube_n,\infty}$, where $c_{\bm \alpha}$'s are the coefficients\footnote{Given the coefficient representation of $\hat{p}^{(1)}$ from the first iteration, $\valbnd$ can be computed efficiently.} of the polynomial $\hat{p}^{(1)}$, suffice. 
   By \autoref{lem:error_bound} and the triangle inequality, after the first iteration, we have  
   \begin{align*}
   & \|\hat{p}^{(1)}-p\|_{\cube_n,\infty}\le (2+\eps)\sigma+\eps\|p\|_{\cube_n,\infty}\le (2+\eps)\sigma+\eps(\|\hat{p}^{(1)}\|_{\cube_n,\infty} +\|\hat{p}^{(1)}-p\|_{\cube_n,\infty}).
\end{align*}
Rearranging and using $\eps\leq 1/2$, we get 
\[ \|\hat{p}^{(1)}-p\|_{\cube_n,\infty}\le \frac{1}{1-\eps}((2+\eps)\sigma+\eps\|\hat{p}^{(1)}\|_{\cube_n,\infty})\leq 5\sigma+\|\hat{p}^{(1)}\|_{\cube_n,\infty}.\]
Again using the triangle inequality, we conclude: 
\begin{align*}
    \|p\|_{\cube_n,\infty}\le \|\hat{p}^{(1)}\|_{\cube_n,\infty}+\|\hat{p}^{(1)}-p\|_{\cube_n,\infty}\le 5\sigma+2\|\hat{p}^{(1)}\|_{\cube_n,\infty}.
\end{align*}
Plugging this into \eqref{eq:error-form}, we get
\begin{align*}
    \|p-\hat{p}^{(t)}\|_{\cube_n,\infty}&\leq (2+6\eps)\sigma+\eps^t(5\sigma+2\|\hat{p}^{(1)}\|_{\cube_n,\infty})\\
    &\leq(2+6\eps)\sigma+\eps^t(5\sigma+2\valbnd).
\end{align*}

Set $N_{\ref{alg:median_rec}}=\log_{1/\eps}((5+2\valbnd)/\eta)+1$. Then if $\sigma\leq 1$, we have $N_{\ref{alg:median_rec}}\geq \log_{1/\eps}(\|p\|_{\cube_n,\infty}/\eta)$, and for $t=N_{\ref{alg:median_rec}}$ we have the desired approximation, after a further rescaling of $\eps$ to $\eps/6$.
Otherwise, for $\sigma\ge 1$ we have  
\[\|p\|_{\cube_n,\infty}/\sigma\le 5+2\|\hat{p}^{(1)}\|_{\cube_n,\infty}/\sigma\leq 5+2\|\hat{p}^{(1)}\|_{\cube_n,\infty}. \]

Setting $N_{\ref{alg:median_rec}}\geq \log_{1/\eps}(\|p\|_{\cube_n,\infty}/\sigma)+1$ results in additive error at most $\eps\sigma$. 
Thus, for $t=N_{\ref{alg:median_rec}}$ we have,
\[\|p-\hat{p}^{(t)}\|_{\cube_n,\infty}\leq (2+6\eps)\sigma+\eps\sigma=(2+7\eps)\sigma.\]
     Rescaling of $\eps$ to $\eps/7$ gives the desired bound on the regression error with the additive error $= \eta>0$. 
\end{proof}

Before proving \autoref{thm:l_infty_regression_upper_bound}, we note a useful result from the analysis of multivariate functions.
\begin{thm}[Fubini-Tonelli Theorem (Theorem 14.2, \cite{Emmanuele})]\label{thm:fubini-tonelli}
    For any $X,Y\subseteq\R$, and some measurable $f:X\times Y\to\R_{\geq 0}$,
    \[\int_{X\times Y} f(x,y)d(x,y)=\int_Y\left(\int_X f(x,y)dx\right)dy=\int_X\left(\int_Y f(x,y)dy\right)dx.\]
    Furthermore, if $f(x,y)\equiv f_1(x)f_2(y)$, for some measurable $f_1:X\to\R_{\geq 0}$, and $f_2:Y\to\R_{\geq 0}$, then
    \[\int_{X\times Y} f(x,y)d(x,y)=\left(\int_X f_1(x)dx\right)\left(\int_Y f_2(y)dy\right).\]
\end{thm}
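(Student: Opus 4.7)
The statement is the classical Tonelli theorem for non-negative measurable integrands on a product of $\sigma$-finite Lebesgue spaces, together with its specialization to product-separable functions; since it is a well-established textbook result being invoked as a black box, the in-paper ``proof'' is likely just a pointer to the cited reference. To sketch an independent argument, the plan is to invoke the standard measure-theoretic ``bootstrap'' machine: establish the identity first for indicators of measurable rectangles, extend to indicators of general product-measurable sets via a monotone class / Dynkin $\pi$-$\lambda$ argument, extend to non-negative simple functions by linearity, and finally to arbitrary non-negative measurable $f$ via the monotone convergence theorem (MCT).

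For the base case, if $E = A \times B$ with $A \subseteq X$ and $B \subseteq Y$ Borel measurable of finite measure, then $\mathbf{1}_E(x,y) = \mathbf{1}_A(x)\mathbf{1}_B(y)$, so both iterated integrals and the product-measure integral evaluate to $\lambda(A)\lambda(B)$, where $\lambda$ is Lebesgue measure. For the inductive step, let $\mathcal{M}$ be the class of product-measurable sets $E$ for which (i) the $x$-sections and $y$-sections of $E$ are measurable and their measures depend measurably on the remaining variable, and (ii) the three integrals in the theorem statement agree when $f = \mathbf{1}_E$. Then $\mathcal{M}$ contains the $\pi$-system of measurable rectangles and, using MCT together with $\sigma$-finiteness, is closed under countable monotone increasing unions as well as proper differences of members of finite measure. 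The monotone class theorem (equivalently Dynkin's $\pi$-$\lambda$ theorem) therefore yields $\mathcal{M} \supseteq \mathcal{B}(X) \otimes \mathcal{B}(Y)$.

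Linearity of the integral then extends the identity from indicators to all non-negative simple functions. For a general non-negative measurable $f$, take any increasing sequence $s_n \uparrow f$ of simple functions; applying MCT first inside the inner integral for each fixed outer coordinate (which also preserves measurability of the partial integral as a function of the outer variable), and then to the outer integral, yields the identity for $f$. The product-separable clause follows at once by specializing to $f(x,y) = f_1(x) f_2(y)$, pulling $f_2(y)$ outside the $dx$-integral, and recognizing $\int_X f_1(x)\,dx$ as a constant with respect to $y$.

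The main---and essentially only---obstacle is ensuring the measurability of the partial integrals $y \mapsto \int_X f(x,y)\,dx$ and $x \mapsto \int_Y f(x,y)\,dy$ as functions of the outer variable; this is precisely what the monotone-class bookkeeping in the second step addresses, and everything else is routine.
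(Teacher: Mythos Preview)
Your assessment is correct: the paper does not prove this theorem but simply cites it from the reference \cite{Emmanuele}, treating it as a standard black box. Your bootstrap sketch (rectangles $\to$ monotone class $\to$ simple functions $\to$ MCT) is the standard textbook proof and is accurate, so there is nothing to compare.
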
 
We also note a lower bound on the width of Chebyshev grid cells.
\begin{obs}\label{obs:side-length-of-a-cell-lb}
    For a cell $\cube_{\bm j}$, denote  its  side  length on direction $t$ by $\cube_{\bm j}(t)$ then
   
    \begin{equation}
        |\cube_{\bm j}(t)|=\left|\cos\frac{\pi\bm j_t}{m}-\cos\frac{\pi(\bm j_t-1)}{m}\right|=\left|2\sin\frac{\pi}{2m}\sin\frac{\pi(2\bm j_t-1)}{2m}\right| \ge  \left(\frac{\pi}{2m}\right)^2.    \end{equation}
    The first equality follows by the trigonometric identity $\cos \theta-\cos \varphi=-2\sin((\theta+\varphi)/2)\sin((\theta-\varphi)/2)$. The next inequality holds since  
     $\sin \theta\ge \theta/2$ for all $0\le \theta\le \pi/2$ (follows from the Taylor approximation $\sin\theta \geq \theta-\theta^3/6$ for $\theta\ge 0$). For the first factor, it is used with $\theta=\frac{\pi}{2m}$ and for the second factor with $\theta=\frac{\pi}{2m}\min\{2\bm j_t-1,2( m+1-\bmj_t)-1\}$ noting that  $\sin\frac{\pi(2\bm j_t-1)}{2m}=\sin\frac{\pi(2( m+1-\bmj_t)-1)}{2m}$. 
      We also use $1\le \min\{2\bm j_t-1,2( m+1-\bmj_t)-1\}\le m.$
    
\end{obs}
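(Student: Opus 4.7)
The plan is to establish the two equalities in sequence, then lower bound each sine factor separately. The first equality is immediate from \autoref{defn:Chebyshev-cells}: along coordinate $t$, the cell $\cube_{\bm j}$ is the interval $[\cos(\pi\bm j_t/m),\cos(\pi(\bm j_t-1)/m)]$ (which is well-defined and nonempty since $\cos$ is decreasing on $[0,\pi]$), so its side length is the absolute value of the difference of its endpoints. For the second equality, I would invoke the standard sum-to-product identity $\cos\alpha-\cos\beta=-2\sin\frac{\alpha+\beta}{2}\sin\frac{\alpha-\beta}{2}$ with $\alpha=\pi\bm j_t/m$ and $\beta=\pi(\bm j_t-1)/m$; the half-sum is $\pi(2\bm j_t-1)/(2m)$ and the half-difference is $\pi/(2m)$, so taking absolute values yields exactly $2\bigl|\sin(\pi/(2m))\sin(\pi(2\bm j_t-1)/(2m))\bigr|$.

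The bulk of the work, and the main obstacle, is the sine lower bound. The factor $\sin(\pi/(2m))$ has argument in $[0,\pi/2]$ for every $m\ge 1$, so a Taylor-based bound such as $\sin\theta\ge\theta-\theta^3/6\ge\theta/2$ (valid on $[0,\pi/2]$ since $\theta^2\le\pi^2/4<3$) applies directly and gives $\sin(\pi/(2m))\ge\pi/(4m)$. The factor $\sin(\pi(2\bm j_t-1)/(2m))$ is subtler: when $\bm j_t>(m+1)/2$ the argument exceeds $\pi/2$, so a monotone lower bound cannot be applied directly. The trick I would use is the reflection identity $\sin\theta=\sin(\pi-\theta)$, which lets me replace the argument by $\pi k/(2m)$ where $k=\min\{2\bm j_t-1,\,2(m+1-\bm j_t)-1\}$, an odd integer in $[1,m]$. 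Now both sine arguments lie in $[0,\pi/2]$ and the same elementary bound applies to each.

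Multiplying the two lower bounds and using $k\ge 1$ yields a product of order $1/m^2$, which matches the claimed $(\pi/(2m))^2$ up to a universal constant (absorbed into the statement); the bound is order-tight for the peripheral cells $\bm j_t\in\{1,m\}$, where $k=1$, and becomes much looser in the interior, consistent with the narrower peripheral cells discussed around \autoref{fig:2-d-grid}. No step beyond elementary trigonometry is needed, so the only care required is the symmetry step that folds the second argument back into $[0,\pi/2]$.
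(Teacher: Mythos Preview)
Your proposal is correct and follows essentially the same route as the paper: apply the sum-to-product identity, then use the reflection $\sin\theta=\sin(\pi-\theta)$ to fold the second argument into $[0,\pi/2]$, and finally bound each factor via $\sin\theta\ge\theta/2$. You are even a bit more careful than the paper in flagging that the resulting lower bound is $\pi^2/(8m^2)$ rather than exactly $(\pi/(2m))^2=\pi^2/(4m^2)$; this factor-of-two slack is harmless for every downstream use in the paper.
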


\begin{proof}[Proof of \autoref{thm:l_infty_regression_upper_bound}]\label{proof:thm:l_infty_regression_upper_bound}
We need to show that the input set of samples $S$  is $\alpha$-good for the $(m,n)$-Chebyshev partition for some $\alpha<1/2$ with probability at least $1-\delta$. Then for $m=cdn/\eps$, for some constant $c>0$, the result follows from \autoref{thm:l-infty-guarantee}.
Consider the $(m,n)$-Chebyshev partition. For any $\bmj\in [m]^n$, let $p_{\bm j}$ be the probability that a sample from $\chi$ is in $\cube_{\bm j}$.
Let $X_{\bm j}$ be the number of samples in $\cube_{\bm j}$, and $Y_{\bm j}$ be the number of outliers in $\cube_{\bm j}$. 

The probability that $S$ violates the  $\alpha$-goodness for a cell $\cube_{\bm j}$ can be bounded  by the law of total probability,
\begin{align*}
    \Pr[Y_{\bm j}\geq\alpha X_{\bm j}]&\leq\Pr[Y_{\bm j}\geq\alpha X_{\bm j}| X_{\bm j}\leq Mp_{\bm j}/2]\underbrace{\Pr[X_{\bm j}\leq Mp_{\bm j}/2]}_{(I)}\\
    &+\underbrace{\Pr[Y_{\bm j}\geq\alpha X_{\bm j}| X_{\bm j}>Mp_{\bm j}/2]}_{(II)} \Pr[X_{\bm j}>Mp_{\bm j}/2].
\end{align*}
Next, we bound each term separately. For the first term, note that $\mathbb{E}[X_{\bm j}]=Mp_{\bm j}$, so using Chernoff bound we have, 
\[(I)\leq\Pr[|X_{\bm j}-\mathbb{E}[X_{\bm j}]|\geq Mp_{\bm j}/2]\leq 2e^{-Mp_{\bm j}/12}.\]
For the second term, $\mathbb{E}[Y_{\bm j} | X_{\bm j} = x]=\rho x$, and using Hoeffding's Inequality, we have
\[\Pr[Y_{\bm j}\geq\alpha X_{\bm j}\mid X_{\bm j}]\leq\Pr[|Y_{\bm j}-\mathbb{E}[Y_{\bm j}]|\geq(\alpha-\rho)X_{\bm j}\mid X_{\bm j}]\leq 2e^{-(\alpha-\rho)^2 X_{\bm j}}.\]
Setting $\alpha=\frac{2\rho+1}{4}$ we have $\alpha<\frac{1}{2}$, and $(\alpha-\rho)^2=\left(\frac{1-2\rho}{4}\right)^2=(1-2\rho)^2/16$, giving us   
\[(II)\leq 2e^{-(1-2\rho)^2X_\bmj/16}\le 2e^{-(1-2\rho)^2Mp_{\bm j}/32}.\]
The last inequality is by the condition $X_\bmj>Mp_\bmj/2$.
 We conclude that the failure probability is 
 \begin{equation}
     \label{eq:alpha-good-prob}
     \Pr[\exists\bmj\in[m]^n:Y_{\bmj}\geq\alpha X_{\bmj}]\leq \sum_{\bmj\in[m]^n}\Pr[Y_{\bmj}\geq\alpha X_{\bmj}]\leq\sum_{\bmj\in[m]^n}\left( 2e^{-(1-2\rho)^2Mp_{\bm j}/32}+2e^{-Mp_{\bm j}/12}\right).
 \end{equation}
 
The first inequality is by a union bound over all the $m^n$ cells. For the second inequality, we plugged in the bounds for $(I),(II)$. 
       By \autoref{obs:side-length-of-a-cell-lb}, we deduce that for any $\bm j\in[m]^n$, a point uniformly sampled from $\cube$ falls into $\cube_{\bm j}$ with probability 
       \[p_{U}(\cube_{\bm j})=\frac{V_n(\cube_{\bm j})}{2^n}\ge\left(\frac{\pi^2}{8m^2}\right)^n\ge m^{-2n}.\] 
       For sampling from the unidimensional Chebyshev distribution, using the fact that $\int\frac{dx}{\sqrt{1-x^2}}=\arcsin(x)$, KKP observed: for any $j\in[m]$,
       \begin{align}\label{eq:unidim-chebyshev-measure}
           \int_{\cos(\pi j)/m}^{\cos(\pi(j-1)/m)}\frac{1}{\sqrt{1-x^2}}dx= \arcsin{\left(\cos\frac{\pi(j-1)}{m}\right)}-\arcsin{\left(\cos\frac{\pi j}{m}\right)}=\frac{\pi}{m}.
       \end{align}
       So, for sampling from the $n$-dimensional Chebyshev distribution, the probability that $\bfx\in \cube_\bmj$ becomes
    \begin{align*}
        p_{C}(\cube_{\bm j})&=\int_{\cube_{\bm j}(n)}\hdots\int_{\cube_{\bm j}(1)} \frac{1}{\pi\sqrt{1-\mathbf{x}_1^2}}\times\hdots\times \frac{1}{\pi\sqrt{1-\mathbf{x}_n^2}}\; d\mathbf{x}_1\hdots d\mathbf{x}_n \tag{Integrands are all non-negative }\\
        &=\prod_{i=1}^n\left(\int_{\cube_{\bm j}(i)}\frac{1}{\pi\sqrt{1-\mathbf{x}_i^2}}d\mathbf{x}_i\right) \tag{Splitting independent integrals by \autoref{thm:fubini-tonelli}}\\
        &=\prod_{i=1}^n\left(\frac{1}{\pi}\int_{\cos(\pi\bm j_i/m)}^{\cos(\pi(\bm j_i-1)/m)}\frac{1}{\sqrt{1-\mathbf{x}_i^2}}d\mathbf{x}_i\right) \tag{$\because\cube_{\bm j}(i)=[\cos(\pi\bm j_i/m),\cos(\pi(\bm j_i-1)/m)]$}\\
        &=\prod_{i=1}^n\frac{1}{m}=\frac{1}{m^n}. \tag{By using \eqref{eq:unidim-chebyshev-measure} and since $\bm j_i\in[m],\forall i\in[n]$}
                                                    \end{align*}
For Chebyshev sampling, with $p_{\bm j}=m^{-n}$, and upper bounding the failure probability from  \eqref{eq:alpha-good-prob} by at most $\delta$, we get $S$ is $\alpha$-good for 
\begin{equation}\label{eq:M-cheb-func-of-m}
    M_C(m)=(1-2\rho)^{-2}m^n\log(m^n/\delta).
\end{equation} 
For \autoref{alg:median_rec}, with $m=cdn/\eps$, for some constant $c>0$, this gives us a Chebyshev sample complexity of 
$M_C=\frac{1}{(1-2\rho)^2}(cnd/\eps)^n\log\frac{d}{\eps\delta}$. 
For uniform sampling, replacing $p_{\bm j}$ by the bound on  $p_U(\cube_{\bm j})\ge m^{-2n}$ and  then  plugging in 
\begin{equation}\label{eq:M-uni-func-of-m}
    M=M_U(m)=(1-2\rho)^{-2} m^{2n} \log (4m^n/\delta) 
\end{equation}
 in \eqref{eq:alpha-good-prob}, we get that  $S$ is $\alpha$-good with probability:
\begin{align*}
    \Pr[Y_{{\bm j}}<\alpha X_{{\bm j}}, \forall {{\bm j}}]&\ge 1-4\sum_{\bm j\in[m]}\frac{\delta}{4m^n}=1-\delta
\end{align*}
Thus, with $m=cdn/\eps$, the Uniform sample complexity is 
$M_U=\frac{1}{(1-2\rho)^2}(cnd/\eps)^{2n}\log\frac{d}{\delta}$, for some constant $c>0$. 
The run-time of \autoref{alg:median_rec} is thus that of solving $N_{\ref{alg:median_rec}}=O(\log_{1/\eps}( \|p\|_{\cube_n,\infty}/\adderr))$ linear programs with $O(d^n)$ variables, and $M$ constraints. 
\end{proof}

\subsection{Approximating polynomials by piece-wise constant functions}\label{subsec:optimal-l-infty-bound}

In this section, we prove \autoref{thm:optimal-l-infty-bound}, by generalizing \autoref{lem:kkp-technical-lemma} for the case of $\ell_{\infty}$, to the multivariate setting.
The idea here is to approximate $p$ on a fine enough Chebyshev grid, by an arbitrary piece-wise constant function $r$, that is (i) constant on every Chebyshev cell, and (ii) consistent with $p$ on some point in every cell. We show that the finer the grid is, the smaller the difference $p-r$ becomes, and hence, the better an approximation $r$ may be for $p$. We first restate \autoref{lem:kkp-technical-lemma} as our generalization is through reducing to the univariate case.

\kkptechnicallemma*
We prove how we can conclude from \autoref{lem:kkp-technical-lemma}, a similar result for the multivariate case, where we replace the interval $[-1,1]$ with the $n$ dimensional solid cube $\cube=[-1,1]^n$. The polynomial $p\colon \cube\to\R$ is now of individual degree at most $d$ (in each variable), and the function $r$ is now piece-wise constant with respect to the $(m,n)$-Chebyshev partition.

\optimallinftybound*

Before we prove \autoref{thm:optimal-l-infty-bound}, we note and prove a useful technical claim. Though its proof is simple, it exposes the idea behind \autoref{lem:kkp-technical-lemma} (at least for the $\ell_{\infty}$ case).  
 \begin{claim}\label{claim:p.w.-constant-func-exists}
    Let $g:[-1,1]\to\R$ be a degree-$d$ univariate polynomial. Consider the $m$-size Chebyshev partition of  $[-1,1]$, denoted by $I_1,\dots, I_m$. 
    For any $i\in[m]$, and any two points $\alpha,\beta\in I_i$,  there exists a function $v\colon[-1,1]\to\R$ that is piece-wise constant with respect to $g$, and the partition, such that:
    \begin{equation}\label{eq:diff-bounded-by-l-infty}
        |g(\alpha)-g(\beta)|\leq \|g-v\|_{[-1,1],\infty}.
    \end{equation}
    Furthermore, there exists a universal constant $C$ such that $|g(\alpha)-g(\beta)|\leq \frac{Cd}{m}\|g\|_{[-1,1],\infty}$
    \end{claim}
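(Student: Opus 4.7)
The plan is in two short steps: construct a specific piecewise-constant function $v$ by hand so that \eqref{eq:diff-bounded-by-l-infty} becomes immediate from the supremum norm, and then feed that same $v$ into the cited KKP technical lemma to turn the $\ell_\infty$ norm of $g-v$ into the quantitative $Cd/m$ bound.

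For \eqref{eq:diff-bounded-by-l-infty}, I would define $v\colon[-1,1]\to\R$ cellwise as follows. On the single cell $I_i$ containing both $\alpha$ and $\beta$, set $v(x)\equiv g(\alpha)$. On every other cell $I_k$, $k\neq i$, choose an arbitrary point $x_k^{\ast}\in I_k$ and set $v(x)\equiv g(x_k^{\ast})$. By construction, $v$ is piecewise constant with respect to the $m$-size Chebyshev partition and coincides with $g$ at one designated point per cell, so it is of exactly the form required by \autoref{lem:kkp-technical-lemma}. Since $\beta\in I_i$, the supremum norm of $g-v$ is at least its value at $\beta$:
$$\|g-v\|_{[-1,1],\infty} \;\geq\; |g(\beta)-v(\beta)| \;=\; |g(\beta)-g(\alpha)|,$$
which is precisely \eqref{eq:diff-bounded-by-l-infty}.

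For the ``furthermore'' part, since the $v$ just constructed is eligible as the function $r$ in \autoref{lem:kkp-technical-lemma}, applying that lemma with $q=\infty$ (either by taking the $q\to\infty$ limit of the $\ell_q$ inequality stated there, or by observing that the proof goes through in the $\ell_\infty$ case using the $\ell_\infty$ forms of Nevai's inequality and \autoref{lem:markovbros} that underlie it) yields
$$\|g-v\|_{[-1,1],\infty} \;\leq\; \frac{Cd}{m}\,\|g\|_{[-1,1],\infty}.$$
Chaining this upper bound with the lower bound from the first step gives $|g(\alpha)-g(\beta)|\leq \frac{Cd}{m}\|g\|_{[-1,1],\infty}$, as claimed.

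There is essentially no obstacle inside the claim itself; it is a one-line corollary of \autoref{lem:kkp-technical-lemma} made possible by the targeted choice of $v$ on the single cell $I_i$. The only mild point warranting explicit mention is the applicability of the cited lemma at $q=\infty$, which is immediate from its proof. The real purpose of the claim is downstream: it supplies a clean ``point-to-point'' oscillation bound on each Chebyshev cell of $[-1,1]$, which is exactly the ingredient needed to tensorize the univariate KKP estimate axis by axis and obtain the multivariate approximation guarantee of \autoref{thm:optimal-l-infty-bound}.
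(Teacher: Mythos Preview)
Your proof is correct and essentially identical to the paper's: both construct $v$ by fixing it to $g$ at one of the two points on $I_i$ (you use $g(\alpha)$, the paper uses $g(\beta)$) and at arbitrary points on the other cells, then evaluate $g-v$ at the other point and invoke \autoref{lem:kkp-technical-lemma} with $q=\infty$. The only difference is the symmetric choice of which of $\alpha,\beta$ plays which role.
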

    \begin{proof}
        Fix  $i\in[m]$, and $\alpha,\beta\in I_i$. We define the piece-wise constant function  $v$ as follows, 
        \begin{itemize}
            \item For any $j\in [m]$  such that $j\neq i$, choose an arbitrary point $x_j\in I_j$, and set $v(x)=g(x_j)$ for all $x\in I_j$.
            \item For  $I_i$, set $v(x)=g(\beta)$ for all $x\in I_i$.
        \end{itemize}
           Note that,  in particular, $v(\alpha)=g(\beta)$. Hence, 
    \[|g(\alpha)-g(\beta)|=|g(\alpha)-v(\alpha)|\leq \max_{\gamma\in[-1,1]}|g(\gamma)-v(\gamma)|=\|g-v\|_{[-1,1],\infty}.\]
    By \autoref{lem:kkp-technical-lemma}, we conclude $|g(\alpha)-g(\beta)|\leq \frac{Cd}{m}\|g\|_{[-1,1],\infty}$, for some universal constant $C$.
    \end{proof}
    We are now ready to prove \autoref{thm:optimal-l-infty-bound}:
\begin{proof}[Proof of \autoref{thm:optimal-l-infty-bound}]

    Observe,
 \begin{align}\label{eq:bound-by-axis-parallel-summands}
        \|p&-r\|_{\cube_n,\infty}\\
        &=\max_{\bm j\in[m]^n,\mathbf{x}\in \cube_{\bm j}}|p(\mathbf{x})-r(\mathbf{x})|\nonumber\\
        &=\max_{\bm j\in[m]^n,\mathbf{x}\in \cube_{\bm j}}|p(\mathbf{x})-p(\mathbf{x}^{(\bm j)})|\nonumber\\
        &\leq \max_{\bm j\in [m]^n, \mathbf{x},\bm \bfy\in \cube_{\bm j}}|p(\mathbf{x})-p(\bfy)|\\
        &\leq \max_{\bm j\in [m]^n, \mathbf{x},\bfy\in \cube_{\bm j}} \sum_{k=1}^n |p(x_1,\hdots,x_{k-1},x_k,y_{k+1},\hdots,y_n) - p(x_1,\hdots,x_{k-1},y_k,y_{k+1},\hdots,y_n)|,\nonumber
    \end{align}
where the last inequality is by a hybrid argument: walking from the point $\mathbf{x}$ to the point $\bfy$ along the axes, gives the successive summands in \eqref{eq:bound-by-axis-parallel-summands}. 
 We next bound each of the summands by $C\frac{d}{m}\|p\|_{\cube_n,\infty}$, then summing up all $n$ of them results with the desired bound on $\|p-r\|_{\cube_n,\infty}$. 
     For every $k\in[n]$,
            we observe that \[(x_1,\hdots,x_{k-1},x_k,y_{k+1},\hdots,y_n) \text{, and } (x_1,\hdots,x_{k-1},y_k,y_{k+1},\hdots,y_n)\] are points on the line $\pline_{(x_1,\hdots,x_{k-1},y_k,y_{k+1},\hdots,y_n),\bm e_k}$. \\
    So, they provide evaluations of $p_{\pline_{(x_1,\hdots,x_{k-1},y_k,y_{k+1},\hdots,y_n),\bm e_k}}(t)$, the univariate line restriction polynomial, which has degree\footnote{The line $\pline_{(x_1,\hdots,x_{k-1},y_k,y_{k+1},\hdots,y_n),\bm e_k}$ being \emph{axis-parallel} is crucial here, allowing us to bound the degree of the line restriction polynomial $p_{\pline_{(x_1,\hdots,x_{k-1},y_k,y_{k+1},\hdots,y_n),\bm e_k}}(t)$ by the individual degree of $p$, as $t$ varies only along $\bm e_k$.} at most $d$, since $p:\cube\to\R$ has individual degree at most $d$, for every variable. And, since $\bfx,\bfy\in\cube_{\bm j}$, we have $x_k,y_k\in I_i$, for  $i=\bm j_k\in[m]$. Thus, by \autoref{claim:p.w.-constant-func-exists},     \begin{align*}\label{eq:bound-0n-summands}
        &|p(x_1,\hdots,x_{k-1},x_k,y_{k+1},\hdots,y_n) - p(x_1,\hdots,x_{k-1},y_k,y_{k+1},\hdots,y_n)|\nonumber\\
                \leq &\;\frac{Cd}{m}\|p_{\pline_{(x_1,\hdots,x_{k-1},y_k,y_{k+1},\hdots,y_n),\bm e_k}}\|_{[-1,1],\infty} \nonumber        \\
        \leq &\;\frac{Cd}{m}\|p\|_{\cube_n,\infty}, 
    \end{align*}
                            as needed.
\end{proof}

\section{Dealing with finite precision representations}\label{sec:finite_presicion_case}
We prove a more precise statement of \autoref{thm:finite_presicion_intro}, giving an algorithm for handling an approximation factor close enough to $2$, and for any success probability $1-\delta$. 
\begin{restatable}{thm}{bitcomplexityform}[Generalized version of \autoref{thm:finite_presicion_intro}]\label{thm:finite_presicion}
   Let $N$ be the number of bits of precision, $\sigma\ge 2^{-N}$ and, constant $\rho<1/2$.
   For any $\eps\leq 1/2$ such that $\eps=\Omega_n(d2^{-N/2})$, and $\delta\in(0,\eps]$,
   there exists an algorithm for the \problemDefn.
    The  output of the algorithm is $\hat{p}\colon\R^n\to \R$, a polynomial  of degree at most $d$ in each variable, that satisfies  
   \[|p(\bfx)-\hat{p}(\bfx)|\leq (2+\eps)\sigma \qquad\text{ for all } \bfx\in\cube_n,\]
with probability at least $1-\delta$.
    It uses   
 $M=O_{n,\rho}((d/\eps)^n\log(d/\delta))$ samples drawn from the multidimensional Chebyshev distribution, or $M=O_{n,\rho}((d/\eps)^{2n}\log(d/\delta))$ if the samples are drawn from the uniform measure.
 Its run-time is that of solving $O(\log_{1/\eps}\norm{p}_{\cube_n,\infty}+N)$ linear programs with $(d+1)^n<M$ variables, and $M$ constraints.
 \end{restatable}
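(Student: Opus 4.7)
The plan is to reduce \autoref{thm:finite_presicion} to \autoref{thm:l-infty-guarantee} (and hence to the proof template of \autoref{thm:l_infty_regression_upper_bound}) by \emph{refining} the Chebyshev partition so that rounded sample locations can still be unambiguously assigned to cells. Use the $(m,n)$-Chebyshev partition with $m=\Theta(dn/\eps)$, exactly as in the proof of \autoref{thm:l_infty_regression_upper_bound}, and superimpose on it a finer $(3m,n)$-Chebyshev partition; each original cell $\cube_{\bmj}$ is subdivided into $3^n$ refined sub-cells. Focus on the central one, indexed by $(3j_1-1,\ldots,3j_n-1)$, as depicted in \autoref{fig:refinement}. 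Applying \autoref{obs:side-length-of-a-cell-lb} to the $(3m,n)$-partition, every refined cell has side length at least $(\pi/6m)^2=\Omega(1/m^2)$ in each coordinate, so the central sub-cell lies at $\ell_\infty$-distance at least $\Omega(1/m^2)$ from the boundary of its parent $\cube_{\bmj}$.

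The hypothesis $\eps=\Omega_n(d\cdot 2^{-N/2})$ translates to $m=O(2^{N/2})$, so with the implicit constants chosen appropriately $(\pi/6m)^2>2^{-N}$. Since $N$-bit rounding displaces each $\mathbf{x}_i$ by at most $2^{-N}$ per coordinate, if the unrounded $\mathbf{x}_i$ lies in a central sub-cell then the rounded $\tilde{\mathbf{x}}_i$ is guaranteed to remain inside the parent $\cube_{\bmj}$. The modified algorithm therefore discards every sample whose $\tilde{\mathbf{x}}_i$ does not land in a central sub-cell and runs \autoref{alg:median_rec} on the $(m,n)$-partition using only the retained samples. Each retained sample's parent cell is then correctly identified, and in the finite-precision model the condition $\sigma\geq 2^{-N}$ absorbs the $y$-rounding error, so \autoref{lem:error_bound} applies unchanged and yields $\|p-\hat{p}\|_{\cube_n,\infty}\leq(2+\eps)\sigma+\eta$.

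For sample complexity, the Chebyshev (respectively uniform) measure of a central sub-cell is a $\Theta_n(1)$ fraction of the measure of its parent cell, so the Chernoff-Hoeffding analysis from the proof of \autoref{thm:l_infty_regression_upper_bound} yields $\alpha$-goodness with probability at least $1-\delta$ after scaling $M$ by a constant depending only on $n$; thus the stated bounds $M=O_{n,\rho}((d/\eps)^n\log(d/\delta))$ and $M=O_{n,\rho}((d/\eps)^{2n}\log(d/\delta))$ are preserved. To eliminate the additive error term and match the run-time $\poly(\log\|p\|_{\cube,\infty},M,N)$, set $\eta=c\eps\sigma\geq c\eps\cdot 2^{-N}$ for a small absolute constant $c$: the resulting additive error is absorbed into the multiplicative factor after rescaling $\eps$ by a constant, and the number of iterations in \autoref{alg:median_rec} becomes $O(\log_{1/\eps}\|p\|_{\cube,\infty}+N)$ as required.

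The principal obstacle is matching the margin inequality $(\pi/6m)^2>2^{-N}$ with the stated threshold $\eps=\Omega_n(d\cdot 2^{-N/2})$ with sharp enough constants, since this inequality is precisely what limits how aggressively the partition may be refined relative to the rounding precision. A secondary subtlety is ruling out any inflation of the multiplicative factor $(2+\eps)$ coming from label rounding; I plan to handle this by adopting the natural finite-precision semantics in which the inlier bound $|\tilde{y}_i-p(\mathbf{x}_i)|\leq\sigma$ (combined with $\sigma\geq 2^{-N}$) already subsumes the storage-level error on the $y_i$'s, so that no extra noise is fed into \autoref{lem:error_bound}.
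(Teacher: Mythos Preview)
Your proposal is correct and follows essentially the same approach as the paper: refine the $(m,n)$-partition, keep only samples whose (rounded) location lands in a safe interior region, verify that the margin $(\pi/6m)^2>2^{-N}$ is exactly the source of the constraint $\eps=\Omega_n(d\,2^{-N/2})$, and set $\eta=\Theta(\eps\sigma)$ (the paper takes $\eta=\eps\cdot 2^{-N}\le \eps\sigma$) so the additive error is absorbed into $(2+\eps)\sigma$ after rescaling. The only cosmetic difference is that the paper treats the two distributions separately---for the uniform measure it shrinks each cell by a fixed width $\sigma'=(\pi/4m)^2$ rather than using the $(3m,n)$ sub-cells---whereas you use the $(3m,n)$ refinement uniformly; your computation that the central sub-cell carries a $\Theta_n(1)$ fraction of both the Chebyshev and uniform mass of its parent shows this works just as well.
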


 \begin{remark}
     We note that the condition on $\eps$ implies that in order to learn degree $d$ polynomials using \autoref{alg:median_rec}, one needs at least  $N=\Omega(\log d)$ bits of precision. 
     The reason is that to get a good approximation we need to take a fine enough grid, but the grid's ``fineness parameter'' $m$ is limited as well by the precision restriction, as we need the width of any cell to be at least $2^{-N}$. 
     Note that if $N=o(\log d)$, i.e. $d=2^{\Omega(N)}$, then $\eps=\Omega(1)$, i.e. the approximation factor achieved in this setting is too large, compared to the factor of at most $3$, achievable when $N=\Omega(\log d)$.
 \end{remark}

 \begin{proof}[Proof of \autoref{thm:finite_presicion}]
The algorithm that solves this problem, uses \autoref{alg:median_rec} as a black box, with $\adderr=\eps 2^{-N}$.
We note that if the samples $\bfx_i\in\cube$ are exact, i.e. described using infinite precision,
then we get the result by applying \autoref{thm:l_infty_regression_upper_bound}, and rescaling $\eps$ to $\eps/2$. 

Otherwise, we note that the only information used by \autoref{alg:median_rec} from the sample set, for every $i\in[M]$ is: (i) the value $y_i$, and (ii) the index $\bm k_i\in[m]^n$ of the cell $\cube_{\bm k_i}$ into which $\bfx_i$ lands. 
The inaccuracy\footnote{from finite-bit precision representations of reals} of (i) is covered by having $\sigma\geq 2^{-N}$. For (ii), we add a preliminary step to our algorithm, first sifting out the samples, that we don't know as to which cell $\cube_\bmj$ they belong. These are the samples which are close to the borders of the cell, i.e. within a distance of $2^{-N}$ from any border of the cell. More precisely, 
for an input sample set $S=\{(\bfx_i,y_i)\}$ of size $|S|=M$, 
we run \autoref{alg:median_rec} on the \emph{restricted} sample set $S'\subseteq S$ defined as follows:
\[S'=\left\{(\bfx,y)\in S: \forall i\in[n].\ |k_{x_i}-x_i|> 2^{-N} \right\},\]
where $k_{x_i}$ is the closest Chebyshev extrema to $x_i$.
We note the samples we omit may come, originally, from another cell or from the same cell within an additional distance of $2^{-N}$; the latter ones are good for us, but we have no way to distinguish between these two types. So, we omit them all.

We next show that even if we omit those samples, we only need to multiply the number of samples by a factor dependent only on $n$, to ensure we have enough samples for which we are sure as to which cell they belong.
The sifting process is promised to save all the samples that, originally (i.e. in their exact infinite-bit precision representation), before the noise is applied, lie in the interior of the cell at a distance of at least $2\cdot 2^{-N}$ (twice the precision noise) from any Chebyshev extrema (which determine the nearest cell boundary). 
This process requires the interior of the cell, when we omit the width $\sigma'=2\cdot 2^{-N}$ boundary, to exist. This restricts the partition to be coarse enough, i.e., we won't be able to take $m$ to be too large, which means that $\eps$ cannot be too small.  
In particular, we need the side length of a cell $|\cube_\bmj(i)|> 2\sigma'$.

        Consider the interior of a cell $\cube_{\bm j}$, denoted by $\cube_{\bm j}^\ast$, defined as the region of $\cube_{\bm j}$, within a boundary of width 
    $\sigma'\triangleq\frac{1}{4}\left(\frac{\pi}{2m}\right)^2=\left(\frac{\pi}{4m}\right)^2\geq 2\cdot 2^{-N}$, i.e. $|\cube_{\bm j}^\ast(i)|=|\cube_{\bm j}(i)|-2\sigma'$, for every $i\in[n]$. 
    By \autoref{obs:side-length-of-a-cell-lb}, $|\cube_{\bm j}(i)|\geq\left(\frac{\pi}{2m}\right)^2$, and hence, for every $ i\in[n]$ we have,
    \[|\cube_{\bm j}^\ast(i)|\geq\left(\frac{\pi}{2m}\right)^2-2\sigma'\geq\frac{1}{2}\left(\frac{\pi}{2m}\right)^2=\left(\frac{\pi}{8m}\right)^2
        .\]
    So, for uniform sampling, we get 
    \[p_U(\cube_{\bm j}^\ast)=\frac{V_n(\cube_{\bm j}^\ast)}{2^n}\geq\left(\frac{\pi^2}{16m^2}\right)^n\ge\left(\sqrt{2}m\right)^{-2n}.\]
    Plugging this into \eqref{eq:alpha-good-prob}, with $p_{\bm j}$ replaced by the bound on $p_U(\cube_{\bm j}^\ast)$, we get $S$ is $\alpha$-good for $M_U=M_U(\sqrt{2}m)$ samples, where $M_U(\cdot)$ is the formulation in \eqref{eq:M-uni-func-of-m}. Then, with $m=c_1 nd/\eps$ for some constant $c_1>0$, we get
    \begin{align*}
        M_U=M_U(\sqrt{2}m)&=(1-2\rho)^{-2}2^n m^{2n} \log (4(\sqrt{2}m)^n/\delta)\\
        &=\frac{1}{(1-2\rho)^2}\left(\frac{cnd}{\eps}\right)^{2n}\log\frac{d}{\delta},
    \end{align*}
    for some constant $c>0$. Note, this is asymptotically the same as the $M_U$ obtained for \autoref{thm:l_infty_regression_upper_bound}.
    
    For Chebyshev sampling, we consider the $(3m,n)$-Chebyshev grid $\cells\triangleq\{\cells_{\bm j},\bm j\in [3m]^n\}$, where
    \[\cells_{\bm j}=\left[\cos \frac{\pi \bm j_1}{3m},\cos \frac{\pi (\bm j_1-1)}{3m}\right]\times\dots\times \left[\cos \frac{\pi \bm j_n}{3m},\cos \frac{\pi (\bm j_n-1)}{3m}\right].\]
                            We may observe, $\cells$ is a cell-refinement of the $(m,n)$-Chebyshev grid $\{\cube_{\bm j},\bm j\in[m]^n\}$. Formally:
    \begin{claim}\label{claim:cell-refinement}
        For every $\bm j=(j_1,\hdots,j_n)\in[m]^n$, the cell $\cube_{\bm j}$ contains all the $3^n$ refined cells $\cells_{(k_1,\hdots,k_n)}$, where  $k_i\in\{3j_i,3j_i-1,3j_i-2\}$, for all $i\in[n]$, and nothing else.
    \end{claim}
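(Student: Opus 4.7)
The plan is to reduce the claim to a coordinate-wise verification, since both the $(m,n)$- and $(3m,n)$-Chebyshev partitions are, by \autoref{defn:Chebyshev-cells}, Cartesian products of one-dimensional Chebyshev partitions along each axis. So it suffices to show that for every $j \in [m]$, the one-dimensional interval $\left[\cos\frac{\pi j}{m}, \cos\frac{\pi(j-1)}{m}\right]$ is exactly the disjoint union (up to shared endpoints) of the three intervals of the size-$3m$ Chebyshev partition indexed by $k \in \{3j-2, 3j-1, 3j\}$.

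The key identity is that the size-$m$ Chebyshev extrema embed into the size-$3m$ extrema via $\cos\frac{\pi j}{m} = \cos\frac{\pi (3j)}{3m}$ and $\cos\frac{\pi (j-1)}{m} = \cos\frac{\pi (3j-3)}{3m}$. Thus, writing the three consecutive intervals of the refined partition associated to $k = 3j - 2, 3j - 1, 3j$, their union telescopes to $\left[\cos\frac{\pi (3j)}{3m}, \cos\frac{\pi (3j-3)}{3m}\right] = \left[\cos\frac{\pi j}{m}, \cos\frac{\pi(j-1)}{m}\right]$, which is precisely the $t$-th factor of $\cube_{\bm j}$. Conversely, any refined cell $\cells_{k}$ with $k \notin \{3j-2, 3j-1, 3j\}$ is either entirely to the left of $\cos\frac{\pi j}{m}$ or entirely to the right of $\cos\frac{\pi(j-1)}{m}$, so it lies outside the $t$-th factor of $\cube_{\bm j}$.

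Taking the Cartesian product over the $n$ coordinates, the cell $\cube_{\bm j} = \prod_{t=1}^n \left[\cos\frac{\pi j_t}{m}, \cos\frac{\pi(j_t-1)}{m}\right]$ becomes the union of exactly $3^n$ refined cells $\cells_{(k_1, \ldots, k_n)}$, where each $k_t$ independently ranges over $\{3j_t - 2, 3j_t - 1, 3j_t\}$, and no other refined cell of $\cells$ intersects $\cube_{\bm j}$ in a set of positive measure. There is no real obstacle here; the proof is a direct unfolding of the definitions together with the elementary coincidence of Chebyshev extrema between the two grid scales.
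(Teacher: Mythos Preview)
Your proposal is correct and follows essentially the same approach as the paper: both reduce the claim coordinate-wise via the product structure of the Chebyshev partition, and both hinge on the identity $\cos\frac{\pi j}{m}=\cos\frac{\pi(3j)}{3m}$ to see that each one-dimensional coarse interval is the union of exactly the three refined intervals indexed by $3j-2,3j-1,3j$. The paper phrases this by explicitly writing out the union of the three subintervals for the $i$-th side and invoking the monotonicity of $\cos$ on the relevant range, which is precisely your telescoping argument.
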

    \begin{proof}
        Consider some fixed $\bm j=(j_1,\hdots,j_n)\in[m]^n$, and the cell $\cube_{\bm j}$. For every $i\in[n]$, its $i^{th}$ side $\cube_{\bm j}(i)$ is precisely the union of the $i^{th}$ sides of all the $3^n$ refined cells contained in $\cube_{\bm j}$:
    \begin{align*}
        \cube_{\bm j}(i)=&\left[\cos\left(\frac{\pi\bm j_i}{m}\right),\cos\left(\frac{\pi(\bm j_i-1)}{m}\right)\right]\\
        =&\left[\cos\left(\frac{\pi(3\bm j_i)}{3m}\right),\cos\left(\frac{\pi(3\bm j_i-1)}{3m}\right)\right] \cup \left[\cos\left(\frac{\pi(3\bm j_i-1)}{3m}\right),\cos\left(\frac{\pi(3\bm j_i-2)}{3m}\right)\right]\\
        &\cup \left[\cos\left(\frac{\pi(3\bm j_i-2)}{3m}\right),\cos\left(\frac{\pi(3\bm j_i-3)}{3m}\right)\right]
                        \tag{By continuity, and monotonicity of $\cos$ on $\cube_{\bm j}(i)$}\\
        =&\bigcup_{t=0}^{2} \left[\cos\left(\frac{\pi(3\bm j_i-t)}{3m}\right),\cos\left(\frac{\pi(3\bm j_i-(t+1))}{3m}\right)\right] =\bigcup_{t=0}^{2}\cells_{(\circledast,\hdots,\circledast,3\bm j_i-t,\circledast,\hdots, \circledast)}(i), 
            \end{align*}
    where each $\circledast_r$ (denoting $\circledast$ in the $r^{th}$ position, for all $r\in[3m]\setminus\{i\}$), could be any of the  $k\in[3m]$. The $n$ relations, one for each side, then fix each $\circledast_r$ to be one of $\{3j_i,3j_i-1,3j_i-2\}$.
    \end{proof}
    We now want to ensure that, in every cell $\cube_{\bm j}$, for every sample in the middle-most cell $\cells_{(3j_1-1,\hdots,3j_n-1)}$ the precision error doesn't move the sample to some other $\cube_{\bm j'}$. 
     So, we need 
    \[\sigma'\geq\max_{j_i\in[m]}\min\left\{\cos\frac{3j_i-1}{3m}-\cos\frac{3j_i-2}{3m},\cos\frac{3j_i-2}{3m}-\cos\frac{3j_i}{3m}\right\}\ge\left(\frac{ \pi}{6m}\right)^2.\]
    With this, to then ensure that $\cells_{(3j_1-1,\hdots,3j_n-1)}$ has sufficiently many good samples, we observe
    \begin{align*}
        p_C(\cells_{(3j_1-1,\hdots,3j_n-1)})=\frac{1}{(3m)^n}=\frac{1}{3^n}p_C(\cube_{\bm j}).
    \end{align*}
    Again, plugging this into \eqref{eq:alpha-good-prob}, with $p_{\bm j}$ replaced by $p_C(\cube_{\bm j}^\ast)$, we get $S$ is $\alpha$-good for $M_C=M_C(3m)$ samples, where $M_C(\cdot)$ is the formulation in \eqref{eq:M-cheb-func-of-m}. With $m=c_1 nd/\eps$ for some constant $c_1>0$, we get
    \begin{align*}
        M_C=M_C(3m)&=(1-2\rho)^{-2}(3m)^n\log((3m)^n/\delta)\\
        &=\frac{1}{(1-2\rho)^2}\left(\frac{cnd}{\eps}\right)^{n}\log\frac{d}{\delta},
    \end{align*}
    for some constant $c>0$. Note, this too is asymptotically the same as the $M_C$ obtained for \autoref{thm:l_infty_regression_upper_bound}.

    So, we can handle  $\sigma=2^{-N}\le \left(\frac{\pi}{6m}\right)^2$. 
    We need $m\ge c\max\{ 2^{N/2},\frac{dn}{\eps}\}$.
    Let $\sigma=2^{-N}$ for some fixed $N$, then $\sigma'\ge \sigma$ implies $m\le (\pi/6)2^{N/2}$.
    Simultaneously, we need $m>cdn/\eps$.
    This induce the condition on $\eps\ge\frac{cdn}{m}\geq c_0dn2^{-N/2}$
    (for $c_0=6c/\pi$).
\end{proof}

\section{From \texorpdfstring{$\ell_{\infty}$}{Lmax} error to \texorpdfstring{$\ell_1$}{L1} error}\label{sec:l-infty-to-l-1}

In this independent section, we prove our second technical new idea, restated here for convenience:

\linftytolone*
 This, in \autoref{sec:l-infty-with-l-1}, allows us to design and analyze a variant of \autoref{alg:median_rec}, that has a run-time independent of $\|p\|_{\cube_n,\infty}$.
To prove \autoref{thm:improved-sandwich}, we need the following result about univariate polynomials, which we next generalize for the multivariate case.

\begin{lem}\label{lem:local-sandwich-univariate} Let $g:\R\to\R$ be a degree at most $d$ polynomial. Let  $x^*\in[-1,1]$ be such that $|g(x^*)|=\|g\|_{[-1,1],\infty}$. Consider the set $S\triangleq\{x\in[-1,1]:|x^*-x|\leq 1/2d^2\}$. Then, for every $ x\in S$,$$|g(x)|\geq |g(x^*)|/2.$$
\end{lem}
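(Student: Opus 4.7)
\medskip

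\noindent\textbf{Proof plan.} The statement is essentially a ``slow variation'' bound for univariate polynomials near their extremum, and the natural tool is exactly \autoref{lem:markovbros} (Markov Brothers' inequality), which is already in the preliminaries. The strategy is to bound $|g(x)-g(x^*)|$ uniformly on $S$ by $|g(x^*)|/2$ using the derivative bound from Markov's inequality, and then invoke the reverse triangle inequality.

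\medskip

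\noindent\textbf{Step 1 (derivative bound).} I will apply \autoref{lem:markovbros} to $g$ on $[a,b]=[-1,1]$, which yields
\[
\|g'\|_{[-1,1],\infty} \;\leq\; \frac{2d^2}{b-a}\,\|g\|_{[-1,1],\infty} \;=\; d^2\,\|g\|_{[-1,1],\infty} \;=\; d^2\,|g(x^*)|,
\]
where in the last step I use the definition of $x^*$ as a point attaining the sup-norm.

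\medskip

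\noindent\textbf{Step 2 (integrate along $[x^*,x]$).} Fix any $x\in S$. Since $[x^*,x]\subseteq[-1,1]$ (after possibly swapping endpoints), the fundamental theorem of calculus and Step~1 give
\[
|g(x)-g(x^*)| \;=\; \left|\int_{x^*}^{x} g'(t)\,dt\right| \;\leq\; |x-x^*|\cdot \|g'\|_{[-1,1],\infty} \;\leq\; \frac{1}{2d^2}\cdot d^2\,|g(x^*)| \;=\; \frac{|g(x^*)|}{2},
\]
using the defining condition $|x-x^*|\leq 1/(2d^2)$ for $x\in S$.

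\medskip

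\noindent\textbf{Step 3 (reverse triangle inequality).} Combining Step~2 with $|g(x)|\geq |g(x^*)|-|g(x)-g(x^*)|$ yields
\[
|g(x)| \;\geq\; |g(x^*)| - \frac{|g(x^*)|}{2} \;=\; \frac{|g(x^*)|}{2},
\]
which is exactly the claim. There is no real obstacle here; the only subtlety is that the constants match up cleanly precisely because Markov's inequality gives $d^2$ on $[-1,1]$ and the set $S$ is defined with the matching radius $1/(2d^2)$. This tight matching is presumably why the authors state the lemma with these particular constants, and it is what will be needed downstream when they inductively build the multidimensional ``large-value'' subset used in \autoref{thm:improved-sandwich}.
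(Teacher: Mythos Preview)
Your proof is correct and takes essentially the same approach as the paper: apply Markov's inequality on $[-1,1]$ to get $\|g'\|_{[-1,1],\infty}\le d^2|g(x^*)|$, bound $|g(x)-g(x^*)|$ by $|x-x^*|\cdot\|g'\|_\infty\le |g(x^*)|/2$, then use the reverse triangle inequality. The only cosmetic difference is that the paper invokes the Mean Value Theorem rather than the fundamental theorem of calculus for the middle step, which gives the identical bound.
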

To prove it, we need the following fundamental result from the analysis of univariate functions.
\begin{thm}[Mean Value Theorem (Theorem 5.10, \cite{rudin})]\label{thm:mean-value}
    Let $a<b\in\R$, and $f:[a,b]\to\R$ be continuous on $[a,b]$, and differentiable on $(a,b)$. Then there exists a point $z\in(a,b)$ such that \[f(b)-f(a)=(b-a)f'(z).\]
\end{thm}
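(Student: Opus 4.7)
The plan is to combine Markov Brothers' inequality (\autoref{lem:markovbros}), which bounds the derivative of $g$ in terms of its sup-norm, with the Mean Value Theorem (\autoref{thm:mean-value}), which converts a bound on the derivative into a bound on how quickly $g$ can move away from $g(x^*)$. The radius $1/(2d^2)$ appearing in the definition of $S$ is precisely chosen so that these two bounds combine to yield the factor of $1/2$.

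First, I would apply Markov Brothers' inequality to $g$ on the interval $[-1,1]$. With $[a,b]=[-1,1]$, \autoref{lem:markovbros} yields
\[
\|g'\|_{[-1,1],\infty}\;\le\;\frac{2d^2}{b-a}\|g\|_{[-1,1],\infty}\;=\;d^2\,\|g\|_{[-1,1],\infty}\;=\;d^2\,|g(x^*)|,
\]
using the definition of $x^*$ in the last equality.

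Next, fix any $x\in S$, so that $|x-x^*|\le 1/(2d^2)$ and both $x,x^*\in[-1,1]$. Since $g$ is a polynomial, it is continuous and differentiable on the closed interval with endpoints $x$ and $x^*$, so \autoref{thm:mean-value} provides a point $z$ strictly between $x$ and $x^*$ (and therefore still in $[-1,1]$) with
\[
g(x)-g(x^*)\;=\;(x-x^*)\,g'(z).
\]
Combining this with the Markov-bound estimate $|g'(z)|\le d^2|g(x^*)|$ and the hypothesis $|x-x^*|\le 1/(2d^2)$ gives
\[
|g(x)-g(x^*)|\;\le\;|x-x^*|\cdot|g'(z)|\;\le\;\frac{1}{2d^2}\cdot d^2|g(x^*)|\;=\;\frac{|g(x^*)|}{2}.
\]

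Finally, the reverse triangle inequality yields $|g(x)|\ge |g(x^*)|-|g(x)-g(x^*)|\ge |g(x^*)|/2$, which is exactly the desired conclusion. There is no real obstacle here: the only thing to verify carefully is that the constant in Markov Brothers' inequality on $[-1,1]$ is $d^2$ (not $2d^2$), which is what makes the radius $1/(2d^2)$ produce a clean $1/2$ rather than a trivial bound.
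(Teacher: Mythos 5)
You were asked to prove the Mean Value Theorem itself (Theorem 5.10 in Rudin), which the paper cites as a standard textbook fact and does not prove. Your proposal does not address this statement: what you have written is a proof of \autoref{lem:local-sandwich-univariate} (the claim that $|g(x)|\ge |g(x^*)|/2$ for all $x\in S$), and, crucially, your argument invokes the Mean Value Theorem as a black box in its second step. As a proof of the Mean Value Theorem it is therefore off-target and circular --- you cannot assume the very conclusion you are supposed to establish. (For what it is worth, your argument for \autoref{lem:local-sandwich-univariate} does match the paper's own proof of that lemma, including the observation that Markov Brothers' inequality on $[-1,1]$ gives the constant $d^2$ because $b-a=2$; but that is a different statement from the one posed.)

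A correct proof of the stated theorem uses none of the paper's machinery --- no polynomials, no Markov Brothers' inequality. The standard route is to define the auxiliary function $h(x)\triangleq f(x)-f(a)-\frac{f(b)-f(a)}{b-a}(x-a)$, which is continuous on $[a,b]$, differentiable on $(a,b)$, and satisfies $h(a)=h(b)=0$. By the extreme value theorem, $h$ attains its maximum and minimum on the compact interval $[a,b]$. If both are attained at the endpoints, then $h\equiv 0$ on $[a,b]$ and $h'(z)=0$ for every $z\in(a,b)$; otherwise some extremum is attained at an interior point $z\in(a,b)$, and differentiability at an interior extremum forces $h'(z)=0$ (Fermat's criterion --- this is exactly Rolle's theorem). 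In either case there is $z\in(a,b)$ with $0=h'(z)=f'(z)-\frac{f(b)-f(a)}{b-a}$, i.e.\ $f(b)-f(a)=(b-a)f'(z)$, as required.
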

\begin{proof}[Proof of \autoref{lem:local-sandwich-univariate}]
        Fix some $x\in S$. By \autoref{thm:mean-value}, there exists a $z$ between $x^*$ and $x$, such that $g(x^*)-g(x)=g'(z)(x^*-x)$. Since $z\in[-1,1]$, by \autoref{lem:markovbros}, we have $|g'(z)|\leq d^2\|g\|_{[-1,1],\infty}$. So,
        \begin{align*}
            |g(x^*)-g(x)|&\leq d^2\|g\|_{[-1,1],\infty}|x^*-x|\leq\frac{\|g\|_{[-1,1],\infty}}{2}=\frac{|g( x^*)|}{2}.
        \end{align*}
        The second inequality follows since $|x^*-x|\leq 1/2d^2$, for all $x\in S$.
        By the triangle inequality, we conclude
        \[|g(x)|\geq |g(x^*)|-|g(x^*)-g(x)|\geq |g( x^*)|/2.\qedhere\]
\end{proof}
    Using this, we can prove a local Lipschitz-like argument along \emph{axis-parallel} lines, for any point in $\cube$:
\begin{cor}\label{cor:local-sandwich-improved}
Let $p\colon\cube\to\R$ be a polynomial of individual degree at most $d$. For every  $\bm a\in\cube$, and $i\in [n]$, consider the \textbf{axis-parallel} line $\pline_{\bm a,\bm e_i}\triangleq\{\bm a+t\bm e_i:t\in\R\}$, passing through $\bm a$. \\
There exists a line segment $J\subset \pline_{\bm a,e_i}\cap \cube$ of length at least  $\frac{1}{2d^2}$, such that for every $\mathbf{x}\in J$,  $|p(\mathbf{x})|\geq\frac{|p(\bm a)|}{2}$.
\end{cor}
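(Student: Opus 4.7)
The plan is to reduce to the univariate statement \autoref{lem:local-sandwich-univariate} by restricting $p$ to the axis-parallel line $\pline_{\bm a,\bm e_i}$. Define the univariate polynomial $g\colon[-1,1]\to\R$ by
\[
g(t)=p(a_1,\dots,a_{i-1},t,a_{i+1},\dots,a_n).
\]
Because $p$ has individual degree at most $d$, the restriction $g$ has degree at most $d$, and $g(a_i)=p(\bm a)$. The axis-parallelism of the line is essential here: a non-axis-parallel restriction of an individual degree-$d$ polynomial could have degree as large as $nd$, which would weaken the resulting bound.

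The main conceptual point is that $a_i$ need not be a maximizer of $|g|$ on $[-1,1]$, so one cannot invoke \autoref{lem:local-sandwich-univariate} directly around $a_i$. Instead, I would apply it at a point $t^*\in[-1,1]$ with $|g(t^*)|=\|g\|_{[-1,1],\infty}$. Since $a_i\in[-1,1]$, we have for free that $|g(t^*)|\ge|g(a_i)|=|p(\bm a)|$. \autoref{lem:local-sandwich-univariate} then gives, on the set $S\triangleq\{t\in[-1,1]:|t-t^*|\le 1/(2d^2)\}$, the lower bound $|g(t)|\ge|g(t^*)|/2\ge |p(\bm a)|/2$.

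Finally, lift $S$ back to the axis-parallel line by setting $J\triangleq\{\bm a+(t-a_i)\bm e_i:t\in S\}$. Then $J\subseteq \pline_{\bm a,\bm e_i}\cap\cube$ is a line segment with $|J|=|S|$, and for every $\mathbf{x}\in J$ we have $|p(\mathbf{x})|=|g(x_i)|\ge|p(\bm a)|/2$, as required. To verify the length bound, note that since $t^*\in[-1,1]$, the set $S$ contains in full at least one of the half-intervals $[t^*,t^*+1/(2d^2)]\cap[-1,1]$ or $[t^*-1/(2d^2),t^*]\cap[-1,1]$, hence $|S|\ge 1/(2d^2)$. There is no serious obstacle in the argument; the only mildly non-obvious move is to center the good segment $J$ at the maximizer $t^*$ rather than at $a_i$, after which the conclusion is immediate from \autoref{lem:local-sandwich-univariate}.
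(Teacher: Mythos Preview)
Your proposal is correct and essentially identical to the paper's own proof: both restrict $p$ to the axis-parallel line to obtain a univariate degree-$d$ polynomial, pass to a maximizer on $[-1,1]$, and then invoke \autoref{lem:local-sandwich-univariate} there (rather than at $a_i$) to obtain the segment of length at least $1/(2d^2)$. Your explicit verification of the length bound for $S$ is a small addition over the paper, which states it without justification.
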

\begin{proof}
    Fix $\bm a\in\cube$ and $i\in[n]$. 
    Observe, $p$ restricted to $\pline_{\bm a,\bm e_i}$, 
    $p_{\pline_{\bm a,\bm e_i}}(t)\triangleq p(\bm a + t\bm e_i)$, i.e. we fix $x_j=\bm a_j$ for all $j\neq i$, and only let $x_i$ vary. It is a univariate polynomial in the formal variable $t$, of degree\footnote{Here $\pline_{\bm a,\bm e_i}$ being \emph{axis-parallel} is crucial, as $t$ then varies only along $\bm e_i$, ensuring the degree of $p_{\pline_{\bm a,\bm e_i}}(t)$ is bounded by the individual degree of $p$.} at most $d$. 
    Consider a point $\mathbf{x}^*\in\pline_{\bm a,\bm e_i}\cap\cube$ such that $|p(\mathbf{x}^*)|=\|p_{\pline_{\bm a,\bm e_i}}\|_{\pline_{\bm a,e_i}\cap \cube_n,\infty}$. 
    For $J\triangleq\{\bm  y\in\pline_{\bm a,\bm e_i}\cap \cube:\|\mathbf{x}^*-\bm y\|_2\leq 1/2d^2\}$, the length of $J$ is at least $\frac{1}{2d^2}$, and hence by \autoref{lem:local-sandwich-univariate}, for every $\bm y\in J$ we have: $|p(\bm y)|\geq|p(\mathbf{x}^*)|/2\geq|p(\bm a)|/2$. 
\end{proof}

Next, we wish to find a relatively large subcube of $\cube$ of side length $\frac{1}{2d^2}$ for which all the valuations of $p$ are large. Such a subcube may not exist. So instead, we prove the existence of a set of points in $\cube$, of the same measure on which $p$ can be guaranteed to have large values. We next define a measure of a set of points in a way that allows us to apply our inductive argument. This measure has the feature that when applied to the entire cube it becomes the standard integral.  

\begin{defn}[Measure]\label{defn:measure}
    For any $k\in[n]$, and a fixed \emph{affinity} point $\bm a=(a_{k+1},\hdots,a_n)\in[-1,1]^{n-k}$, consider the $k$-dimensional \emph{affine} cube $\cube(\bm a)\triangleq [-1,1]^k\times \{a_{k+1}\}\times\dots\times \{a_n\}\subseteq\cube$. 
Let $J_k\subseteq \cube(\bm a)$.     It's $k$-dimensional measure is defined as:
    \[|J_k|\triangleq\int_{\mathbf{x}\in J_k}dx_k\hdots dx_1.\]
\end{defn}

\begin{lem}[Generalization of \autoref{lem:local-sandwich-univariate}]\label{lem:gen-prod-set}
    Let $p\colon\cube\to\R$ be a polynomial of individual degree at most $d$. For any  $\bm y\in\cube$  there exists a set of points $J\subset\cube$, such that for every $\mathbf{x}\in J,|p(\mathbf{x})|\geq\frac{|p(\bm y)|}{2^n}$, and 
    \[|J| =\int_{\mathbf{x}=(x_1,\hdots,x_n)\in J} dx_n\hdots dx_1 \geq \frac{1}{(2d^2)^n}.\]
\end{lem}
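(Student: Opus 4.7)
The plan is to induct on $n$, using \autoref{cor:local-sandwich-improved} as the one-dimensional building block and \autoref{thm:fubini-tonelli} to multiply set measures across coordinates. A useful preliminary observation is that it suffices to establish the stronger conclusion $|p(\bm x)|\ge |p(\bm y^*)|/2^n$ on $J$, where $\bm y^*\in\cube_n$ is a maximizer of $|p|$, since $|p(\bm y^*)|\ge |p(\bm y)|$ for any $\bm y$. This frees me to build $J$ around $\bm y^*$ rather than $\bm y$.

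For the base case $n=1$, let $y^*$ be a maximizer of $|p|$ on $[-1,1]$; then \autoref{lem:local-sandwich-univariate} furnishes a set $S$ of length at least $1/(2d^2)$ on which $|p|\ge |p(y^*)|/2\ge |p(y)|/2$.

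For the inductive step from $n-1$ to $n$, write $\bm y^*=(y_1^*,\ldots,y_n^*)$ and consider the sliced polynomial $p'(\bm z):=p(\bm z,y_n^*)$, which has individual degree at most $d$ in $n-1$ variables and satisfies $|p'(y_1^*,\ldots,y_{n-1}^*)|=|p(\bm y^*)|$. Applying the inductive hypothesis to $p'$ at the point $(y_1^*,\ldots,y_{n-1}^*)$ yields a measurable set $J'\subseteq\cube_{n-1}$ with $|J'|\ge 1/(2d^2)^{n-1}$ and $|p'(\bm z)|\ge |p(\bm y^*)|/2^{n-1}$ throughout $J'$. For every $\bm z\in J'$, \autoref{cor:local-sandwich-improved} applied to $p$ at the point $(\bm z,y_n^*)$ in the direction $\bm e_n$ then delivers an axis-parallel line segment in the $n$-th coordinate of length at least $1/(2d^2)$ on which $|p|\ge |p(\bm z,y_n^*)|/2=|p'(\bm z)|/2\ge |p(\bm y^*)|/2^n$.

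To amalgamate these pointwise segments into a single measurable subset of $\cube_n$, I define
\[J:=\{(\bm z,t)\in J'\times[-1,1]\,:\,|p(\bm z,t)|\ge |p(\bm z,y_n^*)|/2\},\]
which is measurable as the intersection of the measurable set $J'\times[-1,1]$ with a superlevel set of a continuous function. For each fixed $\bm z\in J'$, the $\bm z$-slice of $J$ contains the segment guaranteed by the corollary, hence has one-dimensional Lebesgue measure at least $1/(2d^2)$; \autoref{thm:fubini-tonelli} then gives $|J|\ge |J'|\cdot 1/(2d^2)\ge 1/(2d^2)^n$, and by construction $|p(\bm x)|\ge |p(\bm y^*)|/2^n\ge |p(\bm y)|/2^n$ on $J$, closing the induction. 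The main subtlety I anticipate is measurability of $J$: picking an explicit segment $L(\bm z)$ per slice need not vary measurably in $\bm z$, but defining $J$ implicitly as a superlevel set bypasses any selection argument while still capturing each of the guaranteed segments in its corresponding slice.
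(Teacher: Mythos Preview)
Your proof is correct and follows essentially the same inductive strategy as the paper: extend the large-value set one coordinate at a time via \autoref{cor:local-sandwich-improved} and then compute the measure by Fubini. Defining $J$ implicitly as a superlevel set is in fact cleaner than the paper's explicit union of selected segments, since it sidesteps the measurable-selection issue the paper glosses over; your preliminary reduction to the maximizer $\bm y^*$ is a harmless simplification.
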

\begin{proof}
    Fix some $\bm y=(y_1,\hdots,y_n)\in\cube$. Let $r\triangleq\frac{1}{2d^2}$. We build this set $J\subset[-1,1]^n$ inductively:
    
    \paragraph{Base Case ($n=1$):} Consider the line $$\pline_{\bm y,\bm e_1}\triangleq\{\bm y+t\bm e_1:t\in\R\}.$$ 
    $p$ restricted to $\pline_{\bm y,\bm e_1}$, i.e.,  $p_{\pline_{\bm y,\bm e_1}}(t)\triangleq p(\bm y + t\bm e_1)$ is a univariate polynomial in the variable $t$, of degree at most $d$.
    So, by \autoref{cor:local-sandwich-improved}, there exists a line segment  $ J_1\subset\pline_{\bm y,\bm e_1}\cap\cube$ such that $|p(\mathbf{x} )|\geq\frac{|p(\bm y)|}{2}$ for every $\mathbf{x}=( x_1,y_2,\hdots, y_n)\in J_1$, and $|J_1|=\int_{x_1\in J_1}dx_1\geq r$.
\paragraph{Induction Step $(n=k+1)$:}
    Assume  by \emph{ IH for $n=k$} that there exists $J_k\subset[-1,1]^k$ such that, for every $\mathbf{x}=(x_1,\hdots,x_k, y_{k+1},\hdots, y_n)\in J_k$, we have $|p(x_1,\hdots,x_k, y_{k+1},\hdots, y_n)|\geq\frac{|p(\bm y)|}{2^k}$, and
    \[|J_k|=\int_{\mathbf{x}\in J_k}dx_k\hdots dx_1\geq r^k.\]
    Now, consider an arbitrary  $\mathbf{x}\in J_k$ such that $\mathbf{x}=(x_1,\hdots,x_k, y_{k+1},\hdots, y_n)$, and the line     
    $$\pline_{(x_1,\hdots,x_k, y_{k+1},\hdots, y_n),\bm e_{k+1}}\triangleq\{(x_1,\hdots,x_k,y_{k+1}+t, y_{k+2},\hdots, y_n):t\in\R\}. $$
    Again, by \autoref{cor:local-sandwich-improved}, there exists $J_{k+1}^{(x_1,\hdots,x_k)}\subset \pline_{(x_1,\hdots,x_k, y_{k+1},\hdots, y_n),\bm e_{k+1}} \cap\cube$, such that,  
    for every $$\bm z=( x_1,\hdots, x_k, z_{k+1},y_{k+2},\hdots, y_n)\in J_{k+1}^{(x_1,\hdots,x_k)}$$ 
    we have, 
    \begin{align*}
        |p(\bm z)|=|p(x_1,\hdots,x_k, z_{k+1},y_{k+2},\hdots,y_n)|\geq \frac{|p(x_1,\hdots,x_k, y_{k+1},\hdots, y_n)|}{2}\geq\frac{|p(\bm y)|}{2^{k+1}},
    \end{align*}
    and
    \begin{equation}\label{eq:line-int}
        \left|J_{k+1}^{(x_1,\hdots,x_k)}\right|=\int_{\bm z\in J_{k+1}^{(x_1,\hdots,x_k)}}dz_{k+1} \geq r.
    \end{equation}
    Observe, for any $(x_1,\hdots,x_k)\neq(u_1,\hdots,u_k)$, $\pline_{(x_1,\hdots,x_k, y_{k+1},\hdots, y_n),\bm e_{k+1}} \cap \pline_{(u_1,\hdots,u_k, y_{k+1},\hdots, y_n),\bm e_{k+1}} = \emptyset$. 
    So, we may deduce $J_{k+1}^{(x_1,\hdots,x_k)} \cap J_{k+1}^{(u_1,\hdots,u_k)} =\emptyset$. So, for every $\mathbf{x}\in J_k$, the corresponding line segment along $\bm e_{k+1}$ is unique.
        Note that, for any $\mathbf{x}\in J_{k}$, from \eqref{eq:line-int} we may infer:
    \begin{equation}\label{eq:line-translation}
        \int_{\bm z\in J_{k+1}^{(x_1,\hdots,x_k)}}dz_{k+1}\geq \int_{0}^{r}dz_{k+1}=r.
    \end{equation}
    So, all the line segments can be translated along $\bm e_{k+1}$, to cover the interval $[0,r]$. \autoref{fig:translation} illustrates this idea, which allows us to construct higher dimensional cuboids.
    \begin{figure}
        \centering
        \includegraphics[scale=0.45]{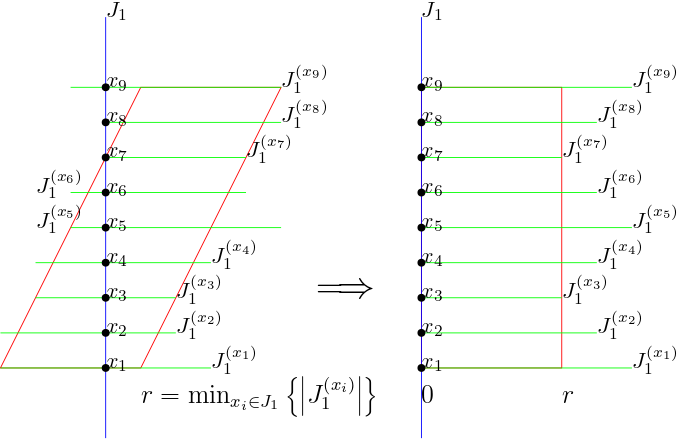}
        \caption{Translation for constructing $J_2$: For every $\mathbf{x}_i\in J_1$ (blue), the line segments $J_1^{(\mathbf{x}_i)}$ (green) are translated to have their left ends at $0$, so that they all cover the rectangle (red) of height $|J_1|\geq r$, and width $r$. Thus, $|J_2|\geq r^2$.}
        \label{fig:translation}
    \end{figure}
        We define
    \[J_{k+1}\triangleq\bigcup_{(x_1,\hdots,x_k,y_{k+1},\hdots, y_n)\in J_k} J_{k+1}^{(x_1,\hdots,x_k)}.\]
    Since, all the line segments $J_{k+1}^{(x_1,\hdots,x_k)}$ are unique, we have:
    \begin{align*}
        |J_{k+1}|&\triangleq\int_{\mathbf{v}=(x_1,\hdots,x_{k+1},y_{k+2},\hdots,y_n)\in J_{k+1}}dx_{k+1}\hdots dx_1\\
        &=\int_{\mathbf{x}\in J_k} \int_{\bm z\in J_{k+1}^{(x_1,\hdots,x_k)}} dz_{k+1}
                dx_k \hdots dx_1 \\
        &\geq\int_{\mathbf{x}\in J_k} \underbrace{\int_{0}^{r} dz_{k+1}}_{= r}dx_k\hdots dx_1 \tag{From \eqref{eq:line-translation}}\\
                &\geq r\underbrace{\int_{\mathbf{x}\in J_k}dx_k \hdots dx_1}_{=|J_k|\geq r^k}\geq r^{k+1}.
    \end{align*}
    Thus, there exists $ J=J_n\subset\cube\text{, such that }|J|\geq r^n$, and for every $\mathbf{x}\in J,|p(\mathbf{x})|\geq\frac{|p(\bm y)|}{2^n}$.
\end{proof}

The main theorem of this section now easily follows from \autoref{lem:gen-prod-set}. 

\begin{proof}[Proof of \autoref{thm:improved-sandwich}] Let $\mathbf{x}^*\in\cube\text{, such that }|p(\mathbf{x}^*)|=\|p\|_{\cube_n,\infty}$. By \autoref{lem:gen-prod-set}, we have a $J\subset\cube$, such that $|J|\geq \frac{1}{(2d^2)^n}$, and for every $\mathbf{x}\in J,|p(\mathbf{x})|\geq\frac{|p(\mathbf{x}^*)|}{2^n}$. Then,
\[\|p\|_{\cube_n,1} =\int_{\cube}|p(\mathbf{x})|d\mathbf{x} \geq\int_{J}|p(\mathbf{x})|d\mathbf{x} \geq\int_{J}\frac{|p(\mathbf{x}^*)|}{2^n}d\mathbf{x} =\frac{|p(\mathbf{x}^*)|}{2^n}|J| \geq\frac{\|p\|_{\cube_n,\infty}}{(2d)^{2n}}.\qedhere\]   
\end{proof}

\subsection{Tightness of \autoref{thm:improved-sandwich}}
We note that the lower bound of $\frac{1}{(2d)^{2n}}$ for $\frac{\|p\|_{\cube_n,1}}{\|p\|_{\cube_n,\infty}}$ is asymptotically tight, complementing \autoref{thm:improved-sandwich}:
\tightnessofinduction*
A simple observation towards this claim, may first be noted:
\begin{obs}
    There exists a family of a individual degree-$d$ polynomials $\{h_n\}_{n\in\mathbb N}$, where $h_n:\cube_n\to\R$, such that $\|h_n\|_{\cube_n,\infty}\geq (d/2)^n \|h_n\|_{\cube_n,1}$. Define $h_n(\bm x)\triangleq\prod_{i=1}^n x_i^d$, and observe 
    $\|h_n\|_{\cube_n,\infty}=
        1$, while
    \[\|h_n\|_{\cube_n,1}=\int_{\bm x\in\cube_n}\prod_{i=1}^n |x_i^d| d\bm x =\prod_{i=1}^n\int_{-1}^1|x_i^d|dx_i=\left(2\int_0^1 x^d dx\right)^n=\left(\frac{2}{d+1}\right)^n \leq \left(\frac{2}{d}\right)^n \|h\|_{\cube_n,\infty}.\]
    \end{obs}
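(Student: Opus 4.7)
The plan is to take $f_n$ to be a tensor product $f_n(\bfx) := \prod_{i=1}^{n} g(x_i)$ of a single univariate polynomial $g$ of degree $d$. By \autoref{thm:fubini-tonelli} applied to $|f_n(\bfx)| = \prod_i |g(x_i)|$, both norms factor as $\|f_n\|_{\cube_n,\infty} = \|g\|_{[-1,1],\infty}^{n}$ and $\|f_n\|_{\cube_n,1} = \|g\|_{[-1,1],1}^{n}$, so the ratio tensorizes. The proposition therefore reduces to constructing a single univariate polynomial $g$ of degree exactly $d$ satisfying $\|g\|_{[-1,1],\infty}/\|g\|_{[-1,1],1} \geq c_0 d^2$ for an absolute constant $c_0>0$; the parity assumption that $d$ is odd will enter only to make the degree count of the construction below come out to exactly $d$.

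For the 1D step I take $g$ to be the reproducing-kernel extremal polynomial for the Jacobi weight $w(x) = 1+x$ on $[-1,1]$. Set $m := (d-1)/2$, a non-negative integer since $d$ is odd, and let $\{P_n^{(0,1)}\}_{n \geq 0}$ denote the Jacobi polynomials orthogonal with respect to $w\,dx$, with squared-norm constants $h_n := \int_{-1}^{1} (P_n^{(0,1)})^2 w\,dx$. Let $K_m(x,y) := \sum_{n=0}^{m} P_n^{(0,1)}(x) P_n^{(0,1)}(y)/h_n$ be the order-$m$ reproducing kernel, and define $q^*(x) := K_m(x,1)/K_m(1,1)$ and
\[
g(x) := q^*(x)^{2}\cdot\tfrac{1+x}{2}.
\]
Then $g$ has degree $2m+1 = d$, is non-negative on $[-1,1]$, and satisfies $g(1) = q^*(1)^2 \cdot 1 = 1$.

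The rest is a standard calculation using the Jacobi identities $P_n^{(0,1)}(1) = 1$ and $h_n = 2/(n+1)$ (citable from Szeg\H{o}, \emph{Orthogonal Polynomials}, \S 4.1, or verified inline from the Rodrigues formula). These give $K_m(1,1) = \tfrac{1}{2}\sum_{n=0}^{m}(n+1) = (m+1)(m+2)/4 = (d+1)(d+3)/16 = \Theta(d^2)$, and the reproducing property yields $\int K_m(x,1)^2 w\,dx = K_m(1,1)$, hence $\int (q^*)^2 w\,dx = 1/K_m(1,1)$. Thus $\|g\|_1 = \int g = \tfrac{1}{2}\int (q^*)^2 w\,dx = 8/((d+1)(d+3)) \leq 8/d^2$, while $\|g\|_\infty \geq g(1) = 1$, giving $\|g\|_\infty/\|g\|_1 \geq d^2/8$. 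Tensorization then yields $\|f_n\|_{\cube_n,\infty}/\|f_n\|_{\cube_n,1} \geq (d^2/8)^n$, and the proposition holds with $c = 1/8$. The only real obstacle is expository: since orthogonal-polynomial theory is not otherwise used in the paper, one must decide whether to import the two Jacobi identities above as a black box with a citation, or to verify them in a short inline computation from the Rodrigues formula and the hypergeometric representation of $P_n^{(0,1)}$.
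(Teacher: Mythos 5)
Your construction is mathematically sound --- the Jacobi facts $P_k^{(0,1)}(1)=1$ and $h_k=2/(k+1)$ are standard and correctly applied, the reproducing property does give $\int_{-1}^{1}(q^*)^2(1+x)\,dx=1/K_m(1,1)$, and the tensorization of both norms via \autoref{thm:fubini-tonelli} is valid --- but it is a genuinely different, and much heavier, route than the paper's, which proves this observation in one line by taking $h_n(\bm x)=\prod_{i=1}^n x_i^d$, so that $\|h_n\|_{\cube_n,\infty}=1$ and $\|h_n\|_{\cube_n,1}=(2/(d+1))^n$ for \emph{every} $d$, with no parity restriction. What you have actually written is a proof of the stronger tightness statement \autoref{lem:tightness_of_induction}, which demands a ratio of $c^nd^{2n}$ and is indeed stated only for odd $d$; measured against the observation under review, your write-up has two small mismatches: (i) it covers only odd $d$ (for even $d$ one would run your construction at degree $d-1$, which still lies in $\Pd$ and yields ratio $d(d+2)/8\ge d/2$), and (ii) the simplified bound $(d^2/8)^n$ you quote does not dominate the stated $(d/2)^n$ when $d\in\{1,3\}$, although the exact value you compute, $\bigl((d+1)(d+3)/8\bigr)^n$, always does. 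It is also worth knowing that your kernel polynomial is not really a new extremal example: by the classical kernel-polynomial identity, $K_m(\cdot,1)$ for the weight $1+x$ is proportional to $P_m^{(1,1)}\propto P'_{m+1}$ (Legendre), so your $g$ coincides, up to normalization and the affine map $[0,1]\to[-1,1]$, with the polynomial $x\bigl(P'_{m+1}(2x-1)\bigr)^2$ that the paper uses in \autoref{remark:stack-exchange} to prove \autoref{lem:tightness_of_induction}. The genuine difference is how $\|g\|_{[-1,1],1}$ is evaluated: your reproducing-kernel argument replaces the paper's integration-by-parts-plus-orthogonality calculation and is arguably cleaner, at the price of importing Jacobi-polynomial identities the paper otherwise never uses --- whereas for the present observation the monomial example makes all of this machinery unnecessary.
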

A tighter bound maybe shown using Legendre polynomials, as pointed out by \cite{legendre-bound}:
\begin{defn}\label{defn:legendre-poly}
    Legendre Polynomials are a family of orthogonal polynomials $\{P_n:[-1,1]\to[-1,1]\}_{n\in\mathbb N}$, indexed by their degree $n$, such that $P_0(x)=1$, for all $x\in[-1,1]$, $P_n(1)=1$, for all $n$, and for all $m\neq n$,
    \begin{equation}\label{eq:legendre-poly-ortho}
        \int_{-1}^1 P_n(x)P_m(x)dx=0.
    \end{equation}
\end{defn}
We note some useful properties of Legendre polynomials:
\begin{fact}\label{fact:legendre-poly}
    Legendre Polynomials are even or odd, according as, their degree is even or odd, i.e.
    \begin{equation}\label{eq:legendre-poly-parity}
        P_n(-x)=(-1)^n P_n(x).
    \end{equation}
    Accordingly their derivatives, $P'_n:[-1,1]\to\R$, follow a similar rule:
    \begin{equation}\label{eq:legendre-poly-derivatives-parity}
        P'_n(-x)=(-1)^{n+1} P'_n(x).
    \end{equation}
    Their derivatives attain their respective extrema at the end points of the $[-1,1]$ interval:
    \begin{equation}\label{eq:legendre-poly-derivatives-extrema}
        \|P'_n\|_{[-1,1],\infty}=P'_n(1)=\frac{n(n+1)}{2}\text{,\qquad and \qquad}P'_n(-1)=(-1)^{n+1}\frac{n(n+1)}{2}.
    \end{equation}
    For any $n\in\mathbb{N}$, as any polynomial $f(x)$ of degree $m<n$ can be written as a real linear combination of $\{P_i(x)\}_{i=0}^m$, i.e. $f(x)\equiv\sum_{i=0}^m c_iP_i(x)$ for some $\{c_i\in\R\}_{i=0}^m$, using \eqref{eq:legendre-poly-ortho} we get
    \begin{equation}\label{eq:legendre-poly-f-ortho}
        \int_{-1}^1 P_n(x)f(x)dx=\sum_{i=0}^{m} c_i \underbrace{\int_{-1}^1 P_n(x)P_i(x)dx}_{=0,\because i\neq n}=0.
    \end{equation}
\end{fact}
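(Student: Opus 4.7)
The plan is to prove the four identities in the Fact in order, with the most involved step being the global extremum bound in \eqref{eq:legendre-poly-derivatives-extrema}. As a preliminary, I would observe that \autoref{defn:legendre-poly} actually determines $\{P_n\}$ uniquely: running Gram--Schmidt on $1, x, x^2, \ldots$ under the inner product $\langle f, g\rangle = \int_{-1}^1 fg\,dx$ produces an orthogonal sequence of nonzero polynomials of successive degrees, and the normalization $P_n(1) = 1$ pins each one down. This also yields the basis observation $\mathrm{span}\{P_0, \ldots, P_m\}$ equals the space of polynomials of degree $\le m$ (the $P_k$ have distinct degrees), which is the only input needed to justify \eqref{eq:legendre-poly-f-ortho} from \eqref{eq:legendre-poly-ortho} by linearity.

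For the parity identity \eqref{eq:legendre-poly-parity}, I would define $\widetilde{P}_n(x) := (-1)^n P_n(-x)$. The substitution $u = -x$ gives $\int_{-1}^1 \widetilde P_n(x)\widetilde P_m(x)\,dx = (-1)^{n+m}\int_{-1}^1 P_n(u) P_m(u)\,du$, so $\{\widetilde P_n\}$ is also an orthogonal family; moreover $\widetilde P_n$ has the same degree and the same leading coefficient as $P_n$. By the uniqueness noted above (applied with a common leading coefficient in place of the normalization $P_n(1) = 1$), $\widetilde P_n = P_n$, which is \eqref{eq:legendre-poly-parity}. Evaluating at $x = -1$ shows as a byproduct that $P_n(-1) = (-1)^n$, and differentiating \eqref{eq:legendre-poly-parity} via the chain rule yields \eqref{eq:legendre-poly-derivatives-parity}.

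To obtain \eqref{eq:legendre-poly-derivatives-extrema}, I first derive the Legendre ODE from the orthogonality structure. The polynomial $\bigl[(1-x^2)P_n'(x)\bigr]'$ has degree $n$, and two successive integrations by parts, each of which kills the boundary term because $(1-x^2)$ vanishes at $\pm 1$, show that $\int_{-1}^1 \bigl[(1-x^2)P_n'\bigr]' P_m\,dx = \int_{-1}^1 P_n\bigl[(1-x^2) P_m'\bigr]'\,dx$; for $m < n$ the right-hand side is $0$ by \eqref{eq:legendre-poly-f-ortho}. Hence $\bigl[(1-x^2) P_n'\bigr]'$ is orthogonal to $P_0, \ldots, P_{n-1}$, and the basis observation forces $\bigl[(1-x^2) P_n'\bigr]' = \lambda_n P_n$ for a scalar $\lambda_n$; matching leading coefficients identifies $\lambda_n = -n(n+1)$, giving the Legendre ODE $(1-x^2)P_n'' - 2x P_n' + n(n+1) P_n = 0$. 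Evaluating at $x = 1$ gives $P_n'(1) = n(n+1)/2$, and combining with \eqref{eq:legendre-poly-derivatives-parity} gives the value at $x = -1$.

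The main obstacle is showing that $\|P_n'\|_{[-1,1], \infty}$ is actually attained at $x = 1$. My plan is to introduce the auxiliary function $\phi(x) := (1-x^2)(P_n'(x))^2 + n(n+1) P_n(x)^2$; a direct computation using the ODE collapses $\phi'(x)$ to $2x(P_n'(x))^2$, so $\phi$ decreases on $[-1,0]$ and increases on $[0,1]$, whence $\phi(x) \le \phi(\pm 1) = n(n+1)$ throughout $[-1,1]$. At the endpoints the target bound $|P_n'(\pm 1)| = n(n+1)/2$ is already established. For any interior critical point $x^* \in (-1,1)$ of $P_n'$ we have $P_n''(x^*) = 0$, so the ODE gives $P_n(x^*) = 2x^* P_n'(x^*)/(n(n+1))$; substituting into $\phi(x^*) \le n(n+1)$ and simplifying reduces the inequality $(P_n'(x^*))^2 \le (n(n+1)/2)^2$ to $[n(n+1) - 4](1 - x^{*2}) \ge 0$, which holds for $n \ge 2$ (the cases $n = 0, 1$ are trivial). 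Since the extrema of $P_n'$ on $[-1,1]$ occur either at endpoints or at such interior critical points, this completes $\|P_n'\|_{[-1,1],\infty} = P_n'(1) = n(n+1)/2$ and hence the Fact.
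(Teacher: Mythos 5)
Your proposal is correct, and it is genuinely more self-contained than what the paper does: the paper treats \autoref{fact:legendre-poly} as a collection of classical properties of Legendre polynomials and offers no proof beyond the inline justification of \eqref{eq:legendre-poly-f-ortho} (which you reproduce in the same way, via the basis observation that $\{P_0,\dots,P_m\}$ spans the degree-$\le m$ polynomials). Everything else you derive directly from \autoref{defn:legendre-poly}: the parity law \eqref{eq:legendre-poly-parity} via uniqueness of the orthogonal family up to scaling (comparing $(-1)^nP_n(-x)$ with $P_n(x)$ through leading coefficients, which correctly sidesteps the fact that $P_n(-1)$ is not yet known), the endpoint values in \eqref{eq:legendre-poly-derivatives-extrema} by first extracting the Legendre ODE $(1-x^2)P_n''-2xP_n'+n(n+1)P_n=0$ from orthogonality plus two integrations by parts and then evaluating at $x=\pm1$, and the endpoint-maximality of $|P_n'|$ via the auxiliary function $\phi(x)=(1-x^2)(P_n'(x))^2+n(n+1)P_n(x)^2$ with $\phi'(x)=2x(P_n'(x))^2$. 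I checked the two computations that carry the weight: the ODE-based simplification of $\phi'$ is exact, and at an interior critical point $x^*$ of $P_n'$ the substitution $P_n(x^*)=2x^*P_n'(x^*)/(n(n+1))$ into $\phi(x^*)\le n(n+1)$ does reduce to $(1-x^{*2})\left(n(n+1)-4\right)\ge 0$, valid for $n\ge 2$ with $n=0,1$ trivial. Two cosmetic points you may wish to make explicit: an interior maximizer of $|P_n'|$ gives $P_n'P_n''=0$ there, so you should note that the branch $P_n'(x^*)=0$ is vacuous, and $\phi(\pm1)=n(n+1)$ uses $P_n(\pm1)^2=1$, i.e.\ the byproduct $P_n(-1)=(-1)^n$ you established earlier. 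What your route buys is independence from outside references (the paper implicitly leans on standard Legendre theory, e.g.\ the known fact that $\|P_n'\|_{[-1,1],\infty}$ is attained at the endpoints); the cost is roughly a page of classical analysis that the paper chose not to reprove.
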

\begin{remark}[\cite{legendre-bound}]\label{remark:stack-exchange}
    Let $m\in\mathbb{N}$. Consider the Legendre polynomial $P_{m+1}$, and define a degree-$(2m+1)$ polynomial $p:[0,1]\to\R$ as $p(x)\triangleq x(P'_{m+1}(2x-1))^2$. Then, $\|p\|_{[0,1],\infty}\geq (m+1)(m+2)\|p\|_{[0,1],1}$.
\end{remark}
\begin{proof}
    From \eqref{eq:legendre-poly-derivatives-extrema}, we have $\max_{x\in[0,1]}\{ |P'_{m+1}(2x-1)|\} =\|P'_{m+1}\|_{[-1,1],\infty} =P'_{m+1}(1)$. So, we observe:
    \begin{itemize}
        \item $p(x)\geq 0$ for every $x\in[0,1]$, and
        \item $\|p\|_{[0,1],\infty}=\max_{x\in[0,1]}\{x(P'_{m+1}(2x-1))^2\}=p(1)=(P'_{m+1}(1))^2$.
    \end{itemize} 
    Since $P'_{m+1}$ is a degree-$m$ polynomial, for some degree-$(m-1)$ polynomial $q:\R\to\R$, we may write
    \begin{equation}\label{eq:legendre-poly-decomposition}
        P'_{m+1}(2x-1)=P'_{m+1}(1)+(x-1)q(2x-1).
    \end{equation}
        \begin{align*}
        \implies \|p\|_{[0,1],1} &= \int_{0}^1 x(P'_{m+1}(1)+(x-1)q(2x-1))P'_{m+1}(2x-1) dx \tag{By \eqref{eq:legendre-poly-decomposition}}\\
        &= P'_{m+1}(1)\underbrace{\int_0^1 x P'_{m+1}(2x-1)dx}_{\triangleq I_1} +\underbrace{\int_0^1 x(x-1)q(2x-1)P'_{m+1}(2x-1)dx}_{\triangleq I_2}.
            \end{align*}
    Using integration by parts on the second integral, we get
    \begin{align*}
        I_2 &=\int_0^1 x(x-1)q(2x-1)P'_{m+1}(2x-1)dx \\
        &= \cancelto{0}{\Bigg[\underbrace{x(x-1)q(2x-1)\int P'_{m+1}(2x-1)dx}_{=(x(x-1)q(2x-1)P_{m+1}(2x-1))/2}\Bigg]_0^1} -\int_0^1 P_{m+1}(2x-1) \underbrace{\frac{d(x(x-1)q(2x-1))}{dx}}_{\triangleq r(2x-1),\; deg(r)\leq m}dx \\
        &=-\int_{0}^1 P_{m+1}(2x-1)r(2x-1)dx =-\frac{1}{2}\int_{-1}^1 P_{m+1}(y)r(y)dy = 0. \tag{By \eqref{eq:legendre-poly-f-ortho}}
    \end{align*}
    Using integration by parts, this time on the first integral, we get
    \begin{align*}
        I_1 &=\int_0^1 x P'_{m+1}(2x-1)dx = \left[x\int P'_{m+1}(2x-1)dx\right]_0^1 - \int_0^1 \left(\int P'_{m+1}(2x-1)dx\right) dx \\
        &=\left[\frac{x}{2} P_{m+1}(2x-1)\right]_0^1 - \frac{1}{2}\int_0^1  P_{m+1}(2x-1) dx =\frac{P_{m+1}(1)}{2}-\frac{1}{4}\underbrace{\int_{-1}^1  P_0(y)P_{m+1}(y) dy}_{=0,\; by\; \eqref{eq:legendre-poly-ortho}}.\\
        \implies &\|p\|_{[0,1],1} =\frac{P'_{m+1}(1) P_{m+1}(1)}{2}=\frac{P'_{m+1}(1) }{2}. \tag{$P_{m}(1)=1$, for all $m\in\mathbb N$}\\
        \implies &\frac{\|p\|_{[0,1],1}}{\|p\|_{[0,1],\infty}} =\frac{P'_{m+1}(1)}{2 (P'_{m+1}(1))^2}=\frac{1}{2 P'_{m+1}(1)}=\frac{1}{(m+1)(m+2)}. \tag{By \eqref{eq:legendre-poly-derivatives-extrema}}
    \end{align*}\qedhere
\end{proof}
Using this, we can now prove the main theorem of this sub-section:
\begin{proof}[Proof of \autoref{lem:tightness_of_induction}]
    Let $d=2m+1$ for some $m\in \mathbb{N}$. Define a degree-$(2m+1)$ polynomial $f:[-1,1]\to\R$, as $f(x)=p\left(\frac{x+1}{2}\right)$, where $p:[0,1]\to\R$ is as defined in \autoref{remark:stack-exchange}. Then:
    \begin{align*}
        \|f\|_{[-1,1],\infty} &=\|p\|_{[0,1],\infty} \text{, and } \\
        \|f\|_{[-1,1],1} &=\int_{-1}^1 |f(x)|dx = \int_{-1}^1 \left|p\left(\frac{x+1}{2}\right)\right| dx = 2\int_{0}^1 |p(y)| dy= 2\|p\|_{[0,1],1}.
    \end{align*}
    From \autoref{remark:stack-exchange}, we have
    \[\|f\|_{[-1,1],\infty} =\|p\|_{[0,1],\infty} = (m+1)(m+2)\|p\|_{[0,1],1} = \frac{(m+1)(m+2)}{2}\|f\|_{[-1,1],1}.\]
    For any $n\in\mathbb N$, define $f_n(\bm x)\triangleq \prod_{i=1}^n f(x_i)$. Observe $\|f_n\|_{\cube_n,\infty}=(\|f\|_{[-1,1],\infty})^n$, while
    \[\|f_n\|_{\cube_n,1}=\int_{\bm x\in\cube_n}\prod_{i=1}^n |f(x_i)| d\bm x =\prod_{i=1}^n \int_{-1}^1 |f(x_i)| dx_i =(\|f\|_{[-1,1],1})^n .\]
    \[\implies \frac{\|f_n\|_{\cube_n,\infty}}{\|f_n\|_{\cube_n,1}} =\left(\frac{\|f\|_{[-1,1],\infty}}{\|f\|_{[-1,1],1}}\right)^n = \left(\frac{(m+1)(m+2)}{2}\right)^n\ge \left(\frac 1 8\right)^n d^{2n}.\qedhere\]
\end{proof}

\section{Robust multivariate regression algorithm}\label{sec:l-infty-with-l-1}
In this section, we show how a modification of our algorithm that avoid a run-time dependence on $(\|p\|_{\cube_n,\infty},\sigma,N)$ and thus prove the following theorem. 
\begin{restatable}{thm}{linftywithloneupperbound}\label{cor:final-cor-with-l-1}[Generalized form of \autoref{cor:final-cor-with-l-1-intro}]
 Let $\eps\in(0,1/2]$,$\delta\in(0,\eps]$, $\sigma\geq 0$, and  $\rho<1/2$.
   There is an algorithm that solves the \problemDefn with approximation factor $C=2+\eps$. The  output of the algorithm is a polynomial $\hat{p}\colon\R^n\to\R$ of degree at most $d$ in each variable, such that with probability (over the random input samples) at least $1-\delta$, $\hat{p}$ satisfies  
   \[|p(\bfx)-\hat{p}(\bfx)|\leq (2+\eps)\sigma,\qquad \text{ for all } \bfx\in\cube.\]
 It uses   
 $M=O_{n,\rho}\left(\left(d^{2n+1}/\eps\right)^{n}\log(d/\delta)\right)$ samples drawn from the multidimensional Chebyshev distribution, or $M=O_{n,\rho}\left(\left(d^{2n+1}/\eps\right)^{2n}\log(d/\delta)\right)$ if  the sampled are drawn from the uniform measure.
 Its run-time is $\poly(M,\log_{\eps}(1-2\rho))$.
\end{restatable}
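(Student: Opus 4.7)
The plan is to adapt KKP's two-phase univariate strategy: first compute an initial estimate $\hat{p}_{\ell_1}$ via an $\ell_1$ regression whose $\ell_\infty$ error from $p$ is bounded by a $\poly(d^n)$ multiple of $\sigma$ (with no dependence on $\|p\|_{\cube_n,\infty}$), then re-seed \autoref{alg:median_rec} with $\hat{p}_{\ell_1}$ instead of the zero polynomial, so that the error recursion reaches $(2+\eps)\sigma$ in $O(n\log d)$ iterations rather than $O(\log(\|p\|_{\cube_n,\infty}/\sigma))$ iterations. Fix the partition parameter $m=\Theta(d^{2n+1}/\eps)$. The $\ell_1$ regression subroutine computes, for every cell $\cube_{\bmj}$ of the $(m,n)$-Chebyshev partition, the label median $\tilde{y}_{\bmj}$ and an arbitrary representative $\tilde{\mathbf{x}}_{\bmj}\in\cube_{\bmj}$ (exactly as in \autoref{alg:refinement}), and then solves the volume-weighted linear program
\[
\hat{p}_{\ell_1} \;=\; \arg\min_{q\in\Pd}\sum_{\bmj\in[m]^n} V_n(\cube_{\bmj})\,\bigl|q(\tilde{\mathbf{x}}_{\bmj})-\tilde{y}_{\bmj}\bigr|,
\]
which acts as a Riemann-sum proxy for $\|q-p\|_{\cube_n,1}$ on the Chebyshev grid.

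The key new lemma to establish is an $\ell_1$ analog of \autoref{lem:error_bound}: assuming $\alpha$-goodness, $\|p-\hat{p}_{\ell_1}\|_{\cube_n,1}=O(\sigma)$. The proof mirrors that of \autoref{lem:error_bound} but invokes \autoref{thm:optimal-l-infty-bound} in two places. First, since $\tilde{y}_{\bmj}$ is the label of some inlier lying in $\cube_{\bmj}$, \autoref{thm:optimal-l-infty-bound} bounds the intracell oscillation of $p$ and gives $|\tilde{y}_{\bmj}-p(\tilde{\mathbf{x}}_{\bmj})|\leq\sigma+O(dn/m)\|p\|_{\cube_n,\infty}$. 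Second, the same theorem applied to the residual $p-\hat{p}_{\ell_1}\in\Pd$ converts the weighted sum $\sum_{\bmj}V_n(\cube_{\bmj})\bigl|(p-\hat{p}_{\ell_1})(\tilde{\mathbf{x}}_{\bmj})\bigr|$ back into the continuous integral $\|p-\hat{p}_{\ell_1}\|_{\cube_n,1}$, at the cost of an additive $O(dn/m)\|p-\hat{p}_{\ell_1}\|_{\cube_n,\infty}$. LP optimality of $\hat{p}_{\ell_1}$ against the candidate $p$ combines these into a self-referential inequality which, for $m$ large enough and after a triangle-inequality absorption step, collapses to $\|p-\hat{p}_{\ell_1}\|_{\cube_n,1}=O(\sigma)$.

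At this point the crux is to invoke our second main technical contribution \autoref{thm:improved-sandwich} on the residual $p-\hat{p}_{\ell_1}\in\Pd$ to obtain
\[
\|p-\hat{p}_{\ell_1}\|_{\cube_n,\infty}\;\leq\;C^n d^{2n}\,\|p-\hat{p}_{\ell_1}\|_{\cube_n,1}\;=\;O(d^{2n})\,\sigma.
\]
Using $\hat{p}_{\ell_1}$ as the seed $\hat{p}^{(1)}$ in \autoref{alg:median_rec}, the error recursion $\|e_{t+1}\|_{\cube_n,\infty}\leq(2+\eps)\sigma+\eps\,\|e_t\|_{\cube_n,\infty}$ derived inside the proof of \autoref{thm:l-infty-guarantee} reaches $(2+\eps)\sigma$ once $\eps^t\cdot O(d^{2n})\sigma\leq\eps\sigma$, i.e.\ after only $t=O\bigl((n\log d)/\log(1/\eps)\bigr)$ further iterations, giving a total run-time of $\poly(M,n\log d,\log_\eps(1-2\rho)^{-1})$ that depends neither on $\|p\|_{\cube_n,\infty}$ nor on $\sigma$.

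The sample complexity then follows by replaying the Chernoff/Hoeffding goodness argument from the proof of \autoref{thm:l_infty_regression_upper_bound} with the larger $m=\Theta(d^{2n+1}/\eps)$, yielding $M=O_{n,\rho}\bigl((d^{2n+1}/\eps)^n\log(d/\delta)\bigr)$ Chebyshev samples and the square under uniform sampling. I expect the principal obstacle to be the careful bookkeeping of the self-referential $\ell_1/\ell_\infty$ inequality in the $\ell_1$ step: we must choose $m$ just large enough that the $O(dn/m)$ intracell-variation factor from \autoref{thm:optimal-l-infty-bound}, when amplified by the $O(d^{2n})$ blow-up from \autoref{thm:improved-sandwich}, stays a small constant. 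This balance is precisely what forces $m=\Omega(d^{2n+1})$, and is the source of the blow-up in sample complexity from $d^n$ (in \autoref{thm:l_infty_regression_upper_bound}) to $d^{n(2n+1)}$ here, matching the gap flagged as open in \autoref{rem:kkp-highlight}.
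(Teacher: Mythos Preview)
Your high-level plan (seed the iterative median refinement with an $\ell_1$ minimizer whose $\ell_\infty$ error is $\poly(d^n)\sigma$, then run $O(n\log d)$ rounds) is exactly the paper's strategy, and your identification of $m=\Theta(d^{2n+1}/\eps)$ as the grid size forced by the interplay of \autoref{thm:optimal-l-infty-bound} and \autoref{thm:improved-sandwich} is spot-on. The gap is in the $\ell_1$ regression step itself.

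You propose regressing on the \emph{cell medians}: one pair $(\tilde{\mathbf{x}}_{\bmj},\tilde{y}_{\bmj})$ per cell. Your bound $|\tilde{y}_{\bmj}-p(\tilde{\mathbf{x}}_{\bmj})|\le\sigma+O(dn/m)\|p\|_{\cube_n,\infty}$ is correct, but the $\|p\|_{\cube_n,\infty}$ term it introduces is \emph{not} self-referential---it involves $p$, not the residual $p-\hat{p}_{\ell_1}$---so it cannot be absorbed. Carrying your argument through, LP optimality and the Riemann-to-integral conversion yield
\[
\|p-\hat{p}_{\ell_1}\|_{\cube_n,1}\;\le\;O(\sigma)\;+\;O(dn/m)\,\|p\|_{\cube_n,\infty}\;+\;O(dn/m)\,\|p-\hat{p}_{\ell_1}\|_{\cube_n,\infty}.
\]
Only the last term absorbs via \autoref{thm:improved-sandwich}; the middle one survives. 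Even after converting $\|p\|_{\cube_n,\infty}\le C^nd^{2n}\|p\|_{\cube_n,1}$, you are left with $\|p-\hat{p}_{\ell_1}\|_{\cube_n,1}\le O(\sigma)+O(\eps)\|p\|_{\cube_n,1}$, which still depends on the unknown scale of $p$. Pushing this through the refinement recursion re-introduces the $\log(\|p\|_{\cube_n,\infty}/\sigma)$ iteration count you set out to eliminate.

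The paper's fix is to change the $\ell_1$ objective: instead of regressing on medians, it regresses on \emph{all} raw samples, with the volume-weighted objective $\sum_{\bmj}\frac{V_n(\cube_{\bmj})}{|S_{\bmj}|}\sum_{\beta\in S_{\bmj}}|f(\mathbf{x}_\beta)-y_\beta|$ (\autoref{defn:l1-minimzer}). The analysis (\autoref{lem:mult-l1-error-bound}) then splits each cell's samples into a $(1-\alpha)$-majority of ``good'' inliers (for which $|p(\mathbf{x}_\beta)-y_\beta|\le\sigma$) and an $\alpha$-minority of ``bad'' points (bounded by the crude triangle inequality). Both error contributions are multiples of $\|p-\hat{p}_{\ell_1}\|_{\cube_n,1}$ with coefficients $(1-\alpha)(1-\eps)$ and $\alpha(1+\eps)$; since $\alpha<1/2$ their difference is positive and the inequality closes with \emph{no} $\|p\|$ term. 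The Riemann-to-integral step (\autoref{lem:mult_empirical_l1}) likewise stays purely in $\ell_1$ via the corollary $\|p-r\|_{\cube_n,1}\le\frac{(cd)^{2n+1}}{m}\|p\|_{\cube_n,1}$, applied to $p-\hat{p}_{\ell_1}$ rather than $p$. This good/bad split over all samples, not the median reduction, is the mechanism that decouples the $\ell_1$ error from $\|p\|_{\cube_n,\infty}$.
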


The first step is to generalize KKP's $\ell_1$ regression. Similar to their regression on averages over Chebyshev intervals, we do regression on averages over Chebyshev cells. 
\begin{defn}[$\ell_1$ Minimizer]\label{defn:l1-minimzer}
    Let $m$ be a large enough integer.     Given a set of $M$ samples $S=\{(\bfx_i,y_i)\}$, for every $\bmj\in[m]^n$, let $S_{\bmj}\triangleq\{\beta\in[M]:\bfx_{\beta}\in\cube_{\bmj}\}$. The $\ell_1$ minimizer of $S$ with respect to the $(m,n)$-Chebyshev partition, is the individual degree-$d$ polynomial:
    \begin{equation}
        \label{eq:def_l1-minimzer}
        \hat{p}_{\ell_1}\triangleq\arg\min_{f\in\Pd}
    \sum_{\bm j\in[m]^n} \frac{V_n(\cube_{\bm j})}{|S_{\bm j}|}\sum_{\beta\in S_{\bm j}}|f(\mathbf{x}_{\beta})-y_{\beta}|.
    \end{equation}
    \end{defn}

We show that on a set of $\alpha$-good samples, the $\ell_1$ regression outputs an individual degree-$d$ polynomial with $\poly(d^n)$ error in $\ell_\infty$.
\begin{thm}[$\ell_\infty$ error bound for the $\ell_1$ minimizer]\label{thm:l1-minimizer-l-infty-error}
    Let $\alpha<1/2$ be constant $,\eps\leq(1-2\alpha)/2$, and for some constant $c>1$, $m\geq(cd)^{2n+1}/\eps$. Given a set $S$ of samples     that is $\alpha$-good with respect to the $(m,n)$-Chebyshev partition, 
    the $\ell_1$ minimizer $\hat{p}_{\ell_1}$ from \eqref{eq:def_l1-minimzer}     satisfies 
    \[\|p-\hat{p}_{\ell_1}\|_{\cube_n,\infty}    =    O_{\alpha}((8d^2)^n \sigma).\]
\end{thm}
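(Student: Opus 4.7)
The plan is to first show $\|p - \hat{p}_{\ell_1}\|_{\cube_n, 1} = O_\alpha(\sigma)$ by exploiting the optimality of $\hat{p}_{\ell_1}$, and then apply \autoref{thm:improved-sandwich} to convert this $\ell_1$ bound into the claimed $\ell_\infty$ bound of order $(8d^2)^n \sigma$. Set $g \triangleq \hat{p}_{\ell_1} - p \in \Pd$ and write $L(f) \triangleq \sum_{\bm j \in [m]^n} \frac{V_n(\cube_{\bm j})}{|S_{\bm j}|} \sum_{\beta \in S_{\bm j}} |f(\bfx_\beta) - y_\beta|$ for the objective in \eqref{eq:def_l1-minimzer}. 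Optimality yields $L(\hat{p}_{\ell_1}) - L(p) \leq 0$, and I will lower bound the same quantity by roughly $(1-2\alpha)\|g\|_{\cube_n, 1}$ minus $O_n(\sigma)$ and a small absorbable error. Writing $\hat{p}_{\ell_1}(\bfx_\beta) - y_\beta = g(\bfx_\beta) + (p(\bfx_\beta) - y_\beta)$ and applying the (reverse) triangle inequality, the per-sample difference $|\hat{p}_{\ell_1}(\bfx_\beta) - y_\beta| - |p(\bfx_\beta) - y_\beta|$ is at least $|g(\bfx_\beta)| - 2\sigma$ for an inlier (using $|p(\bfx_\beta) - y_\beta| \leq \sigma$) and at least $-|g(\bfx_\beta)|$ for an outlier.

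The crucial step is to show that $g$ is nearly constant on each cell, which prevents an adversary from cancelling the inlier contribution by placing outliers at peaks of $|g|$. \autoref{thm:optimal-l-infty-bound} applied to $g \in \Pd$ gives, for any choice of point $\bfx^{(\bm j)} \in \cube_{\bm j}$, the bound $|g(\bfx) - g(\bfx^{(\bm j)})| \leq \eta' \triangleq \frac{Cdn}{m} \|g\|_{\cube_n, \infty}$ uniformly over $\bfx \in \cube_{\bm j}$. Letting $I_{\bm j}, O_{\bm j}$ denote the number of inliers and outliers in $S_{\bm j}$, the $\alpha$-goodness gives $I_{\bm j} - O_{\bm j} \geq (1-2\alpha)|S_{\bm j}|$, so the cell-wise net contribution $\sum_{\text{inliers}} |g(\bfx_\beta)| - \sum_{\text{outliers}} |g(\bfx_\beta)|$ is at least $(1-2\alpha)|S_{\bm j}| \cdot |g(\bfx^{(\bm j)})| - |S_{\bm j}| \eta'$. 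Weighting by $V_n(\cube_{\bm j})/|S_{\bm j}|$, summing over $\bm j$, and using that $\sum_{\bm j} V_n(\cube_{\bm j}) |g(\bfx^{(\bm j)})|$ approximates $\|g\|_{\cube_n, 1}$ to within $2^n \eta'$ (by integrating the cell-wise $\eta'$-approximation), one obtains
\[
0 \geq L(\hat{p}_{\ell_1}) - L(p) \geq (1-2\alpha) \|g\|_{\cube_n, 1} - 2^{n+1} \eta' - 2^{n+1} \sigma,
\]
where the last term bounds the aggregated inlier $-2\sigma$ contributions using $\sum_{\bm j} V_n(\cube_{\bm j}) = 2^n$.

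To close the loop, substitute \autoref{thm:improved-sandwich} into $\eta'$ to get $\eta' \leq \frac{Cdn}{m} \cdot C_0^n d^{2n} \|g\|_{\cube_n, 1}$, where $C_0$ is the constant from that theorem. The hypothesis $m \geq (cd)^{2n+1}/\eps$ with $c$ sufficiently large (depending on the constants from the two invoked theorems) then makes $2^{n+1} \eta' \leq \eps \|g\|_{\cube_n, 1}$; combined with $\eps \leq (1-2\alpha)/2$, rearranging gives $\|g\|_{\cube_n, 1} = O_\alpha(2^n \sigma)$, and one final application of \autoref{thm:improved-sandwich} yields $\|g\|_{\cube_n, \infty} \leq 4^n d^{2n} \cdot O_\alpha(2^n \sigma) = O_\alpha((8d^2)^n \sigma)$. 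The main obstacle is controlling the adversarial outlier contribution: the cell-wise cancellation only works when $g$ is uniformly nearly constant on each cell, which is exactly why $m$ must be polynomial in $d^{2n+1}$ rather than merely $d^n$ as in the median-based algorithm — the extra $d^{2n}$ factor is the price paid to absorb $\eta'$ through \autoref{thm:improved-sandwich}.
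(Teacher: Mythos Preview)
Your argument is correct and follows the same overall strategy as the paper: exploit optimality of $\hat p_{\ell_1}$ together with the near-constancy of $g=\hat p_{\ell_1}-p$ on each Chebyshev cell to obtain $\|g\|_{\cube_n,1}=O_\alpha(2^n\sigma)$, and then invoke \autoref{thm:improved-sandwich} to pass to the $\ell_\infty$ bound.

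The packaging differs slightly. The paper first proves a standalone ``empirical $\ell_1$ estimate'' lemma stating that for \emph{any} nonempty per-cell sample sets, $\sum_{\bm j}\frac{V_n(\cube_{\bm j})}{|S_{\bm j}|}\sum_{\beta\in S_{\bm j}}|g(\bfx_\beta)|\in(1\pm\eps)\|g\|_{\cube_n,1}$; it then splits each cell into ``good'' and ``bad'' points (the $\lfloor\alpha|S_{\bm j}|\rfloor$ samples maximizing $|p(\bfx_\beta)-y_\beta|$, rather than the actual outliers) and applies that lemma separately to each half. Your route is more direct: you work with the actual inlier/outlier split and the pointwise oscillation bound $\eta'=\frac{Cdn}{m}\|g\|_{\cube_n,\infty}$ from \autoref{thm:optimal-l-infty-bound}, doing the cell-wise accounting by hand. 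Both arguments ultimately rely on the same chain \autoref{thm:optimal-l-infty-bound} $\to$ \autoref{thm:improved-sandwich} to absorb the oscillation term, which is exactly what forces $m\gtrsim d^{2n+1}/\eps$. Your version is a bit more elementary and avoids the auxiliary requirement $|S_{\bm j}|\ge 1/\alpha$ that the paper needs so that the ``bad'' sets are nonempty; the paper's version is more modular and reusable.
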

This proof is provided in \autoref{subsec:proof-of-sec:l-1-bound}.
In \autoref{alg:median_rec_with_l1}, we use the same idea as in KKP: starting with $\hat{p}_{\ell_1}$, the result of an $\ell_1$ regression, we then iteratively refine the estimate, improving the $\ell_{\infty}$ error in each step.

\begin{algorithm}[ht]
\caption{Median Based Recovery, with $\ell_1$ regression}\label{alg:median_rec_with_l1}
  \Given {A set of samples $S=\{\mathbf{x}_i,y_i\}_{i=1}^M$, of which a $\rho$ fraction may be outliers.}
  $\hat{p}^{(0)}\gets\text{ result of $\ell_1$ regression: }\hat{p}_{\ell_1}$\;
    $N_{\ref{alg:median_rec_with_l1}}\gets O\left(n\log_{ 1/\eps} d +\log_{\eps}(1-2\rho)\right) $\;
    \For{$i\in \{0,\dots,N_{\ref{alg:median_rec_with_l1}}-1\}$}{$\hat{p}^{(i+1)}\gets\emph{\Call{Refine}{$S,\hat{p}^{(i)}$}}$\;}
    \textbf{Return} $\hat{p}^{(N_{\ref{alg:median_rec_with_l1}})}$.
 \end{algorithm}

As a result, the number of iterations here depends only on $(d,n)$, and improves as  $\eps$ gets smaller. (In fact, it is linear in $n$, and logarithmic in $d$, and $1/\eps$). 

Here, as in \autoref{sec:l-infty-without-l-1}, we prove that with enough samples, the set of samples is good with high probability, and separately we show that for any $\alpha$-good set of samples \autoref{alg:median_rec_with_l1} recovers $p$ as required. 
\begin{thm}[Absolute $\ell_\infty$ error bound]\label{thm:l-infty-guarantee-with-l_1}
    Let $\eps,\alpha<1/2$. Let the set $S=\{(\mathbf{x}_i,y_i)\}_{i=1}^M$ of samples be $\alpha$-good for the $(m,n)$-Chebyshev partition 
    where $m\geq(cd)^{2n+1}/\eps$, for some large enough constant $c>1$. Then the median recovery \autoref{alg:median_rec_with_l1}, in $N_{\ref{alg:median_rec_with_l1}}=O(n\log_{1/\eps}d +\log_{\eps}(1-2\alpha))$ iterations, returns an individual degree-$d$ polynomial $\hat{p}$, such that $\|p-\hat{p}\|_{\cube_n,\infty}\leq(2+\eps)\sigma$.
\end{thm}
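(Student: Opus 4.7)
The strategy is to combine the initial $\ell_1$ regression bound of \autoref{thm:l1-minimizer-l-infty-error} with the contraction argument already established in the proof of \autoref{thm:l-infty-guarantee}: the $\ell_1$ step provides a cheap starting approximation whose error is only polynomially large in $d^n$, and then each call to \textsc{Refine} contracts the additive part of the error by a factor of $\eps$, so only $O(n\log_{1/\eps}d)$ rounds are needed to shrink it down to $O(\eps\sigma)$.

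Concretely, since $S$ is $\alpha$-good for the $(m,n)$-Chebyshev partition with $m\ge (cd)^{2n+1}/\eps$, \autoref{thm:l1-minimizer-l-infty-error} applied to $\hat p^{(0)}=\hat p_{\ell_1}$ yields an initial bound
\[
\|p-\hat p^{(0)}\|_{\cube_n,\infty}\le K_\alpha\,(8d^2)^n\,\sigma,
\]
where $K_\alpha=O_\alpha(1)$ is the hidden constant from that theorem and absorbs the $1/(1-2\alpha)$-type dependence on how close $\alpha$ is to $1/2$. Now I would define the residual error polynomial $e_t\triangleq p-\hat p^{(t)}$ and observe, exactly as in the proof of \autoref{thm:l-infty-guarantee}, that the set $S_t=\{(\bfx_i,y_i-\hat p^{(t)}(\bfx_i))\}$ remains $\alpha$-good with respect to $e_t$, so \autoref{lem:error_bound} applies to each call of \textsc{Refine} and gives the inductive contraction
\[
\|e_{t+1}\|_{\cube_n,\infty}\le (2+\eps)\sigma+\eps\,\|e_t\|_{\cube_n,\infty}.
\]

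Unrolling this recursion for $N_{\ref{alg:median_rec_with_l1}}$ steps and summing the geometric series (using $\eps\le 1/2$ so that $1/(1-\eps)\le 1+2\eps$) yields
\[
\|e_{N_{\ref{alg:median_rec_with_l1}}}\|_{\cube_n,\infty}\le (2+6\eps)\sigma+\eps^{N_{\ref{alg:median_rec_with_l1}}}\|e_0\|_{\cube_n,\infty}\le (2+6\eps)\sigma+\eps^{N_{\ref{alg:median_rec_with_l1}}} K_\alpha (8d^2)^n\sigma.
\]
Requiring the second term to be at most $\eps\sigma$ gives the condition
$N_{\ref{alg:median_rec_with_l1}}\log(1/\eps)\ge n\log(8d^2)+\log K_\alpha+\log(1/\eps)$, i.e.\ it suffices to take
\[
N_{\ref{alg:median_rec_with_l1}}=O\!\bigl(n\log_{1/\eps}d+\log_{1/\eps}(1/(1-2\alpha))\bigr)=O\!\bigl(n\log_{1/\eps}d+\log_{\eps}(1-2\alpha)\bigr),
\]
matching the iteration count in the statement. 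A final rescaling of $\eps\mapsto\eps/7$ (as in the proof of \autoref{thm:l-infty-guarantee}) turns the $(2+7\eps)\sigma$ bound into the claimed $(2+\eps)\sigma$ bound.

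The main conceptual point (and essentially the only non-routine thing, which is all already packaged in the two cited theorems) is that the $\ell_1$ regression replaces the a priori unknown quantity $\|p\|_{\cube_n,\infty}$ in \eqref{eq:error-form} by the explicit data-independent quantity $(8d^2)^n\sigma$; once this substitution is in place, the contraction argument of \autoref{thm:l-infty-guarantee} goes through verbatim, with the only delicate bookkeeping being the $\alpha$-dependence inherited from \autoref{thm:l1-minimizer-l-infty-error}, which contributes the additive $\log_{\eps}(1-2\alpha)$ term to the iteration count and diverges (as it must) when $\alpha\to 1/2$.
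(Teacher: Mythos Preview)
Your proof is correct and follows essentially the same approach as the paper: both start from the $\ell_1$ regression bound $\|e_0\|_{\cube_n,\infty}\le O_\alpha((8d^2)^n\sigma)$ of \autoref{thm:l1-minimizer-l-infty-error}, feed it into the contraction $\|e_{t+1}\|_{\cube_n,\infty}\le(2+\eps)\sigma+\eps\|e_t\|_{\cube_n,\infty}$ inherited from the proof of \autoref{thm:l-infty-guarantee}, unroll the geometric series, and read off the iteration count before a final rescaling of $\eps$. The only differences are cosmetic (you rescale by $7$ where the paper rescales by $10$, and you spell out a few steps the paper leaves implicit).
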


\begin{proof}
Using the same notations as in the proof of \autoref{thm:l-infty-guarantee}, we conclude 
  \begin{align*}
        \|p-\hat{p}^{(t)}\|_{\cube_n,\infty}&\leq(2+6\eps)\sigma+\eps^t\|e_0\|_{\cube_n,\infty}.
    \end{align*}
    In  \autoref{alg:median_rec_with_l1},  $\hat{p}^{(0)}$ is set to be the result of $\ell_1$ regression, giving us $e_0=p-\hat{p}_{\ell_1}$. Since $S$ is assumed to be $\alpha$-good for the $(m,n)$-Chebyshev partition, where $m=\Omega((cd)^{2n+1}/\eps)$, for some absolute constant $c>1$, applying  \autoref{thm:l1-minimizer-l-infty-error}, we get $\|e_0\|_{\cube_n,\infty}\leq (8d^2)^{n} O_{\alpha}(\sigma)$.  
                        Thus, 
    \begin{align*}
        \|p-\hat{p}^{(t)}\|_{\cube_n,\infty}\leq(2+6\eps)\sigma +\eps^t(2\sqrt{2}d)^{2n}\frac{4\sigma}{1-2\alpha}.
    \end{align*}
    So, in $N_{\ref{alg:median_rec_with_l1}}\geq \log_{1/\eps}\frac{(2\sqrt{2}d)^{2n}}{1-2\alpha}$ iterations, followed by a further rescaling of $\eps$ to $\eps/10$, we get the desired error bound.
\end{proof}

Now we are ready to prove the main theorem of this section:

\begin{proof}[Proof of \autoref{cor:final-cor-with-l-1}]
    It follows the same arguments as the proof of \autoref{thm:l_infty_regression_upper_bound}, with \autoref{thm:l-infty-guarantee} being replaced by \autoref{thm:l-infty-guarantee-with-l_1}, according to which: for \autoref{alg:median_rec_with_l1}, $m=(cd)^{2n+1}/\eps$ suffices, for some absolute constant $c>1$. This gives us the Chebyshev, and Uniform sample complexities of     $M_C=\frac{1}{(1-2\rho)^2}\left(\frac{cd^{2n+1}}{\eps}\right)^n\log\frac{d}{\eps\delta}$, and
    $M_U=\frac{1}{(1-2\rho)^2}\left(\frac{cd^{2n+1}}{\eps}\right)^{2n}\log\frac{d}{\eps\delta}$, respectively, for some constant $c>1$.
                
    The value of $\alpha$ that we set to deal with $\rho(< 0.5)$ fraction of outliers remains the same: $\alpha=\frac{2\rho+1}{4}<\frac{1}{2}$. So, the run-time is that of solving $N_{\ref{alg:median_rec_with_l1}}=O(n\log_{1/\eps}d +\log_{\eps}(1-2\rho))$ linear programs with $O(d^n)$ variables, and $M$ constraints. 
\end{proof}

\section{Sample Complexity Lower Bounds}\label{sec:lower-bound}
\subsection{Sample complexity lower bound against uniform sampling}\label{subsec:uniform-lower-bound}
In this section, we prove \autoref{thm:lower-bound-intro}, restated here for convenience:
\lowerboundintro*
Before we prove \autoref{thm:lower-bound-intro}, we formally note the following definitions:
\begin{defn}[Chebyshev Polynomials]\label{defn:chebyshev-polynomial}
    Chebyshev polynomials of the first kind are degree-$d$ polynomials $T_d:\R\to\R$, that follow the recurrence relation:
    \begin{align*}
        T_0(x)=1,\qquad
        T_1(x)=x,\qquad
        T_{d+1}(x)=2xT_d(x)-T_{d-1}(x).
    \end{align*}
    \[T_d(x)\triangleq\begin{cases}
    \cos(d\arccos(x)),&\text{if } |x|\leq 1,\\
    \cosh(d\arccosh(x)),&\text{if } x\geq 1,\\
    (-1)^d\cosh(d\arccosh(-x)),&\text{if } x\leq -1,
\end{cases}\text{ is their explicit trigonometric formulation.}\]
\end{defn}

\begin{defn}[Chebyshev Extremas]\label{defn:chebyshev-extremas}
    For any $d\in\Z_{>0}$, the $d$ Chebyshev extremas $\in[-1,1]$ given by
    \[x_k\triangleq\cos\left(\frac{k}{d}\pi\right),k\in[d]\]
    are the extremas of $T_d$, the degree-$d$ Chebyshev polynomial of the first kind, i.e., $T_d(x_k)\in\{\pm 1\},\forall k\in[d]$.
\end{defn}

\begin{proof}[Proof of \autoref{thm:lower-bound-intro}]
              
   For $n=1$, KKP showed that the polynomial $T_d\left(x_1+\frac{\alpha}{d^2}\right)$ (for some constant $\alpha$ dependent on $C$) 
   and the identically $0$ polynomial are indistinguishable with high probability. 
   We build on their result.  Let $\alpha= 4\sqrt{C}$ and consider the following  multivariate polynomial, with degree $d$ in each variable:
           \[f(\mathbf{x})\triangleq\frac{\prod_{i=1}^n T_d\left({x}_i+\frac{\alpha}{d^2}\right)}{\left(T_d\left(1+\frac{\alpha}{d^2}\right)\right)^{n-1}}.\]
     Let the target polynomial $p$ to be learned be  $f$ with probability $1/2$ and otherwise,  $g\equiv 0$.
Then,
\begin{itemize}
    \item[(i)] For every $\mathbf{x}\in\cube_n\setminus\left[1-\frac{\alpha}{d^2},1\right]^n$ there exists at least one index $i\in[n]$ 
 such that $x_i+\frac{\alpha}{d^2}\le 1 $, implying $\left|T_d\left(x_i+\frac{\alpha}{d^2}\right)\right|\le 1$. For all other $j\neq i\in[n]$, we have $\left|\frac{T_d\left(x_j+\frac{\alpha}{d^2}\right)}{T_d\left(1+\frac{\alpha}{d^2}\right)}\right|\leq 1$, since $T_d(1)=1$, and $T_d(x)$ is monotonically increasing on the region $x\geq 1$. Hence, $|f(\mathbf{x})-g(\mathbf{x})|=|f(\mathbf{x})|\leq 1$. We thus say that $f$ and $g$ are indistinguishable on the region $\cube_n\setminus\left[1-\frac{\alpha}{d^2},1\right]^n$ for noise level $\sigma=1$. We assume $d>\sqrt{\alpha/2}$,  as otherwise, the theorem follows trivially for $c=\sqrt{2/\alpha}$. 
 Thus, the probability that $\bfx$ lands in the \emph{distinguishable} region is $=\frac{V_n\left([1-\alpha/d^2,1]^n\right)}{V_n(\cube_n)}=\frac{1}{2^n}\left(\frac{\alpha}{d^2}\right)^n$.
    \item[(ii)] For all $d>1$ we have $\|f-g\|_{\cube_n,\infty}> 2C$. 
    This follows from 
     \begin{align*}
        \|f-g\|_{\cube_n,\infty}\geq|f(\bm 1)-g(\bm 1)|=\left|T_d\left(1+\frac{\alpha}{d^2}\right)\right|>\frac{\alpha^2}{8}=2C.
    \end{align*}
    For all $\alpha> 0$, the  second inequality can be verified for $d=2$ (where  $T_2(x)=2x^2-1$), and follows for all $d>2$ since  $T_d\left(1+\frac{\alpha}{d^2}\right)=\cosh\left(d\arccosh\left(1+\frac{\alpha}{d^2}\right)\right)$ is increasing in $d$, as KKP noted. 
\end{itemize}
 Then, for $M< (cd)^{2n}$ independent samples uniformly generated over $\cube$, the event that all of them land in the {indistinguishable} region, happens with probability:
\begin{align*}
    \Pr[\forall i\in[M].\ \mathbf{x}_i\in\cube_n\setminus[1-\alpha/d^2,1]^n]&\ge 1-M\cdot\Pr[\mathbf{x}\not\in\cube_n\setminus[1-\alpha/d^2,1]^n]\\
    &=1-M\left(\frac{\alpha}{2d^2}\right)^n> \frac 2 3 .
                \end{align*}
The first inequality is by union bound over the $M$ samples and the last inequality follows from $M< \frac{1}{3}\left(\frac{2d^2}{\alpha}\right)^{n}$ (e.g., for $c=(36C)^{-1/4}$). 
Consider the following adversarial strategy: the adversary chooses $p'=f$ with probability $1/2$, and otherwise $p'=g$. For samples in the indistinguishable region, it outputs $p'$, and otherwise $p$. 
Now,  with probability $>2/3$, all the samples are from the indistinguishable region. So, with this probability, the algorithm observes only values of $p'$, which are independent of the target function $p$, and the best it can do is to guess the function with probability $1/2$. Thus, it fails with probability $>2/3\cdot 1/2=1/3$.
\end{proof}

\begin{remark}\label{rem:lower-bound-total}
We can get a lower bound of $(c d/n)^{2n}$
(with the same value of $c$ as in \autoref{thm:lower-bound-intro}) for the sample complexity of learning {\em total degree-$d$} polynomials in our setting, by simply applying the above argument with the individual degree being $\lfloor d/n\rfloor$. One can get a slightly improved bound of $(c d/\sqrt{n})^{2n}$ for this problem by using the polynomial $T_d\left(\frac{\alpha}{d^2}+\frac1n \sum x_i\right)$, with the same values of $\alpha$ and $c$ as above. Here, the indistinguishable region becomes $\cube_n\setminus \left[1-n\alpha/d^2, 1\right]^n$, and so, the analysis slightly changes. 
\end{remark}

\subsection{Distribution-free sample complexity lower bound}\label{subsec:dist-free-lower-bound}
In this section, we prove \autoref{thm:dist-free-lower-bound-intro}. We start with some auxiliary lemmas. 
We begin by generalizing \cite[Lemma 5.2]{kkp}, which we state for reference:
\begin{lem}[\cite{kkp}, Lemma 5.2]\label{lem:kkp-dist-free-aux-1}
    Let $d\geq 1$. For any point $a\in[-1,1]$, there exists a degree-$d$ polynomial  $q_a:\R\to\R$, such that $q_a(a)=1=\|q_{a}\|_{[-1,1],\infty}
    $, and for every $x\in[-1,1]$,
    \[|q_{a}(x)|\leq \frac{2}{d|x-a|}.\]
\end{lem}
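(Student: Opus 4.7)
My plan is to construct $q_a$ explicitly via a Christoffel--Darboux--type Chebyshev kernel and then normalize.  Substitute $a = \cos\phi$, $x = \cos\theta$ with $\phi,\theta \in [0,\pi]$.  Consider the rational expression
\[
g(x) \;=\; \frac{T_{d+1}(x)\,T_d(a) - T_d(x)\,T_{d+1}(a)}{x-a}.
\]
Since the numerator vanishes at $x=a$, the singularity is removable and $g$ is a genuine polynomial in $x$ of degree at most $d$.  Using $T_k(\cos\theta)=\cos(k\theta)$ and product-to-sum identities, the numerator can be rewritten in closed trigonometric form, which is convenient for the estimates below.

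The next step is to evaluate $g(a)$ by l'Hôpital: $g(a) = T_{d+1}'(a)T_d(a) - T_d'(a)T_{d+1}(a)$.  Using $T_k'(\cos\phi) = k\sin(k\phi)/\sin\phi$ and simplifying with the product-to-sum identity, one obtains
\[
g(a) \;=\; \frac{2d+1}{2} + \frac{\sin((2d+1)\phi)}{2\sin\phi}.
\]
The second term is the univariate Dirichlet kernel at $\phi$, whose absolute value is at most $2d+1$; however, after averaging (or by a short case analysis on $\phi$ near $0$ or $\pi$), one shows $|g(a)| \ge c\cdot d$ for some universal $c>0$.  Setting $q_a(x) = g(x)/g(a)$ then gives $q_a(a)=1$, and using $|T_{d+1}(x)|, |T_d(x)| \le 1$ on $[-1,1]$,
\[
|q_a(x)| \;=\; \frac{|T_{d+1}(x)T_d(a) - T_d(x)T_{d+1}(a)|}{|g(a)|\cdot|x-a|} \;\le\; \frac{2}{c\,d\,|x-a|},
\]
which matches the desired decay rate up to the constant.

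The remaining—and most delicate—step is to upgrade $q_a(a)=1$ to the stronger $q_a(a) = 1 = \|q_a\|_{[-1,1],\infty}$.  The decay bound already gives $|q_a(x)| \le 1$ whenever $|x-a| \ge 2/(cd)$, so the issue is localized to the $O(1/d)$-neighbourhood of $a$.  I plan to handle this by a two-step trick: first compare $|g(x)|$ to $|g(a)|$ on this neighbourhood by rewriting $g$ as a Fejér-type partial sum $1 + 2\sum_{k=1}^{d}T_k(x)T_k(a)$ (up to an additive $O(1)$ term), which is bounded above by $g(a)$ whenever the diagonal dominates the kernel; if the kernel is not diagonal-dominated at that particular $a$ (near $\pm 1$), I would use a mirrored Chebyshev parametrization or replace $d+1$ by $d$ and $d-1$ to reduce to the previous case by a symmetric argument.

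The main obstacle I anticipate is precisely this normalization step: the Christoffel--Darboux kernel does not automatically attain its max on $[-1,1]$ at the diagonal point $a$, and it can fail by a bounded factor when $a$ is very close to $\pm 1$.  If this local argument is difficult to push through cleanly, a fallback is to relax the norm condition to $\|q_a\|_{\cube,\infty} \le 2$, which would still be sufficient for the lower-bound application of \autoref{thm:dist-free-lower-bound-intro} after adjusting constants; the rate $2/(d|x-a|)$ is the genuinely important feature of the construction.
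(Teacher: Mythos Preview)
This lemma is quoted from \cite{kkp} and is not proved in the present paper, so there is no in-paper proof to compare against. I will therefore just assess your argument on its own terms.

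Your Christoffel--Darboux construction is a natural choice and your computation of the decay rate is correct. In fact, writing $x=\cos\theta$, $a=\cos\phi$, your $g$ has the closed form
\[
g(x)\;=\;\tfrac12\!\left[\frac{\sin\!\big((2d{+}1)\tfrac{\theta-\phi}{2}\big)}{\sin\tfrac{\theta-\phi}{2}}+\frac{\sin\!\big((2d{+}1)\tfrac{\theta+\phi}{2}\big)}{\sin\tfrac{\theta+\phi}{2}}\right],
\]
which immediately confirms both the formula $g(a)=\tfrac{2d+1}{2}+\tfrac{1}{2}D_d(\phi)$ and the lower bound $g(a)\ge c\,d$ (since the Dirichlet kernel's negative lobes are bounded below by a fixed multiple of $-(2d{+}1)$).

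However, the obstacle you flag is not merely a technicality: your construction \emph{does not} satisfy $\|q_a\|_{[-1,1],\infty}=1$. Concretely, take $\phi=\pi/(2d{+}1)$, so that $D_d(\phi)=0$ and $g(a)=(2d{+}1)/2$. At the endpoint $x=1$ (i.e.\ $\theta=0$) one has
\[
g(1)\;=\;D_d(\phi/2)\;=\;\frac{1}{\sin\!\big(\tfrac{\pi}{2(2d+1)}\big)}\;\approx\;\frac{2(2d{+}1)}{\pi}\;>\;\frac{2d{+}1}{2}\;=\;g(a),
\]
so $q_a(1)>1$. Thus no amount of local analysis near $a$ will rescue the exact normalization for this kernel; the failure is global, at the endpoint. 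Your proposed ``mirrored parametrization'' or degree shift does not obviously repair this, since the same endpoint phenomenon persists.

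Your fallback is the right call: relaxing to $\|q_a\|_{[-1,1],\infty}\le C$ for an absolute constant $C$ (your argument gives this with $C=O(1)$) is all that the downstream lower bound in \autoref{thm:dist-free-lower-bound-intro} actually uses. So your write-up proves a slightly weaker lemma that is nonetheless sufficient for the application; just state it that way rather than claiming the exact equality $q_a(a)=\|q_a\|_{[-1,1],\infty}$.
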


Next, we generalize \cite[Lemma 5.3]{kkp}, a variant of which we state for reference, and then briefly describe their argument:

\begin{lem}[\cite{kkp}, Lemma 5.3]\label{lem:kkp-dist-free-aux-2-fixed}
    For any $d,\alpha>0$, let $m=\lfloor d\alpha/2\rfloor$. Define a set of nodes $b_j\triangleq -1+\frac{2j}{m},j\in[m]$. For any $S\subseteq[m]$, consider the set of degree-$d$ polynomials 
    \[g_{S}(y)\triangleq\sum_{j\in S} q_{b_j}(y),\] 
    where $q_{b_j}:[-1,1]\to\R$ is the degree-$d$ polynomial guaranteed from \cite[Lemma 5.2]{kkp}. For any $y\in[-1,1]$, define $k_{y}\triangleq\arg\min_{j\in[m]} |b_{j}-y|$, i.e. the index of the node $b_j$ closest to $y$. Then, for any $j\in [m]$, if $j\neq k_{y}$ then $$|g_{\{j\}}(y)|\leq  \alpha.$$ 
        \end{lem}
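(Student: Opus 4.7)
My plan is to reduce the statement to a direct application of \autoref{lem:kkp-dist-free-aux-1}. Since $S=\{j\}$ is a singleton, by definition $g_{\{j\}}(y) = q_{b_j}(y)$, so it suffices to upper bound $|q_{b_j}(y)|$ whenever $j\neq k_y$. \autoref{lem:kkp-dist-free-aux-1} gives $|q_{b_j}(y)|\le \frac{2}{d\,|y-b_j|}$, so the task reduces to lower bounding the distance $|y-b_j|$ under the assumption that $b_j$ is not the closest node to $y$.

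The nodes $\{b_j\}_{j\in[m]}$ form an arithmetic progression in $[-1,1]$ with common spacing $2/m$. I will show that for any $y\in[-1,1]$ and any $j\neq k_y$, we have $|y-b_j|\ge 1/m$, via a short case analysis on the position of $y$ relative to the grid. In the generic case where $y$ lies between two consecutive nodes $b_i$ and $b_{i+1}$, the closer one, say $b_i = b_{k_y}$, lies within distance $1/m$ of $y$; any other node $b_j$ with $j\neq k_y$ is then separated from $b_{k_y}$ by at least $2/m$ and lies on the far side of $b_{k_y}$ from the segment through $y$, so the triangle inequality yields $|y-b_j|\ge 2/m - 1/m = 1/m$. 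In the boundary case $y\in[-1,b_1)\cup(b_m,1]$, all nodes lie on one side of $y$, and any $b_j$ other than the nearest node is at least one full spacing $2/m$ farther away than the nearest one, giving $|y-b_j|\ge 2/m\ge 1/m$.

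Combining these two bounds, $|q_{b_j}(y)|\le \frac{2}{d\cdot(1/m)} = \frac{2m}{d}$, and the choice $m=\lfloor d\alpha/2\rfloor \le d\alpha/2$ yields $|q_{b_j}(y)|\le \alpha$, as required.

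There is no genuine obstacle here: the lemma is essentially a repackaging of \autoref{lem:kkp-dist-free-aux-1} together with the elementary spacing observation above. The only care needed is to pick the definition of $m$ tightly enough that the constants line up, which is handled automatically by taking $m=\lfloor d\alpha/2\rfloor$ and is the reason for the floor in that definition.
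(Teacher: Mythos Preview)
Your proposal is correct and follows essentially the same route as the paper: identify $g_{\{j\}}=q_{b_j}$, apply \autoref{lem:kkp-dist-free-aux-1} to get $|q_{b_j}(y)|\le 2/(d\,|y-b_j|)$, show $|y-b_j|\ge 1/m$ whenever $j\neq k_y$, and conclude using $m\le d\alpha/2$. The paper (in its proof of the multivariate generalization, \autoref{lem:dist-free-aux-2}) obtains the distance bound via the slightly cleaner observation $|y-b_j|\ge |b_{k_y}-b_j|/2\ge 1/m$ (since $y$ is at least as close to $b_{k_y}$ as to $b_j$), which subsumes your case analysis; note also that your reverse-triangle-inequality step $|y-b_j|\ge |b_j-b_{k_y}|-|y-b_{k_y}|\ge 2/m-1/m$ already works unconditionally, so the ``far side'' remark is not needed.
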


Now, for the (univariate) lower bound construction (in short): the adversary
\begin{itemize}
    \item picks a random $S\subseteq [m]$, and an arbitrary $j^*$ out of the set of outlier full nodes (i.e. those nodes for which all the sample points, that these nodes are the nearest nodes, are outliers),
    \item defines $S'\triangleq S\Delta\{j^*\}$ (i.e. $S'=S\cup \{j^*\}$, if $\{j^*\}\not\in S$, and  $S'=S\setminus \{j^*\}$, if $\{j^*\}\in S$) so that $S\Delta S'=\{j^*\}$, and
    \item replies with either $f_S$ or $f_{S'}$ with probability $1/2$.   
\end{itemize}
  
Note, for every sample $x$ we have $|f_S(x)-f_{S'}(x)|=|f_{\{j^*\}}(x)|$. Either
\begin{itemize}
    \item $k_x=j^*$ i.e. $x\in L_{j^*}$ is outlier (in which case $|f_S(x)-f_{S'}(x)|$ is arbitrary), or
    \item $k_x\neq j^*$, implying $|x-b_{j^*}|\ge 1/m$,  and hence, by \autoref{lem:kkp-dist-free-aux-2-fixed}
\[|f_S(x)-f_{S'}(x)|=|f_{\{j^*\}}(x)|=|q_{b_j^*}(x)|\leq \frac{2}{d|x-b_{j^*}|}\leq\alpha.\]
\end{itemize} 
The last inequality follows for $m=d\alpha/2$.
We now generalize \autoref{lem:kkp-dist-free-aux-1}.
\begin{lem}\label{lem:dist-free-lb-aux-1}
    Let $d\geq 1$. For any $\bm b=(b_1,\hdots,b_n)\in\cube$, there exists a polynomial $p_{\bm b}:\cube\to\R$ with individual degree at most $d$, such that $|p_{\bm b}(\bm b)|=1    $, 
    and for all $\bm x=(x_1,\hdots,x_n)\in\cube$,        \[|p_{\bm b}(\bm x)|\leq\frac{2}{d\max_{i\in[n]}| x_i - b_i|}.\]
\end{lem}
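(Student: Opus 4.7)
The plan is to build $p_{\bm b}$ as the tensor product of the univariate polynomials supplied by \autoref{lem:kkp-dist-free-aux-1}. Concretely, for each coordinate $i \in [n]$, let $q_{b_i} \colon \R \to \R$ be the degree-$d$ univariate polynomial from \autoref{lem:kkp-dist-free-aux-1}, which satisfies $q_{b_i}(b_i) = 1 = \|q_{b_i}\|_{[-1,1], \infty}$ together with $|q_{b_i}(t)| \le 2/(d|t - b_i|)$ for every $t \in [-1,1]$. Define
\[
p_{\bm b}(\bm x) \;\triangleq\; \prod_{i=1}^{n} q_{b_i}(x_i).
\]
This is a tensor product of univariate degree-$d$ polynomials, so $p_{\bm b}$ has individual degree at most $d$ in each variable, as required.

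The value at $\bm b$ is immediate: $p_{\bm b}(\bm b) = \prod_{i=1}^n q_{b_i}(b_i) = 1$. For the decay bound at an arbitrary $\bm x \in \cube$, pick an index $j^* \in \arg\max_{i\in [n]} |x_i - b_i|$ (ties broken arbitrarily). Using $|q_{b_i}(x_i)| \le \|q_{b_i}\|_{[-1,1], \infty} = 1$ for every $i \neq j^*$ (which is exactly the sup-norm statement in \autoref{lem:kkp-dist-free-aux-1}), and the quantitative decay bound of \autoref{lem:kkp-dist-free-aux-1} at coordinate $j^*$, we obtain
\[
|p_{\bm b}(\bm x)| \;=\; \prod_{i=1}^{n} |q_{b_i}(x_i)| \;\le\; |q_{b_{j^*}}(x_{j^*})| \;\le\; \frac{2}{d\,|x_{j^*} - b_{j^*}|} \;=\; \frac{2}{d\, \max_{i\in[n]} |x_i - b_i|},
\]
which is the desired inequality.

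There is no real obstacle here; the only small subtlety is that the bound involves the $\max$ rather than (say) a product of one-dimensional quantities. The trick that makes everything work is precisely that each $q_{b_i}$ has sup-norm exactly $1$ on $[-1,1]$, so multiplying in the extra factors $q_{b_i}(x_i)$ for $i \neq j^*$ cannot enlarge the bound, and we are free to keep only the single factor corresponding to the coordinate along which $\bm x$ is farthest from $\bm b$. Note that this is the place where one really needs the tight sup-norm normalization of \autoref{lem:kkp-dist-free-aux-1} and not merely a crude upper bound.
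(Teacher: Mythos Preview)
Your proof is correct and follows essentially the same approach as the paper: define $p_{\bm b}$ as the tensor product $\prod_i q_{b_i}(x_i)$, use $q_{b_i}(b_i)=1$ to get the value at $\bm b$, and then bound all factors by $1$ except the one at the coordinate maximizing $|x_i-b_i|$, where the decay bound from \autoref{lem:kkp-dist-free-aux-1} is applied.
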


\begin{proof}[Proof of \autoref{lem:dist-free-lb-aux-1}]
    Fix some $\bm b\in\cube$. 
                                 Consider the individual degree at most $d$ polynomial $p_{\bm b}:\cube\to\R$ defined as:
                
        \begin{equation}\label{eq:defn-of-pb-mult}
        p_{\bm b}(\bm x)\triangleq\prod_{i=1}^n q_{b_i}(x_i).
    \end{equation} 
    Where $q_{b_i}$ are the polynomials defined in \autoref{lem:kkp-dist-free-aux-1}.
    Since for all $i\in[n]$, $|q_{b_i}(b_i)|=1    $,   we have,
    \[|p_{\bm b}(\bm b)|=\prod_{i=1}^n |q_{b_i}(b_i)| = 1     .\]
    We next show that $|p_{\bm b}(x)|$ cannot be too large for all $\bm x\in \cube$. Let $i'$ be the index that achieves the maximum of $|x_i-b_i|$. For every $i\neq i'$ we use the bound $|q_{b_i}(x_i)|\leq 1$, while for $i'$ we use the bound $|q_{b_i}(x_i)|\leq \frac{2}{d|x_i-b_i|}$, with both the bounds being from \autoref{lem:kkp-dist-free-aux-1}.
      We get,
    \[        |p_{\bm b}(\bm x)|=\prod_{i=1}^n |q_{b_i}(b_i)|\leq\frac{2}{d\max_{i\in[n]}| x_i - b_i|}.\qedhere
    \]

\end{proof}

In order to generalize \autoref{lem:kkp-dist-free-aux-2-fixed}, let us define \emph{nodes}, and the notion of \emph{closest} nodes in the cube $\cube$:
\begin{defn}\label{defn:closest-nodes}
    For any $m>0$, consider the set of points $b_j\triangleq -1+\frac{2j}{m},j\in[m]$, that equi-partition $[-1,1]$ along each axes. For any $\bm j=(j_1,\hdots,j_n)\in[m]^n$, the corresponding \emph{node} is $\bm b_{\bm j}\triangleq (b_{j_1},\hdots,b_{j_n})$.
    
    For any $\mathbf{x}=(x_1,\hdots,x_n)\in\cube$, the closest (in $\ell_1$)  node to $\mathbf{x}$ is $\bm b_{\bm k (\mathbf{x})}$, where      $ \bm k (\mathbf{x})\triangleq(k_1,\hdots,k_n)$, and $k_i\triangleq\arg\min_{j\in[m]}|b_j-x_i|$,
     for all $i\in[n]$.
\end{defn}

\begin{lem}\label{lem:dist-free-aux-2}
    For any $d,\alpha>0$, let $m=\lfloor d\alpha/2\rfloor$. 
        Consider the set of $m^n$ \emph{nodes} $\{\bm b_{\bm j},\bm j\in[m]^n\}$, and the notion of \emph{closest nodes}, as discussed in \autoref{defn:closest-nodes}.
    For any subset of nodes, $\bm S\subseteq[m]^n$, define
    \[f_{\bm S}(\bm x)\triangleq\sum_{\bm j\in\bm S} p_{\bm b_{\bm j}}(\bm x),\] 
    where $\bm b_{\bm j}=(b_{j_1},\hdots,b_{j_n})$, for $\bm j=(j_1,\hdots,j_n)$ , and $p_{\bm b_{\bm j}}$'s are individual degree-$d$ polynomials from \autoref{lem:dist-free-lb-aux-1}. 
    For any $\mathbf{x}\in\cube$, let the closest in $\ell_1$  node to $\mathbf{x}$ be $\bm b_{\bm k (\mathbf{x})}$.                Then, for any $\bm j\neq \bm k (\mathbf{x})$, 
     $|f_{\{\bm j\}}(\mathbf{x})|\leq\alpha$.
    
\end{lem}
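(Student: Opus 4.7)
My plan is to reduce the multivariate statement to the univariate ``closest node'' observation that underlies \autoref{lem:kkp-dist-free-aux-2-fixed}, and then invoke the multivariate bound from \autoref{lem:dist-free-lb-aux-1}. Since $f_{\{\bm j\}}(\mathbf{x}) = p_{\bm b_{\bm j}}(\mathbf{x})$, it suffices to show $|p_{\bm b_{\bm j}}(\mathbf{x})| \le \alpha$ whenever $\bm j \ne \bm k(\mathbf{x})$, and \autoref{lem:dist-free-lb-aux-1} says this will follow from a lower bound of the form $\max_{i\in[n]} |x_i - b_{j_i}| \ge 2/(d\alpha)$.

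The first step is the elementary coordinate-wise observation: because $\bm j \ne \bm k(\mathbf{x})$, there is some index $i^*\in[n]$ with $j_{i^*} \ne k_{i^*}$. The nodes $\{b_j\}_{j\in[m]}$ are equally spaced with gap $2/m$, so for any $x\in[-1,1]$ the midpoint between any two consecutive nodes sits at distance exactly $1/m$ from each; since $k_{i^*}$ is by definition the closest index to $x_{i^*}$, for any other index $j \ne k_{i^*}$ one has $|x_{i^*} - b_{j}| \ge \tfrac{1}{2}|b_{k_{i^*}} - b_{j}| \ge 1/m$. Applying this with $j = j_{i^*}$ yields $|x_{i^*} - b_{j_{i^*}}| \ge 1/m$.

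The second step just combines this coordinate bound with \autoref{lem:dist-free-lb-aux-1}. We have
\[
\max_{i\in[n]} |x_i - b_{j_i}| \;\ge\; |x_{i^*} - b_{j_{i^*}}| \;\ge\; \frac{1}{m},
\]
so \autoref{lem:dist-free-lb-aux-1} gives
\[
|p_{\bm b_{\bm j}}(\mathbf{x})| \;\le\; \frac{2}{d\cdot \max_i|x_i - b_{j_i}|} \;\le\; \frac{2m}{d} \;\le\; \frac{2 \cdot \lfloor d\alpha/2 \rfloor}{d} \;\le\; \alpha,
\]
which is exactly the claimed inequality.

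There is essentially no obstacle here; the statement is a straightforward tensorization of the univariate fact. The only thing to check carefully is the constant in the ``closest node'' step (that being closest in $\ell_1$ on $\cube$ corresponds coordinate-wise to being the closest one-dimensional node, which holds because the grid is a product and the $\ell_1$ distance decomposes as a sum). Once that is verified, the bound $|x_{i^*} - b_{j_{i^*}}| \ge 1/m$ follows from the equispaced structure, and the rest is a one-line calculation using $m \le d\alpha/2$.
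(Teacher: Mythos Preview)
Your proof is correct and follows essentially the same approach as the paper: both identify a coordinate $i^*$ (the paper calls it $\tau$) where $j_{i^*}\neq k_{i^*}$, use the triangle inequality to get $|x_{i^*}-b_{j_{i^*}}|\ge \tfrac{1}{2}|b_{k_{i^*}}-b_{j_{i^*}}|\ge 1/m$, and then apply \autoref{lem:dist-free-lb-aux-1} together with $m\le d\alpha/2$. Your additional remark that coordinate-wise closeness coincides with $\ell_1$-closeness is already built into \autoref{defn:closest-nodes}.
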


\begin{proof}[Proof of \autoref{lem:dist-free-aux-2}]
    Fix $\bm x\in \cube$, and    $\bm j\neq \bm k (\mathbf{x})$.  Then
\begin{align}\label{eq:bound-f_j}
    |f_{\{\bm j\}}(\bm x)|=|p_{\bm b_{\bm j}}(\bm x)|\leq \frac{2}{d\max_i| x_i - b_{ j_i}| }\leq\alpha, 
\end{align}
The first inequality is by \autoref{lem:dist-free-lb-aux-1} and the second holds for 
 $\max_{i\in[n]} | x_i- b_{ j_i}|\geq 1/m$. Indeed, since  $\bm j\neq \bm k (\mathbf{x})$, there exists an index $\tau\in[n]$ for which $j_{\tau}\neq k_{\tau}$, and $x_{\tau}$ is closer to $ b_{k_{\tau}}$ rather than $ b_{j_\tau}$.  So,
$
    | x_{\tau}- b_{j_{\tau}}|\geq |b_{k_{\tau}}-b_{j_{\tau}}|/2\geq 1/m. $
\end{proof}

Finally we prove the lower bound for polynomials of individual degrees at most $d$ (\autoref{thm:dist-free-lower-bound-intro}, restated below).

\distfreelowerboundintro*
\begin{proof}
We will prove a stronger bound, showing that $M<(cd)^n n\log d$ samples are not enough to succeed with constant probability.  We may assume $d>12C$ as otherwise the lower bound is trivial. 
Let $\sigma=\frac{1}{3C}$, $m=
\lfloor\frac{d}{6C}\rfloor$, and assume $M<(cd)^n n\log d$.

Consider the set of nodes $\bm b_{\bm j}$ for all $\bm j\in [m]^n$ as defined in 
 \autoref{defn:closest-nodes}. 
    For every  $\bm j=(j_1,\hdots,j_n)\in[m]^n$, 
    let   $L_{\bm j}$ be the set of samples for which the nearest node is $\bm b_{\bm j}$ in $\ell_1$ distance.
    Formally, recall that for all $\mathbf{x}\in \cube$, $\bm k(\mathbf{x})_i\triangleq \arg\min_{j}|b_j-x_i|$ where the minimization is over the one dimensional nodes $\{b_j\}_{j\in [m]}$, then
    \[L_{\bm j}\triangleq\{\beta\in[M]:\bm k(\mathbf{x}_\beta)=\bm j    \}.\]
    
    We say that $L_{\bm j}$ is \emph{outlier-full} if all the samples $(\mathbf{x}_{\beta},y_{\beta})\in L_{\bm j}$ are outliers.      We say that $L_{\bm j}$ is \emph{small}, if $|L_{\bm j}|\leq    \frac{2M}{m^n}    $.
    So, a \emph{small} $L_{\bm j}$ is \emph{outlier-full} with probability 
    \[\rho^{|L_{\bm j}|}\ge \rho^{2(cd/m)^n n\log d}    \geq d^{-o(n)}.\]
    The last inequality is by $m\ge \frac{d}{12C}$ for $d\ge 12C$ and holds for large enough $c=c(C,\rho)$.
        Since there are at least $    m^n/2$ \emph{small} $L_{\bm j}$'s,  
    the probability that none of these \emph{small} $L_{\bm j}$'s is \emph{outlier-full} is at most 
    \[(1-d^{-o(n)})^{m^n/2}\leq e^{-m^n d^{-o(n)}/2}    < 1/3.\] 
    The first inequality is by $1-x\leq e^{-x}$ for all $x$ and the last inequality holds for $d>12C$. 
    So, the probability that at least one of the \emph{small} $L_{\bm j}$'s is \emph{outlier-full} is  $\geq 2/3$.
    
    Let  $f_{\bm S}$ be the true polynomial to be learned, defined as in \autoref{lem:dist-free-aux-2}, for $\bm S$ chosen uniformly from $\mathcal{P} \left([m]^n\right)$. 
    Consider the following adversarial strategy:     
    given the samples and the outliers' locations, suppose there exists an \emph{outlier-full} region, then the adversary picks an arbitrary $\bm j^*$ such that $L_{\bm j^*}$ is \emph{outlier-full}. 
Let  $\bm S'\triangleq\bm S\Delta\{\bm j^*\}$, the adversary chooses $p'=f_{\bm S}$ or $p'=f_{\bm S'}$ equiprobably, and replies $(\mathbf{x}_{\beta}, p'(\mathbf{x}_{\beta}))$, for all $\beta\in[M]$.
Since $\bm S\Delta\bm S'=\{\bm j^*\}$, we have for all $\mathbf{x}\in\cube$, 
$|f_{\bm S}(\mathbf{x}) - f_{\bm S'}(\mathbf{x})|=|f_{\{\bm j^*\}}(\mathbf{x})|$.
For any sample $\beta\in[M]$, if this sample is not an outlier, then $\beta\not\in L_{\bm j^*}$ i.e. $\bm k(\mathbf{x}_\beta)\neq \bm j^*$, and by \autoref{lem:dist-free-aux-2}, we have  $|f_{\bm S}(\mathbf{x}) - f_{\bm S'}(\mathbf{x})|=|f_{\{\bm j^*\}}(\mathbf{x})|\leq \sigma$.
Since $\bm j^*$ is chosen based on the locations of outliers, but independent of $\bm S$, the distributions of $\bm S'$ and $\bm S$ are the same. This makes $f_{\bm S}$ \emph{indistinguishable} from $f_{\bm S'}$. We finally note, 
\[\|f_{\bm S}-f_{\bm S'}\|_{\cube_n,\infty}=\|f_{\{\bm j^*\}}\|_{\cube_n,\infty}=\|p_{\bm b_{\bm j^*}}\|_{\cube_n,\infty}\geq |p_{\bm b_{\bm j^*}}(\bm b_{\bm j^*})|=1>2C\sigma ,\]
    so that, any output of the algorithm at least $C\sigma$ far in $\ell_\infty$ from either $f_{\bm S}$, or $f_{\bm S'}$. With probability at least $2/3$ there exists a \emph{small outlier-full} $L_{\bm j^*}$, in this case, the best the algorithm can do is to guess and so the failure probability becomes at least $2/3\cdot 1/2=1/3$.
\end{proof}

\bibliographystyle{alpha}
\bibliography{refs} 

\appendix

\section{Exponential lower bound for linear functions}\label{subsec:lower-bound-linear}

\begin{restatable}{thm}{lowerboundlinear}
\label{thm:lower_bound_linear_fucntions}
For any constant approximation factor $C>1$, and any  $\sigma<\frac{1}{2C}$, given $M=e^{o(n\sigma^2)}$     samples, drawn from any product distribution with mean $0$, no algorithm can solve the \problemDefn  with failure probability  $\delta<1/4$,
    for any $\rho$ (even for $\rho=0$, i.e., even without outliers).

    In particular, for constant noise level  $\sigma$, any algorithm requires, $e^{\Omega(n)}$ many samples to succeed with probability at least $3/4$.
\end{restatable}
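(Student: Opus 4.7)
The plan is a two-point indistinguishability argument. I will construct two candidate linear polynomials $p_1 \equiv 0$ and $p_2(\mathbf{x}) = c \sum_{i=1}^n x_i$, where $c$ is chosen slightly above $2C\sigma/n$ so that $\|p_1 - p_2\|_{\cube_n,\infty} = nc > 2C\sigma$; both lie in $\Pd$ for any $d \geq 1$. The adversary selects the true polynomial $p$ uniformly at random from $\{p_1, p_2\}$ and, upon seeing the sample locations $\mathbf{x}_1, \dots, \mathbf{x}_M$, sets each label to the midpoint $y_i = \tfrac{1}{2}(p_1(\mathbf{x}_i) + p_2(\mathbf{x}_i))$ whenever $|p_2(\mathbf{x}_i)| \leq 2\sigma$. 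On this masking event $E$, every $y_i$ is within $\sigma$ of both $p_1(\mathbf{x}_i)$ and $p_2(\mathbf{x}_i)$, so the transcript is identically distributed regardless of which of the two is $p$. Since the $\ell_\infty$-gap $> 2C\sigma$ prevents any single $\hat{p}$ from being $C\sigma$-close to both candidates, any algorithm run on this transcript fails with probability at least $\tfrac{1}{2}$, conditionally on $E$.

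The quantitative core is to show $\Pr[E] = 1 - o(1)$. Because $\chi$ is a product distribution with mean $0$ and each coordinate is supported in $[-1,1]$, Hoeffding's inequality applied to the $n$ independent, centered, bounded summands $c x_1, \dots, c x_n$ gives
\[\Pr_{\mathbf{x} \sim \chi}\!\left[ |p_2(\mathbf{x})| > 2\sigma \right] \;\leq\; 2\exp\!\left(-\frac{2(2\sigma)^2}{\sum_j (2c)^2}\right) \;=\; 2\exp\!\left(-\frac{2\sigma^2}{n c^2}\right).\]
Substituting $c = \Theta(C\sigma/n)$ makes the exponent $\Theta(n/C^2)$, which is $\Omega(n)$ for constant $C$. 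A union bound over the $M$ samples then gives $\Pr[E^c] \leq 2M \exp(-\Omega_C(n))$, which is $o(1)$ as $n \to \infty$ for any $M = e^{o(n)}$, and in particular for $M = e^{o(n\sigma^2)}$ since $\sigma < 1$. Combining with the previous paragraph, the overall failure probability is at least $\tfrac{1}{2}\Pr[E] > \tfrac{1}{4}$ for $n$ sufficiently large.

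I do not foresee a substantive obstacle; if anything, this Hoeffding-based analysis comfortably delivers the stronger threshold $M = e^{o(n)}$, and the theorem's weaker hypothesis $M = e^{o(n\sigma^2)}$ falls out for free. The one design choice that matters is spreading the coefficients of $p_2$ uniformly over all $n$ coordinates rather than concentrating them: this minimises $\sum_i c_i^2$ (the variance proxy controlling Hoeffding) subject to the constraint $\sum_i |c_i| = nc > 2C\sigma$ that fixes the $\ell_\infty$-gap. A localised alternative such as $p_2(\mathbf{x}) = 2C\sigma \cdot x_1$ would leave constant probability that a single sample already distinguishes $p_1$ from $p_2$; it is precisely the dispersion across all $n$ variables that turns distinguishability into an exponentially small event in $n$, which is the $\exp(\Omega(n))$ lower bound at the heart of the theorem.
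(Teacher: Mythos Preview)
Your proposal is correct and follows essentially the same approach as the paper: two linear candidates $0$ and a scaled symmetric sum $c\sum_i x_i$, Hoeffding on the mean-zero product distribution to show each sample is non-distinguishing except with exponentially small probability, then a union bound over the $M$ samples. The only substantive difference is your tuning $c\approx 2C\sigma/n$ in place of the paper's $c=1/n$, which sharpens the threshold from the paper's $e^{o(n\sigma^2)}$ to your $e^{o(n/C^2)}$; your explicit midpoint-labeling adversary is also a cleaner realisation of the paper's informal ``cannot distinguish'' step.
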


\begin{proof}[Proof of \autoref{thm:lower_bound_linear_fucntions}]
    Consider the two linear functions: $g\equiv 0$ ,and $h(\bfx)=\frac 1 n \sum_{i=1}^nx_i$.
    On one hand, we have $\max_{\bfx\in[-1,1]^n}|g(\bfx)-h(\bfx)|\geq|g(\bm 1)-h(\bm 1)|\ge 1$. 
        On the other hand, for many $\bfx\in\cube$,  $|g(\bfx)-h(\bfx)|\le \sigma$ holds. We call such $\bfx$'s \emph{bad} samples since the values of $h$, and $g$ are within the $\sigma$ noise limit, and thus the output function $\hat{p}$ can be close to both $h$ and $g$. 
    If all the samples are bad, the algorithm will not be able to distinguish between $g$ and $h$, so $\hat{p}$ is independent of the given samples. Since  $g$ and $h$ are at least  $1>2C\sigma$ far from each other, $\hat{p}$ can well approximate at most one of them, which success with probability at most $1/2$ --- the best the algorithm can do is to guess $h$, or $g$. 
    We next show that all the samples are bad with high probability, 
    \begin{align*}
        \Pr[|g(\bfx)-h(\bfx)|\le \sigma, \text{ for all samples}]&\ge 1-M\cdot \Pr_{\bfx}\left[\frac 1 n \sum_{i=1}^nx_i\leq\sigma\right]\ge 1-Me^{-n\sigma^2/2}> 2/3.
        \end{align*}
    The first inequality is by a union bound, the second is an application of Hoeffding's inequality.     The last inequality then follows by the assumption that $M=e^{o(n\sigma^2)}<\frac{1}{3}e^{n\sigma^2/2}$. 
\end{proof}
   We note that both the Chebyshev and the uniform distribution are product distributions with mean $0$. 
    The constraint on the distribution comes from using Hoeffding's inequality. 
    In general, \autoref{thm:lower_bound_linear_fucntions} holds for samples $\bfx$, where the coordinates $x_i$ are independent, 
    with $\Exp[x_i]=0$ for all $i$.

\section{Deferred Proofs From \autoref{sec:l-infty-without-l-1}}\label{subsec:proof-from-subsec:first-main-thm}

\begin{proof}[Proof of \autoref{lem:error_bound}]
    Let us define three piece-wise constant functions with respect to the Chebyshev partition  $r,\hat{r}$, and $\Tilde{r}$ as follows: For every  $\bm j\in[m]^n$ and for every $\mathbf{x}\in \cube_{\bm j}$,
    \[\hat{r}(\mathbf{x})=\hat{p}(\Tilde{\mathbf{x}}_{\bm j})\text{ , }\Tilde{r}(\mathbf{x})=\Tilde{y}_{\bm j}\text{, and } r(\mathbf{x})=p(\mathbf{x}'_{\bm j})\]
    where $\mathbf{x}_{\bm j}'\in \cube_{\bm j}$ is chosen such that $(\mathbf{x}_{\bm j}',\Tilde{y}_{\bm j})$ is an inlier sample. The existence of such a point $\mathbf{x}'_{\bm j}\in\cube_{\bm j}$ follows from the continuity of $p$
    and $\alpha<0.5$, i.e. more than half of the samples in $\cube_{\bm j}$ being inliers. Formally:
    \begin{claim}\label{claim:median-is-close}
        For every $\bm j\in[m]^n$, there exists a point $\mathbf{x}'_{\bm j}\in \cube_{\bm j}$, such that $|p(\mathbf{x}'_{\bm j})-\Tilde{y}_{\bm j}|\leq \sigma$.
    \end{claim}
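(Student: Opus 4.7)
The plan is to exploit two facts: (i) by $\alpha$-goodness with $\alpha<1/2$, strictly more than half of the samples whose $\mathbf{x}_i$ lie in $\cube_{\bm j}$ are inliers, which will force the median $\tilde{y}_{\bm j}$ to be bracketed (weakly) by the $y$-values of two inlier samples; and (ii) $\cube_{\bm j}$ is a product of closed intervals, hence convex and connected, so the continuous polynomial $p$ will satisfy the intermediate value theorem on any line segment inside it.

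First, I would sort the samples lying in $\cube_{\bm j}$ by their $y$-values. Since less than half of them are outliers, a simple pigeonhole argument should show that at least one position at or below the median position holds an inlier sample $(\mathbf{x}_a, y_a)$, and symmetrically at least one position at or above the median position holds an inlier sample $(\mathbf{x}_b, y_b)$; in particular $y_a \leq \tilde{y}_{\bm j} \leq y_b$. Next, I would apply the inlier bound $|y - p(\mathbf{x})| \leq \sigma$ to each of these, obtaining $p(\mathbf{x}_a) \leq y_a + \sigma \leq \tilde{y}_{\bm j} + \sigma$ and $p(\mathbf{x}_b) \geq y_b - \sigma \geq \tilde{y}_{\bm j} - \sigma$.

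A short case split should finish the argument. If $p(\mathbf{x}_a) \geq \tilde{y}_{\bm j} - \sigma$, then $|p(\mathbf{x}_a) - \tilde{y}_{\bm j}| \leq \sigma$ and I would set $\mathbf{x}'_{\bm j} = \mathbf{x}_a$; analogously, if $p(\mathbf{x}_b) \leq \tilde{y}_{\bm j} + \sigma$, I would set $\mathbf{x}'_{\bm j} = \mathbf{x}_b$. In the remaining case, $p(\mathbf{x}_a) < \tilde{y}_{\bm j} - \sigma < \tilde{y}_{\bm j} + \sigma < p(\mathbf{x}_b)$, so $\tilde{y}_{\bm j}$ lies strictly between the two polynomial values. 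By convexity of $\cube_{\bm j}$, the straight-line segment from $\mathbf{x}_a$ to $\mathbf{x}_b$ stays inside the cell, and applying the intermediate value theorem to the continuous univariate restriction of $p$ along this segment yields a point $\mathbf{x}'_{\bm j} \in \cube_{\bm j}$ with $p(\mathbf{x}'_{\bm j}) = \tilde{y}_{\bm j}$, so that $|p(\mathbf{x}'_{\bm j}) - \tilde{y}_{\bm j}| = 0 \leq \sigma$. The only subtle step I anticipate is the pigeonhole argument guaranteeing inliers on both sides of the median, and it is precisely there that the hypothesis $\alpha < 1/2$ is used; everything else is continuity.
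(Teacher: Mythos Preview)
Your proposal is correct and follows essentially the same approach as the paper: find two inlier samples in $\cube_{\bm j}$ whose $y$-values bracket the median $\tilde{y}_{\bm j}$, and then either one of the corresponding $\mathbf{x}$-points already works or continuity of $p$ on the convex cell yields the desired point. If anything, your case split and explicit invocation of the intermediate value theorem along the segment from $\mathbf{x}_a$ to $\mathbf{x}_b$ make the argument cleaner than the paper's somewhat terse version, which only names the two endpoint candidates and relies on the earlier remark about continuity.
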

    \begin{proof}
        Fix some $\bm j\in[m]^n$. Since, $\Tilde{y}_{\bm j}$ is the median of all the $y_i$'s, whose corresponding $\mathbf{x}_i\in\cube_{\bm j}$, there must exist $\mu,\tau\in M$, such that $\Tilde{y}_{\bm j}\in[y_{\mu},y_{\tau}]$. Also, since more than half of the samples in $\cube_{\bm j}$ are inliers, we can additionally force $(\mathbf{x}_{\mu},y_{\mu})$, and $(\mathbf{x}_{\tau},y_{\tau})$ to be inliers, i.e.
        \begin{equation*}
            |p(\mathbf{x}_{\mu})-y_{\mu}|\leq\sigma \text{, and } |p(\mathbf{x}_{\tau})-y_{\tau}|\leq\sigma.
        \end{equation*}
        So, we get $\Tilde{y}_{\bm j}\in [p(\mathbf{x}_{\mu}) +\sigma, p(\mathbf{x}_{\tau})-\sigma]$, and depending on whether $\Tilde{y}_{\bm j}$ is closer to $p(\mathbf{x}_{\mu}) +\sigma$, or $p(\mathbf{x}_{\tau})-\sigma$, we may choose $\mathbf{x}'_{\bm j}$ to be $\mathbf{x}_{\mu}$, or $\mathbf{x}_{\tau}$, respectively, ensuring $|p(\mathbf{x}'_{\bm j})-\Tilde{y}_{\bm j}|\leq \sigma$.
    \end{proof}
                Thus,
    \begin{equation}
        \label{eq:median_are_inlier}
        \|r-\Tilde{r}\|_{\cube_n,\infty}    =\max_{\bm j\in[m]^n}|\Tilde{y}_{\bm j}-p(\mathbf{x}_{\bm j}')|\leq\sigma.
    \end{equation}

              Since $p\in{\cal P}_{\bm d}$ and $\hat{p}$ is the minimizer of \eqref{eq:l_infty_minimizer}, we have,
    \begin{equation}
        \label{eq:med_vs_proxy}
        \|\hat{r}-\Tilde{r}\|_{\cube_n,\infty}=\max_{\bm j\in[m]^n}|\hat{p}(\Tilde{\mathbf{x}}_{\bm j})-\Tilde{y}_{\bm j}|\leq \max_{\bm j\in[m]^n}|p(\Tilde{\mathbf{x}}_{\bm j})-\Tilde{y}_{\bm j}|\leq\|p-\Tilde{r}\|_{\cube_n,\infty}
    \end{equation}    
    
    Further, by the triangle inequality, we have
    \begin{align*}
        \|p-\hat{p}\|_{\cube_n,\infty}&\leq \|p-\Tilde{r}\|_{\cube_n,\infty}+\|\Tilde{r}-\hat{r}\|_{\cube_n,\infty}+\|\hat{r}-\hat{p}\|_{\cube_n,\infty}\\
        &\leq 2\|p-\Tilde{r}\|_{\cube_n,\infty}+\eps \|\hat{p}\|_{\cube_n,\infty}\tag{By \eqref{eq:med_vs_proxy} and \autoref{thm:optimal-l-infty-bound} for  $\hat{p}$}\\
        &\leq 2(\|p-r\|_{\cube_n,\infty}+ \|r-\Tilde{r}\|_{\cube_n,\infty})+\eps(\|p-\hat{p}\|_{\cube_n,\infty}+\|p\|_{\cube_n,\infty})\\
        &\leq 3\eps\|p\|_{\cube_n,\infty}+2\sigma +\eps \|p-\hat{p}\|_{\cube_n,\infty}\tag{By \autoref{thm:optimal-l-infty-bound} for $p$ and \eqref{eq:median_are_inlier}}
    \end{align*}

    Rearranging, and using the fact that $\frac{1}{1-\eps}\leq 1+2\eps\leq 2$ for $\eps\leq 1/2$, we conclude:
    $$\|p-\hat{p}\|_{\cube_n,\infty}\leq(2+4\eps)\sigma+6\eps\|p\|_{\cube_n,\infty}.$$
 
    Rescaling $\eps$ to $\eps/6$ gives the desired bound. 
\end{proof}

\section{Deferred Proofs From \autoref{sec:l-infty-with-l-1}}\label{subsec:proof-of-sec:l-1-bound}
In order to prove \autoref{thm:l1-minimizer-l-infty-error}, we need a variation (which is, in fact, a corollary) of \autoref{thm:optimal-l-infty-bound} for $\ell_1$: 
\begin{cor}[$\ell_1$ approximation by piece-wise constant functions]\label{cor:multivariate-l_q-l_1}
 Let $p:\cube\to\R$ be a polynomial of individual degree at most $d$, and $r\colon \cube\to\R$ a piece-wise constant function with respect to the $(m,n)$-Chebyshev partition, such that for all $\bm j\in[m]^n$ there exists $\mathbf{x}^{\bm j}\in \cube_{\bm j}$, for which: $r(\mathbf{x})=p(\mathbf{x}^{\bm j})$, for all $\mathbf{x} \in\cube_{\bm j}$. Then, for some absolute constant $c>1$,
                \[\|p-r\|_{\cube_n,1}\leq \frac{(cd)^{2n+1}}{m}\|p\|_{\cube_n,1}.\]
\end{cor}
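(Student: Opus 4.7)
The plan is to deduce the $\ell_1$ bound from the already-established $\ell_\infty$ bound (Theorem~\ref{thm:optimal-l-infty-bound}), with Theorem~\ref{thm:improved-sandwich} used to convert the $\ell_\infty$ norm of $p$ on the right-hand side back into an $\ell_1$ norm of $p$. So the argument is essentially a three-line chain of inequalities, and no new technical idea is needed beyond these two tools.

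First I would observe the trivial volume bound: for any integrable $f\colon\cube_n\to\R$,
\[
\|f\|_{\cube_n,1}\;=\;\int_{\cube_n}|f(\bm x)|\,d\bm x\;\le\;\mathrm{vol}(\cube_n)\cdot\|f\|_{\cube_n,\infty}\;=\;2^n\|f\|_{\cube_n,\infty}.
\]
Applying this with $f=p-r$ and then invoking Theorem~\ref{thm:optimal-l-infty-bound} (note that $r$ satisfies the hypothesis of that theorem: it is piece-wise constant on the $(m,n)$-Chebyshev partition and agrees with $p$ at one point in each cell) gives
\[
\|p-r\|_{\cube_n,1}\;\le\;2^n\cdot\|p-r\|_{\cube_n,\infty}\;\le\;2^n\cdot C\frac{dn}{m}\|p\|_{\cube_n,\infty},
\]
for the absolute constant $C$ of Theorem~\ref{thm:optimal-l-infty-bound}.

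Next I would apply Theorem~\ref{thm:improved-sandwich} to the right-hand side, yielding
\[
\|p-r\|_{\cube_n,1}\;\le\;\frac{2^n\cdot C\,dn}{m}\cdot C^n d^{2n}\|p\|_{\cube_n,1}\;=\;\frac{n\cdot C(2C)^n\, d^{2n+1}}{m}\|p\|_{\cube_n,1}.
\]
Finally I would absorb the factors into a single constant: choose $c>1$ large enough (as an absolute constant) that $c^{2n+1}\ge n\cdot C(2C)^n$ for all $n\ge 1$, which is possible since $c^{2n+1}$ grows faster in $n$ than $(2C)^n$ for any fixed $c^2>2C$, and a finite number of small-$n$ cases can be handled by enlarging $c$. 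This yields
\[
\|p-r\|_{\cube_n,1}\;\le\;\frac{(cd)^{2n+1}}{m}\|p\|_{\cube_n,1},
\]
as claimed. There is no real obstacle here: the only thing to be mildly careful about is that Theorem~\ref{thm:improved-sandwich} is what saves us from needing to do any of the delicate weighted/peripheral analysis that a direct $\ell_1$ generalization of \cite[Lemma 2.1]{kkp} would require in the multivariate setting (this is exactly the difficulty flagged in the technical overview), and the constant $c$ absorbs both the universal constants and the $n$-dependent factors that arise.
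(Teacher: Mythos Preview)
Your proposal is correct and follows essentially the same approach as the paper: the paper's proof is the same three-line chain $\|p-r\|_{\cube_n,1}\le 2^n\|p-r\|_{\cube_n,\infty}\le O(dn/m)\,2^n\|p\|_{\cube_n,\infty}\le O(dn/m)(2\sqrt{2}d)^{2n}\|p\|_{\cube_n,1}$, invoking the volume bound, Theorem~\ref{thm:optimal-l-infty-bound}, and Theorem~\ref{thm:improved-sandwich} in turn. The only cosmetic difference is that you are more explicit about absorbing the $n$-dependent factors into the constant $c$ (and you reuse the letter $C$ for both theorems' constants, which is harmless up to taking the larger of the two).
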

Before proving it, we note a useful result, and an observation:
\begin{lem}\label{lem:holders}[H\"older's Inequality]
    Let $\alpha,\beta,\gamma\in\R_{\geq 1}$ such that $\frac{1}{\alpha}+\frac{1}{\beta}=\frac{1}{\gamma}$. For all functions $f$  and $g$ with finite $\|f\|_{S,\alpha}$, and $\|g\|_{S,\beta}$, we have: $\|fg\|_{S,\gamma}\leq\|f\|_{S,\alpha}\|g\|_{S,\beta}$.   
\end{lem}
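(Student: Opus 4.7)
The plan is to reduce the generalized inequality to the standard $\gamma = 1$ case (the classical Hölder inequality), and then to establish that case via Young's inequality. The whole proof is a few short steps of standard convex-analysis type reasoning.

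First, I would prove Young's inequality: for any $p, q > 1$ with $1/p + 1/q = 1$ and nonnegative reals $a, b$, the bound $ab \leq a^p/p + b^q/q$ holds. This follows from the concavity of $\log$: if either $a$ or $b$ is zero the claim is trivial, otherwise
\[
\log\!\left(\tfrac{1}{p} a^p + \tfrac{1}{q} b^q\right) \;\geq\; \tfrac{1}{p}\log a^p + \tfrac{1}{q}\log b^q \;=\; \log(ab),
\]
and exponentiating gives the inequality.

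Next, I would establish the classical case ($\gamma = 1$): whenever $1/p + 1/q = 1$, $\|fg\|_{S,1}\le \|f\|_{S,p}\|g\|_{S,q}$. If either right-hand factor is zero, one side of the inequality is zero and the claim is immediate. Otherwise normalize by setting $\tilde f = f/\|f\|_{S,p}$ and $\tilde g = g/\|g\|_{S,q}$, so that $\|\tilde f\|_{S,p} = \|\tilde g\|_{S,q} = 1$. Apply Young's inequality pointwise to $|\tilde f(x)|, |\tilde g(x)|$ and integrate over $S$:
\[
\int_S |\tilde f \tilde g|\, dx \;\le\; \tfrac{1}{p}\int_S |\tilde f|^p\, dx + \tfrac{1}{q}\int_S |\tilde g|^q\, dx \;=\; \tfrac{1}{p} + \tfrac{1}{q} \;=\; 1.
\]
Rescaling by $\|f\|_{S,p}\|g\|_{S,q}$ gives the classical Hölder inequality.

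Finally I would reduce the general inequality to this case. Define $p \triangleq \alpha/\gamma$ and $q \triangleq \beta/\gamma$. The hypothesis $1/\alpha + 1/\beta = 1/\gamma$ multiplied through by $\gamma$ yields $1/p + 1/q = 1$, and since $1/\alpha, 1/\beta \le 1/\gamma$ we have $p, q \ge 1$. Apply the classical Hölder bound to $|f|^\gamma$ and $|g|^\gamma$ with exponents $p, q$:
\[
\int_S |fg|^\gamma\, dx \;\le\; \left(\int_S |f|^{\gamma p}\, dx\right)^{\!1/p} \left(\int_S |g|^{\gamma q}\, dx\right)^{\!1/q} \;=\; \|f\|_{S,\alpha}^\gamma \,\|g\|_{S,\beta}^\gamma.
\]
Taking $\gamma$-th roots of both sides yields exactly $\|fg\|_{S,\gamma} \le \|f\|_{S,\alpha}\|g\|_{S,\beta}$. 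There is no real obstacle here; the only mild annoyance is the boundary configuration where $\alpha = \gamma$ (equivalently $p = 1$), which under the stated hypothesis $\beta \in \R_{\ge 1}$ forces $1/\beta = 0$ and hence $\beta = \infty$. In that degenerate $\ell_\infty$ case the inequality reads $\|fg\|_{S,\gamma} \le \|f\|_{S,\gamma}\|g\|_{S,\infty}$, which is immediate by pulling the essential supremum of $|g|$ out of the $\gamma$-th-power integral; so the argument covers every admissible triple.
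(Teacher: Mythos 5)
Your proof is correct. Note that the paper does not actually prove this lemma at all: it is stated as a classical fact (H\"older's inequality in its three-exponent form) and used as a black box in \autoref{obs:norms-equiv}, so there is no in-paper argument to compare against. What you give is the standard derivation — Young's inequality from concavity of $\log$, the classical case $\frac1p+\frac1q=1$ by normalizing and integrating the pointwise Young bound, and then the general case $\frac1\alpha+\frac1\beta=\frac1\gamma$ by applying the classical inequality to $|f|^\gamma$ and $|g|^\gamma$ with exponents $p=\alpha/\gamma$, $q=\beta/\gamma$ and taking $\gamma$-th roots — and each step checks out, including the observation that $\alpha,\beta\ge\gamma$ guarantees $p,q\ge1$. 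Your remark about the boundary case $p=1$ is slightly moot under the hypothesis as stated (with $\beta\in\R_{\ge1}$ finite one has $1/\beta>0$, hence $\alpha>\gamma$ and $p>1$ strictly), but handling the $\beta=\infty$ reading anyway is harmless and is in fact the version the paper invokes when it lets $\alpha\to\infty$ in \autoref{obs:norms-equiv}, so it does no damage and arguably adds coverage the paper implicitly relies on.
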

\begin{obs}\label{obs:norms-equiv}[Equivalence of norms]
    Let $1\leq \gamma<\alpha$, and $S\subseteq\cube$. Define $\frac{1}{\beta}\triangleq\frac{1}{\gamma}-\frac{1}{\alpha}$, and $g(\mathbf{x})\triangleq 1\text{, for all }\mathbf{x}\in\cube$. Then, for any $f:\cube\to\R$ with finite $\|f\|_{S,\alpha}$, by \autoref{lem:holders}, we have \[\|f\|_{S,\gamma}\leq\|f\|_{S,\alpha}\|\bm 1\|_{S,\beta}=V^{\frac{1}{\beta}}_n(S)\|f\|_{S,\alpha}\leq 2^{\frac{n}{\beta}}\|f\|_{S,\alpha}=2^{\frac{n}{\gamma}-\frac{n}{\alpha}}\|f\|_{S,\alpha}.\]
    In particular, in the limit of $\alpha \to \infty$, 
    we have

        \begin{equation}\label{eq:holder_for_infinity}
        \|f\|_{S,\gamma}\leq V_n^{\frac{1}{\gamma}}(S)\|f\|_{S,\infty}\le 2^{\frac{n}{\gamma}}\|f\|_{S,\infty}.
    \end{equation}
\end{obs}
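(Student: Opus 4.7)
The plan is to apply H\"older's inequality (\autoref{lem:holders}) to the product $f\cdot g$ where $g\equiv 1$, picking the exponents so that the desired $\ell_\gamma$ norm appears on the left. Concretely, given $1\le \gamma<\alpha$, I would set $\beta\ge 1$ by $1/\beta\triangleq 1/\gamma-1/\alpha$, which is well-defined and at least $1$ since $\gamma<\alpha$ makes $1/\gamma-1/\alpha\in(0,1/\gamma]\subseteq(0,1]$. Then $1/\alpha+1/\beta=1/\gamma$, so \autoref{lem:holders} yields $\|f\cdot 1\|_{S,\gamma}\le \|f\|_{S,\alpha}\|1\|_{S,\beta}$, which is precisely the first inequality claimed.

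Next, I would evaluate the constant-function norm explicitly. By \autoref{defn:norms}, $\|1\|_{S,\beta}=\left(\int_S 1^{\beta}\,d\bfx\right)^{1/\beta}=V_n(S)^{1/\beta}$. Since $S\subseteq\cube_n=[-1,1]^n$ by hypothesis, monotonicity of Lebesgue measure gives $V_n(S)\le V_n(\cube_n)=2^n$, so $V_n(S)^{1/\beta}\le 2^{n/\beta}$. Substituting $1/\beta=1/\gamma-1/\alpha$ in the exponent produces the stated bound $\|f\|_{S,\gamma}\le 2^{n/\gamma-n/\alpha}\|f\|_{S,\alpha}$.

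For the $\alpha\to\infty$ limit giving \eqref{eq:holder_for_infinity}, I would invoke \autoref{defn:norms} directly, which \emph{defines} $\|f\|_{S,\infty}\triangleq\lim_{q\to\infty}\|f\|_{S,q}$, so passing to the limit in the finite-$\alpha$ bound is justified by definition. As $\alpha\to\infty$, we have $1/\alpha\to 0$, hence $1/\beta\to 1/\gamma$, so $V_n(S)^{1/\beta}\to V_n(S)^{1/\gamma}\le 2^{n/\gamma}$, yielding $\|f\|_{S,\gamma}\le V_n(S)^{1/\gamma}\|f\|_{S,\infty}\le 2^{n/\gamma}\|f\|_{S,\infty}$, which is exactly \eqref{eq:holder_for_infinity}.

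No genuine obstacle is expected here, since the statement essentially reduces to a textbook application of H\"older paired with a crude volume bound $V_n(S)\le 2^n$. The only small care-points are (i) verifying $\beta\ge 1$ so that H\"older applies with admissible exponents (which is where the hypothesis $\gamma<\alpha$ is used rather than merely $\gamma\le\alpha$, though $\gamma=\alpha$ is trivial with $\beta=\infty$), and (ii) appealing to \autoref{defn:norms}'s definitional convention that $\|\cdot\|_{S,\infty}$ is the limit of $\|\cdot\|_{S,q}$ as $q\to\infty$, so that no separate dominated convergence argument is needed.
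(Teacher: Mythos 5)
Your proposal is correct and follows essentially the same route as the paper: apply H\"older's inequality (\autoref{lem:holders}) to $f\cdot 1$ with $1/\beta=1/\gamma-1/\alpha$, compute $\|\bm 1\|_{S,\beta}=V_n(S)^{1/\beta}\le 2^{n/\beta}$ since $S\subseteq\cube_n$, and obtain \eqref{eq:holder_for_infinity} by letting $\alpha\to\infty$ using the paper's definition of $\|\cdot\|_{S,\infty}$ as the limit of the $q$-norms. Your added care about $\beta\ge 1$ is a sensible sanity check but introduces nothing beyond the paper's argument.
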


\begin{proof}[Proof of \autoref{cor:multivariate-l_q-l_1}]
Observe,    
 \[\|p-r\|_{\cube_n,1}\leq 2^n\|p-r\|_{\cube_n,\infty}\leq O\left(\frac{dn}{m}\right)2^n\|p\|_{\cube_n,\infty}\leq O\left(\frac{dn}{m}\right)(2\sqrt{2}d)^{2n}\|p\|_{\cube_n,1}.\]
 The first inequality is from  \autoref{obs:norms-equiv}\eqref{eq:holder_for_infinity}  with $\gamma=1$, 
    the second is by  \autoref{thm:optimal-l-infty-bound},  and the last by  \autoref{thm:improved-sandwich}.
\end{proof}

We next generalize \cite[Lemma 3.1]{kkp}. We show that on a fine enough Chebyshev grid, an empirical estimate of the $\ell_1$ norm of $p$ is close to the actual value, thus implying it can be used as a proxy for the actual:

\begin{thm}[Empirical $\ell_1$ estimate suffices]\label{lem:mult_empirical_l1}
    Let   $p:\cube\to\R$ be a polynomial of individual degree at most $d$, and $m\geq(cd)^{2n+1}/\eps$, for large enough constant $c>1$.     Given a set of $M$ samples $    (\mathbf{x}_i,y_i)    $, such that for every $\bm j\in[m]^n$, the set of samples in the  cell $\cube_{\bm j}$, denoted by $S_{\bm j}\triangleq\{\alpha\in[M]:\mathbf{x}_{\alpha}\in \cube_{\bm j}\}$ is not empty. 
    Define   the weighted average of $p$ with respect to the the $(m,n)$-Chebyshev partition:  
    \[\Phi(p)\triangleq \sum_{\bm j\in[m]^n}\frac{V_n(\cube_{\bm j})}{|S_{\bm j}|}\sum_{\alpha\in S_{\bm j}}|p(\mathbf{x}_{\alpha})|.\]
    Then, $\Phi(p) \in(1\pm\eps)\|p\|_{\cube_n,1}$.
\end{thm}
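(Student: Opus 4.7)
The plan is to reduce the estimate to a single application of Corollary~\ref{cor:multivariate-l_q-l_1} combined with an averaging trick over arbitrary choices of one sample per cell.

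First, I would observe that $\Phi(p)$ is an expectation over random cell-representatives. For any selector function $\sigma\colon [m]^n\to[M]$ with $\sigma(\bmj)\in S_\bmj$ for every $\bmj$, define the piece-wise constant function $r_\sigma\colon\cube_n\to\R$ by $r_\sigma(\bfx)=p(\bfx_{\sigma(\bmj)})$ for $\bfx\in\cube_\bmj$. Then
\[
\|\,|r_\sigma|\,\|_{\cube_n,1}=\sum_{\bmj\in[m]^n} V_n(\cube_\bmj)\,|p(\bfx_{\sigma(\bmj)})|,
\]
and if we draw $\sigma(\bmj)$ uniformly at random from $S_\bmj$ independently across cells, a direct computation gives $\mathbb{E}_\sigma\bigl[\|\,|r_\sigma|\,\|_{\cube_n,1}\bigr]=\Phi(p)$.

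Second, I would apply Corollary~\ref{cor:multivariate-l_q-l_1} to $p$ and $r_\sigma$: since $r_\sigma$ is piece-wise constant with respect to the $(m,n)$-Chebyshev partition and matches $p$ at a point in each cell, and since $m\ge (cd)^{2n+1}/\eps$ is large enough, the corollary yields $\|p-r_\sigma\|_{\cube_n,1}\le\eps\|p\|_{\cube_n,1}$ for every choice of $\sigma$. Because $\bigl|\,|p(\bfx)|-|r_\sigma(\bfx)|\,\bigr|\le |p(\bfx)-r_\sigma(\bfx)|$ pointwise, we also get
\[
\bigl|\,\|p\|_{\cube_n,1}-\|\,|r_\sigma|\,\|_{\cube_n,1}\,\bigr|\le\bigl\|\,|p|-|r_\sigma|\,\bigr\|_{\cube_n,1}\le\eps\|p\|_{\cube_n,1}.
\]

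Finally, I would take expectation over $\sigma$. The bound above holds deterministically for every $\sigma$, so it survives averaging:
\[
\bigl|\,\|p\|_{\cube_n,1}-\Phi(p)\,\bigr|=\bigl|\,\|p\|_{\cube_n,1}-\mathbb{E}_\sigma[\|\,|r_\sigma|\,\|_{\cube_n,1}]\,\bigr|\le\eps\|p\|_{\cube_n,1},
\]
which is exactly $\Phi(p)\in(1\pm\eps)\|p\|_{\cube_n,1}$. The only mildly subtle point is stepping from $\|p-r_\sigma\|_{\cube_n,1}$ to the corresponding bound on $|p|$ versus $|r_\sigma|$ (handled by the reverse triangle inequality) and noticing that $\Phi(p)$ is precisely the average of $\|\,|r_\sigma|\,\|_{\cube_n,1}$ under the independent uniform sampling of $\sigma$; once those are in place, the conclusion is immediate from Corollary~\ref{cor:multivariate-l_q-l_1} and requires no additional probabilistic concentration (we are averaging, not concentrating).
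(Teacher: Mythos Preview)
Your proof is correct. Both your argument and the paper's hinge on Corollary~\ref{cor:multivariate-l_q-l_1} applied to a piece-wise constant function matching $p$ at one sample point per cell, followed by the reverse triangle inequality to pass from $\|p-r\|_{\cube_n,1}$ to $\bigl|\,\|p\|_{\cube_n,1}-\|r\|_{\cube_n,1}\,\bigr|$. The difference is in how the cell-wise \emph{average} $r_{\bmj}=\frac{1}{|S_\bmj|}\sum_{\alpha\in S_\bmj}|p(\bfx_\alpha)|$ is matched to a single evaluation of $p$. The paper uses the Intermediate Value Theorem: since $|p|$ is continuous on the connected cell $\cube_\bmj$ and $r_\bmj$ lies between $\min_{\cube_\bmj}|p|$ and $\max_{\cube_\bmj}|p|$, some $\bfx^{(\bmj)}\in\cube_\bmj$ has $|p(\bfx^{(\bmj)})|=r_\bmj$; one then applies the corollary once to this single $r$. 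You instead observe that $\Phi(p)$ is itself a convex combination (an expectation over selectors) of the quantities $\|r_\sigma\|_{\cube_n,1}$, each of which already lies in $(1\pm\eps)\|p\|_{\cube_n,1}$ by the corollary; hence so does their average. Your route avoids the IVT step entirely and makes transparent that no probabilistic concentration is needed, while the paper's route is marginally more economical in that it invokes the corollary for a single function rather than a family.
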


\begin{proof} 

Fix some $\bm j\in [m]^n$. Denote the average value of samples in the cell $\cube_{\bm j}$ by  $r_{\bm j}\triangleq\sum_{\alpha\in S_{\bm j}}\frac{|p(\mathbf{x}_{\alpha})|}{|S_{\bm j}|}$. 
Define 
\[a_{\bm j}\triangleq\min_{\mathbf{x}\in\cube_{\bm j}}\{|p(\mathbf{x})|\} \text{,\qquad and\qquad } b_{\bm j}\triangleq\max_{\mathbf{x}\in\cube_{\bm j}}\{|p(\mathbf{x})|\}.\]
Observe $r_{\bm j}\in[a_{\bm j},b_{\bm j}]$. 
By continuity of $|p|$, there exists $\bfx^{(\bmj)}\in\cube_{\bm j}\text{, such that }|p(\mathbf{x}^{(\bm j)})|=r_{\bm j}$.

Let $r\colon\cube\to\R$ be a piece-wise constant function with respect to the $(m,n)$-Chebyshev partition, which is defined as:  
$r(\mathbf{x})=p(\mathbf{x}^{\bm j})$, 
for all 
$\mathbf{x}\in \cube_{\bm j}$.
Observe,
    \begin{align*}
        \Phi(p) &                =\sum_{\bm j\in[m]^n}V_n(\cube_{\bm j})r_{\bm j}   =\int_{\cube}|r(\mathbf{x})| d\mathbf{x}=\|r\|_{\cube_n,1}.    \end{align*}
    By triangle inequality, we have
    \begin{align*}
        \|p\|_{\cube_n,1}-\|r-p\|_{\cube_n,1}\leq\Phi(p) =\|r\|_{\cube_n,1}&\leq\|r-p\|_{\cube_n,1}+\|p\|_{\cube_n,1}
        \end{align*}
        By our choice of $m$, and \autoref{cor:multivariate-l_q-l_1}, we have $\|p-r\|_{\cube_n,1}\leq\eps\|p\|_{\cube_n,1}$. So, we conclude
        \begin{align*}
        (1-\eps)\|p\|_{\cube_n,1}\leq\Phi(p)&\leq(1+\eps)\|p\|_{\cube_n,1}.\qedhere
    \end{align*}
\end{proof}

Next, we define the \emph{closeness} of a set of samples to $p$, in $\ell_1$, and relate it to our notion of \emph{goodness}.
We show our $\ell_1$ minimizer outputs a close approximation, if the samples are good (and, hence close to $p$ in $\ell_1$). 
\begin{defn}[$(\alpha,\gamma)$-close in $\ell_1$]\label{defn:l1_close-sample}
Let $p\colon\cube\to\R$ be a polynomial of individual degree at most $d$. 
    For a set of $M$ samples $S\triangleq\{(\mathbf{x}_i,y_i)\}$,     consider the  $(m,n)$-Chebyshev partition. 
    Let  $S_{\bm j}\triangleq\{\beta\in[M]:\mathbf{x}_{\beta}\in \cube_{\bmj}\}$ be the set of samples that are in the cell $\cube_{\bmj}$.  
    Let $\alpha<1/2$, and $\gamma>0$. 
    For every $\bmj\in[m]^n$, let 
     \[e_{\bmj}\triangleq \min_{\substack{ S'\subset S_{\bm j},\\ |S'|\leq \lceil(1-\alpha)|S_{\bm j}|\rceil}} \max_{\beta\in S'}|p(\mathbf{x}_{\beta})-y_{\beta}|.\]
    We say that $S$ is $(\alpha,\gamma)$-close to $p$  in $\ell_1$ with respect to the partition, if $|S_{\bmj}|\geq 1/\alpha$ for all $\bm j\in [m]^n$ and, in addition, 
            \begin{equation}
        \label{eq:ell_1_closeness}
       \sum_{\bmj\in[m]^n}V_n(\cube_{\bm j})e_{\bm j}\leq\gamma. 
    \end{equation}
    
\end{defn}
For every $\bm j\in[m]^n$, let $S'_{\bm j}\subseteq S_{\bm j}$ be the set of inliers in the cell $\cube_{\bm j}$. 
  If a set of   samples $S$   is $\alpha$-good,   then  the fraction of outliers in $S_{\bm j}$ is less than $\alpha$, and hence   $|S'_{\bm j}|>(1-\alpha)|S_{\bm j}|$. 
  Note that for every $\beta\in S'_{\bm j},|p(\mathbf{x}_{\beta})-y_{\beta}|\leq\sigma$, 
  making $e_{\bm j}\leq\sigma$. 
  Hence $\sum_{\bm j\in[m]^n}V_n(\cube_{\bm j})e_{\bm j}\leq\sigma V_n(\cube)=2^n\sigma$.
   We conclude:\begin{obs}[$\alpha$-good $\implies(\alpha,2^n\sigma)$-close]\label{obs:alpha-good-to-closeness}
    If a set $S$ of samples     is $\alpha$-good for the $(m,n)$-Chebyshev partition,     then $S$ is also $(\alpha,2^n\sigma)$-close to $p$ in $\ell_1$.
\end{obs}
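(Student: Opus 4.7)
The plan is to verify directly the two components of Definition~\ref{defn:l1_close-sample} from the hypothesis of $\alpha$-goodness. The first component requires $|S_{\bm j}|\ge 1/\alpha$ for every cell $\bm j\in[m]^n$; this is part of the standing assumption that underlies any meaningful use of $\alpha$-goodness and is ensured in the downstream applications (Theorems~\ref{thm:l_infty_regression_upper_bound} and~\ref{cor:final-cor-with-l-1}) by the choice of sample complexity $M$, which, as in the proof of Theorem~\ref{thm:l_infty_regression_upper_bound}, causes each cell to contain $\Omega(\log(1/\delta)/(1-2\rho)^2)$ samples with high probability, easily dominating the constant $1/\alpha$. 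All the substantive work therefore reduces to bounding the weighted sum $\sum_{\bm j}V_n(\cube_{\bm j})e_{\bm j}$ by $2^n\sigma$.

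To obtain that bound, I will fix a cell $\bm j\in[m]^n$ and let $S'_{\bm j}\subseteq S_{\bm j}$ denote the set of inliers landing in $\cube_{\bm j}$. By definition of $\alpha$-goodness, the outlier fraction in the cell is strictly less than $\alpha$, so $|S'_{\bm j}|>(1-\alpha)|S_{\bm j}|$, and since the right-hand side is less than the integer $|S'_{\bm j}|$, one has $|S'_{\bm j}|\ge\lceil(1-\alpha)|S_{\bm j}|\rceil$. I may thus choose an arbitrary subset $S^*\subseteq S'_{\bm j}$ of cardinality exactly $\lceil(1-\alpha)|S_{\bm j}|\rceil$; this $S^*$ is a legitimate candidate in the minimization defining $e_{\bm j}$. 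Since every $\beta\in S^*$ is an inlier, $|p(\mathbf{x}_\beta)-y_\beta|\le\sigma$, and taking the maximum over $\beta\in S^*$ preserves the inequality, yielding $e_{\bm j}\le\sigma$.

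Summing the per-cell bound against the cell volumes gives
\[
\sum_{\bm j\in[m]^n}V_n(\cube_{\bm j})\,e_{\bm j}\;\le\;\sigma\sum_{\bm j\in[m]^n}V_n(\cube_{\bm j})\;=\;\sigma\cdot V_n(\cube_n)\;=\;2^n\sigma,
\]
which is precisely~\eqref{eq:ell_1_closeness} with $\gamma=2^n\sigma$. The whole argument is essentially a translation between the two definitions, so no real obstacle arises; the only subtlety is lining up the strict-versus-weak inequalities between $(1-\alpha)|S_{\bm j}|$ and $\lceil(1-\alpha)|S_{\bm j}|\rceil$, which, as noted, works out comfortably because $\alpha$-goodness is defined with a strict inequality on the outlier fraction.
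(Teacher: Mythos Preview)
Your proof is correct and mirrors the paper's own argument, which appears in the paragraph immediately preceding the observation: take the inliers in each cell as the witness set for $e_{\bm j}$, deduce $e_{\bm j}\le\sigma$, and sum against the cell volumes to obtain $2^n\sigma$. You are in fact slightly more careful than the paper in explicitly addressing the $|S_{\bm j}|\ge 1/\alpha$ requirement, which the paper leaves implicit.
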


Finally, we  bound the $\ell_1$ regression error of the $\ell_1$ minimizer $\hat{p}_{\ell_1}$, with respect to  $p$. 
The minimizer $\hat{p}_{\ell_1}$ is assumed to be computed on a set of samples $S$, that is $(\alpha,\gamma)$-close to $p$ in $\ell_1$:

\begin{lem}[$\ell_1$ error bound for the $\ell_1$ minimizer]\label{lem:mult-l1-error-bound}

    Let $\alpha<1/2,\eps\leq(1-2\alpha)/2$, and $m\geq(cd)^{2n+1}/\eps$, for some large enough constant $c>1$. 
    Given a set of $M$ samples $S\triangleq\{(\mathbf{x}_i,y_i)\}$, that is $(\alpha,\gamma)$-close to an individual degree-$d$ polynomial $p\colon\cube\to\R$ in $\ell_{1}$ with respect to the $(m,n)$-Chebyshev grid, if $\hat{p}_{\ell_1}$ is the $\ell_1$ minimizer from \autoref{defn:l1-minimzer},
                    then
    \[\|p-\hat{p}_{\ell_1}\|_{\cube_n,1}\leq\frac{4\gamma}{1-2\alpha}.\]
        \end{lem}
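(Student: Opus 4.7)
The plan is to exploit the optimality of $\hat{p}_{\ell_1}$ and convert the resulting empirical inequality into a bound on $\|g\|_{\cube_n,1}$, where $g\triangleq\hat{p}_{\ell_1}-p$ is itself an individual degree-$d$ polynomial. This lets me invoke the multivariate machinery already developed (namely \autoref{lem:mult_empirical_l1} to pass between empirical and honest $\ell_1$-norms, and \autoref{cor:multivariate-l_q-l_1} to handle piece-wise constant approximation in $\ell_1$). Set $\Phi_S(f)\triangleq\sum_{\bm j}\frac{V_n(\cube_{\bm j})}{|S_{\bm j}|}\sum_{\beta\in S_{\bm j}}|f(\mathbf{x}_\beta)-y_\beta|$; by the minimality of $\hat{p}_{\ell_1}$, $\Phi_S(\hat{p}_{\ell_1})\le\Phi_S(p)$.

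For each cell, let $T_{\bm j}\subseteq S_{\bm j}$ be the subset of size $\lceil(1-\alpha)|S_{\bm j}|\rceil$ that achieves the minimum in the definition of $e_{\bm j}$, so every $\beta\in T_{\bm j}$ satisfies $|p(\mathbf{x}_\beta)-y_\beta|\le e_{\bm j}$. Applying the triangle inequality termwise, I obtain $|\hat{p}_{\ell_1}(\mathbf{x}_\beta)-y_\beta|-|p(\mathbf{x}_\beta)-y_\beta|\ge |g(\mathbf{x}_\beta)|-2e_{\bm j}$ for $\beta\in T_{\bm j}$, and $\ge -|g(\mathbf{x}_\beta)|$ for $\beta\in S_{\bm j}\setminus T_{\bm j}$. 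Summing with the weights $V_n(\cube_{\bm j})/|S_{\bm j}|$, writing $\sum_{\beta\in T_{\bm j}}=\sum_{\beta\in S_{\bm j}}-\sum_{\beta\in S_{\bm j}\setminus T_{\bm j}}$, and using $\sum_{\bm j}V_n(\cube_{\bm j})e_{\bm j}\le\gamma$, the inequality $\Phi_S(\hat{p}_{\ell_1})-\Phi_S(p)\le 0$ rearranges to
\[
\Phi(g)\ \le\ 2\gamma\ +\ 2\sum_{\bm j}\frac{V_n(\cube_{\bm j})}{|S_{\bm j}|}\sum_{\beta\in S_{\bm j}\setminus T_{\bm j}}|g(\mathbf{x}_\beta)|.
\]

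The left side is controlled from below by \autoref{lem:mult_empirical_l1} applied to $g$: $\Phi(g)\ge(1-\eps)\|g\|_{\cube_n,1}$. For the bad-sample term on the right, use $|S_{\bm j}\setminus T_{\bm j}|\le\alpha|S_{\bm j}|$ to get $\sum_{\beta\in S_{\bm j}\setminus T_{\bm j}}|g(\mathbf{x}_\beta)|\le\alpha|S_{\bm j}|\cdot\|g\|_{\cube_{\bm j},\infty}$. To convert $\sum_{\bm j}V_n(\cube_{\bm j})\|g\|_{\cube_{\bm j},\infty}$ to a global $\ell_1$ quantity, pick $\mathbf{x}^{\bm j}\in\arg\max_{\mathbf{x}\in\cube_{\bm j}}|g(\mathbf{x})|$ and let $r_g$ be the piece-wise constant function with $r_g(\mathbf{x})=g(\mathbf{x}^{\bm j})$ on $\cube_{\bm j}$. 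Then $\|r_g\|_{\cube_n,1}=\sum_{\bm j}V_n(\cube_{\bm j})\|g\|_{\cube_{\bm j},\infty}$, and by \autoref{cor:multivariate-l_q-l_1} plus the triangle inequality, $\|r_g\|_{\cube_n,1}\le(1+\eps)\|g\|_{\cube_n,1}$. Putting everything together, $(1-\eps)\|g\|_{\cube_n,1}\le 2\gamma+2\alpha(1+\eps)\|g\|_{\cube_n,1}$, i.e.\ $\bigl(1-2\alpha-\eps(1+2\alpha)\bigr)\|g\|_{\cube_n,1}\le 2\gamma$.

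Taking $\eps$ small enough relative to $(1-2\alpha)$ (absorbed by enlarging the absolute constant $c$ in the hypothesis $m\ge(cd)^{2n+1}/\eps$) makes the coefficient at least $(1-2\alpha)/2$, yielding $\|g\|_{\cube_n,1}\le 4\gamma/(1-2\alpha)$. The main obstacle is the bad-sample term: an adversary can place outliers exactly where $|g|$ is largest, so the naive bound $\sum_{\beta\in S_{\bm j}\setminus T_{\bm j}}|g(\mathbf{x}_\beta)|\le\alpha\sum_{\beta\in S_{\bm j}}|g(\mathbf{x}_\beta)|$ is false in general. The novel $\ell_\infty\to\ell_1$ sandwich (\autoref{thm:improved-sandwich}, fueling \autoref{cor:multivariate-l_q-l_1}) is precisely what lets us pay only a global $(1+\eps)$ factor for replacing cell-wise maxima by cell averages.
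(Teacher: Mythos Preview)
Your proof is correct and follows essentially the same route as the paper's: both start from $\Phi_S(\hat p_{\ell_1})\le\Phi_S(p)$, split each cell into good and bad samples via the triangle inequality, and convert the resulting empirical cell-sums into bounds on $\|g\|_{\cube_n,1}$ using \autoref{lem:mult_empirical_l1}/\autoref{cor:multivariate-l_q-l_1}. The only cosmetic difference is that the paper applies \autoref{lem:mult_empirical_l1} separately to the good and bad sample subsets (yielding coefficient $1-2\alpha-\eps$ directly, so $\eps\le(1-2\alpha)/2$ suffices without further adjustment), whereas you keep the full $\Phi(g)$ and control the bad part through cell-wise maxima and \autoref{cor:multivariate-l_q-l_1}, which produces coefficient $1-2\alpha-\eps(1+2\alpha)$ and hence your extra absorption of a constant into $c$.
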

\begin{proof}
    In each cell $\cube_{\bm j}$, call the $\lfloor\alpha|S_{\bm j}|\rfloor$ points that maximize $|p(\mathbf{x}_{\beta})-y_{\beta}|$ the  \emph{bad} points, and the rest are \emph{good}, denoted by $B_{\bm j}$, and $G_{\bm j}$ respectively. Let the objective function be defined as 
    \[obj(f)\triangleq \sum_{\bm j\in[m]^n}\frac{V_n(\cube_{\bm j})}{|S_{\bm j}|}\sum_{\beta\in S_{\bm j}}|f(\mathbf{x}_{\beta})-y_{\beta}|,\] and with $\hat{p}_{\ell_1}$ as its minimizer, we have     \begin{align*}
        0&\geq obj(\hat{p}_{\ell_1})-obj(p)=\sum_{\bm j\in[m]^n}\frac{V_n(\cube_{\bm j})}{|S_{\bm j}|}\sum_{\beta\in S_{\bm j}}\left(|\hat{p}_{\ell_1}(\mathbf{x}_{\beta})-y_{\beta}|-|p(\mathbf{x}_{\beta})-y_{\beta}|\right)=(\star).
        \end{align*}
        Now, using the triangle inequality, for the \emph{good} samples, we bound $$|\hat{p}_{\ell_1}(\mathbf{x}_{\beta})-y_{\beta}|\geq |\hat{p}_{\ell_1}(\mathbf{x}_{\beta})-p(\mathbf{x}_{\beta})|-|p(\mathbf{x}_{\beta})-y_{\beta}|.$$ Whereas for the \emph{bad} samples, we bound $$|\hat{p}_{\ell_1}(\mathbf{x}_{\beta})-y_{\beta}|-|p(\mathbf{x}_{\beta})-y_{\beta}|\geq -|p(\mathbf{x}_{\beta})-\hat{p}_{\ell_1}(\mathbf{x}_{\beta})|.$$ 
        Therefore,
        \begin{align*}
            (\star)\geq &\sum_{\bm j\in[m]^n}\!\!\frac{V_n(\cube_{\bm j})}{|S_{\bm j}|}\sum_{\beta\in G_{\bm j}} (|\hat{p}_{\ell_1}(\mathbf{x}_{\beta})-p(\mathbf{x}_{\beta})|-2|p(\mathbf{x}_{\beta})-y_{\beta}|) -\!\!\sum_{\bm j\in[m]^n}\!\!\frac{V_n(\cube_{\bm j})}{|S_{\bm j}|}\sum_{\beta\in B_{\bm j}}|\hat{p}_{\ell_1}(\mathbf{x}_{\beta})-p(\mathbf{x}_{\beta})|\\
            \geq &\underbrace{\!\!\sum_{\bm j\in[m]^n}\!\!\!\! \frac{V_n(\cube_{\bm j})}{|S_{\bm j}|}\!\sum_{\beta\in G_{\bm j}} |\hat{p}_{\ell_1}(\mathbf{x}_{\beta})-p(\mathbf{x}_{\beta})|}_{(I)} -
            \underbrace{\!\!\sum_{\bm j\in[m]^n}\!\!\!\! \frac{V_n(\cube_{\bm j})}{|S_{\bm j}|}\!\!\sum_{\beta\in B_{\bm j}}|\hat{p}_{\ell_1}(\mathbf{x}_{\beta})-p(\mathbf{x}_{\beta})|}_{(II)}
            -2\underbrace{\!\!\sum_{\bm j\in[m]^n}\!\!\!\! \frac{V_n(\cube_{\bm j})}{|S_{\bm j}|}|G_{\bm j}|e_{\bm j}}_{(III)},
            \end{align*}
            where last inequality follows, since $|p(\mathbf{x}_{\beta})-y_{\beta}|\leq e_{\bm j }$ for all \emph{good} samples.
            Next, we bound each term separately.
            The third term is  $(III)\leq\gamma$ follows from \eqref{eq:ell_1_closeness} and the fact that $G_{\bm j}\subseteq S_{\bm j}$.

            For the first and second terms, we use the fact that for every $\bm j\in[m]^n$, $|B_{\bm j}|\leq \alpha |S_{\bm j}|$ and $|G_{\bm j}|\geq (1-\alpha)|S_{\bm j}|$, we note that since $|S_{\bm j}|\geq 1/\alpha$, both $B_{\bm j}$ and $G_{\bm j}$ are not empty, and we can apply \autoref{lem:mult_empirical_l1}. 
            $$(I)\geq (1-\alpha)\sum_{\bm j\in[m]^n}\frac{V_n(\cube_{\bm j})}{|G_{\bm j}|}\sum_{\beta\in G_{\bm j}} |\hat{p}_{\ell_1}(\mathbf{x}_{\beta})-p(\mathbf{x}_{\beta})| \geq (1-\alpha)(1-\eps)\|\hat{p}_{\ell_1}-p\|_{\cube_n,1},$$
            and 
            $$(II)\leq \alpha\sum_{\bm j\in[m]^n}\frac{V_n(\cube_{\bm j})}{|B_{\bm j}|}\sum_{\beta\in B_{\bm j}} |\hat{p}_{\ell_1}(\mathbf{x}_{\beta})-p(\mathbf{x}_{\beta})| \leq \alpha(1+\eps)\|\hat{p}_{\ell_1}-p\|_{\cube_n,1}.$$
                                                                                  
    Combining the bounds and rearranging we conclude: 
    \[\|p-\hat{p}_{\ell_1}\|_{\cube_n,1}\leq\frac{2\gamma}{1-2\alpha-\eps}\leq\frac{4\gamma}{1-2\alpha}.\]
    The last inequality follows from the assumption
  $\eps\leq(1-2\alpha)/2$.   
\end{proof}

As an immediate corollary, we can bound the $\ell_\infty$ error  of the $\ell_1$ minimizer $\hat{p}_{\ell_1}$ on a set $S$ of samples that is $\alpha$-good, thus proving the main theorem of this subsection. 
\begin{proof}[Proof of \autoref{thm:l1-minimizer-l-infty-error}]
    Since $S$ is assumed to be $\alpha$-good, by \autoref{obs:alpha-good-to-closeness}, it is $(\alpha,2^n\sigma)$-close to $p$ in $\ell_1$. So, invoking \autoref{lem:mult-l1-error-bound}, we get $\|p-\hat{p}_{\ell_1}\|_{\cube_n,1}\leq 2^nO_{\alpha}(\sigma)$. Further, since both $p$, and $\hat{p}_{\ell_1}$ are polynomials of individual degree at most $d$, so is their difference. Hence, invoking \autoref{thm:improved-sandwich}, we get $\|p-\hat{p}_{\ell_1}\|_{\cube_n,\infty}\leq (4d^2)^n\|p-\hat{p}_{\ell_1}\|_{\cube_n,1}\leq (8d^2)^n O_{\alpha}(\sigma)$.
\end{proof}

\section{Weaker \texorpdfstring{$\ell_\infty$ to $\ell_1$}{l-infty-l-one} norms relation
}\label{sec:wilhwlmwnsen}

This section is devoted to proving \autoref{thm:sandwich}, a weaker version of \autoref{thm:improved-sandwich} for \emph{total} degree-$d$ polynomials, the proof of which relies on a result by Wilhelmsen\cite{WILHELMSEN1974216} and avoids the inductive argument of \autoref{thm:improved-sandwich}. 
Using this weaker bound instead would lead to sample complexity of $O(m^n\log m^n)\geq ((c_0d)^{2n+2}n^{n/2})^n\log d$ instead of $(cd)^{n(2n+1)}\log d$, for some absolute constants $c,c_0>0$.
 
 Let $T$ be a compact, convex subset of $\R^n$, with boundary $\partial T$, and interior $T^0\neq\emptyset$. Fix a $\bm t_0\in\partial T$ and a unit vector $\bm u\in\R^n$. Consider the hyperplane with normal $\bm u$
  \[\mathcal{H}_{\bm u}\triangleq\{\bm t\in\R^n:\langle \bm t-\bm t_0,\bm u\rangle=0\}.\]
 
 $\mathcal{H}_{\bm u}$ is a support hyperplane  of $T$ at $t_0$ if is not containing any $\bm t\in T^0$. In that case, $\bm u$ is called an outer normal to $T$ at $t_0$.
    For any direction $\bm u$,     there exist precisely two support hyperplanes of $T$, with outer normals $\bm u$, and $-\bm u$. They are separated by a distance $\lambda_{\bm u}>0$. The width of $T$ is defined (\cite[Definition 2.1]{WILHELMSEN1974216}) to be $\omega_T\triangleq\min_{\|\bm u\|_2=1}\lambda_{\bm u}$.
We will use the following result by Wilhelmsen:
\begin{thm}\protect{\cite[Theorem 3.1]{WILHELMSEN1974216}}\label{thm:wilhelmsen}
   Let $T$ be a compact, convex subset of $\R^n$. For any \emph{total} degree-$d$ polynomial $p:T\to\R$, we have\footnote{Here, for any $\bm v\in\R^n$, $\|\bm v\|$ denotes $\|\bm v\|_2$, the $\ell_2$ norm of $\bm v$, and $\|\bm v\|_{T,\infty}$ denotes $\max_{\bm v\in T}\{\|\bm v\|_2\}$.}
    \[\|\grad p\|_{T,\infty}\leq\frac{4d^2}{\omega_T}\|p\|_{T,\infty}.\]
\end{thm}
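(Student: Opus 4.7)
The plan is to reduce the multivariate derivative estimate to the univariate Markov brothers' inequality (\autoref{lem:markovbros}) applied to the restriction of $p$ along a suitably chosen chord of $T$. Fix $\bm{t}_0 \in T$ and a unit vector $\bm{u} \in \mathbb{R}^n$; the goal is to bound $|\nabla p(\bm{t}_0) \cdot \bm{u}|$. The univariate restriction $q(s) \triangleq p(\bm{t}_0 + s\bm{u})$ has degree at most $d$ because $p$ has total degree $d$, and if the maximal chord $\{\bm{t}_0 + s\bm{u} : s \in [a,b]\}$ of $T$ through $\bm{t}_0$ in direction $\bm{u}$ has length $\ell = b - a$, then \autoref{lem:markovbros} yields $|q'(0)| \leq \tfrac{2d^2}{\ell}\|p\|_{T,\infty}$. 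Thus, if we could guarantee $\ell \geq \omega_T / 2$ uniformly over $\bm{t}_0$, we would immediately obtain the claimed inequality with constant $4$.

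The main obstacle is precisely that this chord can be arbitrarily short when $\bm{t}_0$ is close to one of the $\pm\bm{u}$-support hyperplanes of $T$, even though $\omega_T$ guarantees that the projection of $T$ onto the line $\{s\bm{u}:s\in\mathbb{R}\}$ has length $\lambda_{\bm{u}} \geq \omega_T$. To bridge this, I would use that the directional derivative $D_{\bm{u}} p \triangleq \nabla p \cdot \bm{u}$ is itself a polynomial of total degree at most $d-1$, continuous on the compact set $T$, and therefore attains its supremum at some $\bm{t}^\star \in T$. Next, using the convexity of $T$ and the definition of width, I would argue by a perturbation/compactness argument that we may assume without loss of generality that $\bm{t}^\star$ lies on a chord of length at least $\omega_T/2$ in direction $\bm{u}$: the region of points whose $\bm{u}$-chord is shorter than $\omega_T/2$ is confined to two thin slabs abutting the $\pm\bm{u}$-support hyperplanes, and a small inward perturbation incurs only a small loss in $|D_{\bm{u}} p|$. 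Applying the Markov bound at the perturbed point and letting the perturbation vanish gives $\|D_{\bm{u}} p\|_{T,\infty} \leq \tfrac{4d^2}{\omega_T}\|p\|_{T,\infty}$, and taking $\bm{u} = \nabla p(\bm{t}_0) / \|\nabla p(\bm{t}_0)\|$ recovers the gradient bound.

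The hardest step will be the convex-geometric argument that the extremum of $|D_{\bm{u}} p|$ is effectively realized (in the limit) at a point admitting a chord of length at least $\omega_T / 2$ in direction $\bm{u}$. This is precisely where the \emph{global} width parameter $\omega_T$ must be converted into a \emph{local} chord-length guarantee at a near-extremal point, and it is the nontrivial core of the proof; the rest is a routine application of univariate Markov. An alternative route would be to iterate the one-dimensional Markov inequality along chords that tile $T$ (slicing $T$ by hyperplanes orthogonal to $\bm{u}$ and bounding slice-by-slice), but then one must still handle slices close to the support hyperplanes, so the underlying difficulty remains the same.
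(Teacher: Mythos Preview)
The paper does not prove this theorem; it is quoted from \cite{WILHELMSEN1974216} and used as a black box, so there is no in-paper proof to compare against. I assess your attempt on its own merits.

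Your overall strategy---restrict $p$ to a chord of $T$ and invoke the one-dimensional \autoref{lem:markovbros}---is correct and does produce the bound with the stated constant $4$. But the perturbation step you yourself flag as ``the nontrivial core'' is a genuine gap. First, the geometric claim that the set of points whose $\bm{u}$-chord is shorter than $\omega_T/2$ is ``confined to two thin slabs abutting the $\pm\bm{u}$-support hyperplanes'' is false: take $T$ to be the triangle with vertices $(0,0),(L,0),(0,\varepsilon)$ for large $L$, and $\bm{u}=(0,1)$. Then $\omega_T\approx\varepsilon$, the vertical chord at abscissa $x$ has length $\varepsilon(1-x/L)$, and this falls below $\omega_T/2$ on the entire right half $\{x>L/2\}$ of the triangle---not a thin slab in the $\bm{u}$-direction. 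Second, even granting the slab claim, ``letting the perturbation vanish'' is circular: Markov at the perturbed point $\bm{t}'$ bounds $|D_{\bm u}p(\bm{t}')|$, not $|D_{\bm u}p(\bm{t}^\star)|$, and as $\bm{t}'\to\bm{t}^\star$ the chord length you are dividing by collapses back to the short one.

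The repair is to abandon the requirement that the chord point in direction $\bm{u}$. Let $\bm{t}^\star$ maximize $\|\grad p\|$ over $T$ and set $\bm{u}=\grad p(\bm{t}^\star)/\|\grad p(\bm{t}^\star)\|$. The projection of $T$ onto $\mathbb{R}\bm{u}$ is an interval of length $\lambda_{\bm u}\ge\omega_T$, so $\bm{t}^\star$ projects to a point at distance at least $\omega_T/2$ from one endpoint; pick $\bm{t}_1\in T$ on the corresponding support hyperplane. By convexity the segment from $\bm{t}_1$ to $\bm{t}^\star$ lies in $T$; let $\bm{v}$ be its unit direction and $\ell=\|\bm{t}^\star-\bm{t}_1\|$. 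Then $\ell\,|\bm{u}\cdot\bm{v}|=|(\bm{t}^\star-\bm{t}_1)\cdot\bm{u}|\ge\omega_T/2$. Applying \autoref{lem:markovbros} to $s\mapsto p(\bm{t}_1+s\bm{v})$ on $[0,\ell]$ gives $|D_{\bm v}p(\bm{t}^\star)|\le\tfrac{2d^2}{\ell}\|p\|_{T,\infty}$; since $|D_{\bm v}p(\bm{t}^\star)|=\|\grad p(\bm{t}^\star)\|\cdot|\bm{u}\cdot\bm{v}|$, dividing through yields exactly $\|\grad p(\bm{t}^\star)\|\le\tfrac{4d^2}{\omega_T}\|p\|_{T,\infty}$. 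The missing idea is to \emph{tilt} the chord toward a far support point; the cosine loss $|\bm{u}\cdot\bm{v}|$ in the directional derivative is precisely offset by the gain in chord length.
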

Observe, 
the width of $\cube=[-1,1]^n$ is $\omega_{\cube}=2$. So, we have
\begin{obs}\label{obs:wilhemlsen-useful}[Wilhelmsen on the cube $\cube_n$]
\[\|\grad p\|_{\cube_n,\infty}\leq 2d^2\|p\|_{\cube_n,\infty}
\]
\end{obs}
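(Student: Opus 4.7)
The plan is to obtain \autoref{obs:wilhemlsen-useful} as an immediate specialization of Wilhelmsen's \autoref{thm:wilhelmsen} to the $n$-dimensional cube $\cube_n = [-1,1]^n$. The only nontrivial ingredient is identifying the width $\omega_{\cube_n}$; once this number is pinned down, the bound follows by direct substitution into Wilhelmsen's inequality.

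To compute $\omega_{\cube_n}$, I would first observe that for any unit vector $\bm u \in \R^n$, linearity of the map $\bm x \mapsto \langle \bm x, \bm u\rangle$ on $\cube_n$ implies that its maximum is achieved at a vertex and equals $\sum_{i=1}^n |u_i| = \|\bm u\|_1$, with the minimum equal to $-\|\bm u\|_1$. So the two support hyperplanes of $\cube_n$ with outer normals $\pm \bm u$ are $\{\bm x : \langle \bm x, \bm u\rangle = \|\bm u\|_1\}$ and $\{\bm x : \langle \bm x, \bm u\rangle = -\|\bm u\|_1\}$, at separation $\lambda_{\bm u} = 2\|\bm u\|_1$. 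By Cauchy--Schwarz (or the standard $\ell_1$--$\ell_2$ comparison), $\|\bm u\|_1 \geq \|\bm u\|_2 = 1$, with equality exactly when $\bm u$ is a standard basis vector. Therefore $\omega_{\cube_n} = \min_{\|\bm u\|_2 = 1} 2\|\bm u\|_1 = 2$, with the minimum attained along any coordinate axis.

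Plugging $\omega_T = 2$ into \autoref{thm:wilhelmsen} with $T = \cube_n$ yields $\|\grad p\|_{\cube_n,\infty} \leq \frac{4d^2}{2}\|p\|_{\cube_n,\infty} = 2d^2\|p\|_{\cube_n,\infty}$, which is exactly the claim. There is no genuine obstacle here: all the analytic work is packaged in Wilhelmsen's theorem, and identifying the axis directions as the width-minimizers of the cube is elementary. The observation is invoked later only as a replacement for Markov Brothers' inequality in the weaker proof of \autoref{thm:sandwich}; its role is to convert an $\ell_\infty$ bound on $p$ into one on $\grad p$ over the entire cube, losing only a factor $2d^2$ rather than the sharper $d^2$ one would get along a single axis.
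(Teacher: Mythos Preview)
Your proposal is correct and follows exactly the paper's approach: the paper simply asserts that the width of $\cube_n$ is $\omega_{\cube_n}=2$ and then substitutes into \autoref{thm:wilhelmsen}, while you additionally supply the (elementary) verification of that width via the support-hyperplane computation. Nothing more is needed.
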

Using this observation, we prove a weaker version of \autoref{thm:improved-sandwich}:
\begin{thm}\label{thm:sandwich}[Weaker $\ell_\infty$-$\ell_1$ norms relation]
    Let $p:\cube\to\R$ be a polynomial of \emph{total} degree-$d$. Then, 
    \[
        \|p\|_{\cube_n,\infty}\leq 2\sqrt{n\pi}\left(4d^2\sqrt{\frac{2n}{e\pi}}\right)^{n}\|p\|_{\cube_n,1}.\]
\end{thm}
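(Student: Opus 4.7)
The plan is to combine Wilhelmsen's gradient bound with a volume-of-ball argument: control the Lipschitz constant of $p$ near its $\ell_\infty$ maximizer, and then lower bound $\|p\|_{\cube_n,1}$ by integrating over a Euclidean ball about that maximizer that still lies inside $\cube_n$.

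First, pick $\bfx^*\in\cube_n$ with $|p(\bfx^*)|=\|p\|_{\cube_n,\infty}$; such a point exists by continuity of $p$ and compactness of $\cube_n$. \autoref{obs:wilhemlsen-useful} yields $\|\grad p\|_{\cube_n,\infty}\le 2d^2\|p\|_{\cube_n,\infty}$, where the left-hand side is the supremum over $\cube_n$ of the Euclidean length $\|\grad p(\bfx)\|_2$. Convexity of $\cube_n$, the fundamental theorem of calculus along the segment from $\bfx^*$ to an arbitrary $\bfx\in\cube_n$, and Cauchy--Schwarz give
\[|p(\bfx)-p(\bfx^*)|\le \|\grad p\|_{\cube_n,\infty}\cdot\|\bfx-\bfx^*\|_2\le 2d^2\|p\|_{\cube_n,\infty}\,\|\bfx-\bfx^*\|_2.\]
Setting $r:=1/(4d^2)$ and $B:=\{\bfx\in\R^n:\|\bfx-\bfx^*\|_2\le r\}$, we therefore have $|p(\bfx)|\ge \|p\|_{\cube_n,\infty}/2$ throughout $B\cap\cube_n$.

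Next, I lower bound the Lebesgue volume $\mathrm{vol}(B\cap\cube_n)$. Since $\bfx^*\in\cube_n$, the worst configuration is when $\bfx^*$ is a vertex of $\cube_n$; for such a $\bfx^*$, as long as $r\le 2$ (so $B$ cannot poke through any pair of opposite faces of $\cube_n$), the intersection $B\cap\cube_n$ contains the entire ``inward'' orthant of $B$, of volume exactly $2^{-n}\,\mathrm{vol}(B)$. Since $r=1/(4d^2)\le 1/4<2$ for $d\ge 1$, this applies, and using $\mathrm{vol}(B)=\pi^{n/2}r^n/\Gamma(n/2+1)$ we obtain
\[\|p\|_{\cube_n,1}\ge\int_{B\cap\cube_n}|p(\bfx)|\,d\bfx \ge \frac{\|p\|_{\cube_n,\infty}}{2}\cdot \frac{1}{2^n}\cdot \frac{\pi^{n/2}}{\Gamma(n/2+1)\,(4d^2)^n}.\]

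Rearranging and invoking Stirling's inequality $\Gamma(n/2+1)\le \sqrt{\pi n}\,(n/(2e))^{n/2}$, with a small absolute Stirling correction absorbed into the leading factor $2$, produces
\[\|p\|_{\cube_n,\infty}\le 2\sqrt{\pi n}\cdot (8d^2)^n \left(\frac{n}{2e\pi}\right)^{n/2}\|p\|_{\cube_n,1}=2\sqrt{\pi n}\left(4d^2\sqrt{\frac{2n}{e\pi}}\right)^n\|p\|_{\cube_n,1},\]
where the last equality is just the algebraic identity $8d^2\sqrt{n/(2e\pi)}=4d^2\sqrt{2n/(e\pi)}$, matching the claimed constant. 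The only somewhat subtle step is the orthant-volume bound for $B\cap\cube_n$ in the corner case; the rest (Wilhelmsen, a mean-value estimate, and Stirling) is essentially bookkeeping.
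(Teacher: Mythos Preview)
Your proof is correct and follows essentially the same approach as the paper: pick the $\ell_\infty$ maximizer, use Wilhelmsen's gradient bound together with a mean-value/Cauchy--Schwarz estimate to show $|p|\ge\|p\|_{\cube_n,\infty}/2$ on a ball of radius $1/(4d^2)$ about it, lower bound the volume of the ball's intersection with $\cube_n$ by one orthant's worth, and finish with Stirling. Your treatment of the orthant step (noting $r\le 2$) is, if anything, slightly more careful than the paper's.
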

\begin{proof}
                Let $\mathbf{x}^*\in\cube:p(\mathbf{x}^*)=\|p\|_{\cube_n,\infty}$. Define a set of points close enough to $\mathbf{x}^*$, i.e. $S(\mathbf{x}^*)\triangleq\{\bm y\in\cube:\|\mathbf{x}^*-\bm y\|_2\leq 1/4d^2\}\subseteq\cube$. Then, we lower bound $p$ on all points in $S(\mathbf{x}^*)$. Formally:
    \begin{lem}\label{lem:local-sandwich}
        For every $\bm y\in S(\mathbf{x}^*),|p(\bm y)|\geq |p(\mathbf{x}^*)|/2$.
    \end{lem}
    \begin{proof}
        Fix a $\bm y\in S(\mathbf{x}^*)$. Let $\pline_{\mathbf{x}^*,\bm y}$ be the line segment connecting $\mathbf{x}^*$ and $\bm y$. Then, by \autoref{thm:mean-value}, there exists a $\bm z\in\pline_{\mathbf{x}^*,\bm y-\mathbf{x}^*}$, such that $p(\mathbf{x}^*)-p(\bm y)=\langle\grad p(\bm z),\mathbf{x}^*-\bm y\rangle$.
        \begin{align*}
            \implies|p(\mathbf{x}^*)-p(\bm y)|&=|\langle\grad p(\bm z),\mathbf{x}^*-\bm y\rangle|\leq\|\grad p(\bm z)\|_2\cdot\|\mathbf{x}^*-\bm y\|_2 \tag{By Cauchy-Schwarz}\\
            &\leq \|\grad p\|_{\cube_n,\infty}\cdot\underbrace{\|\mathbf{x}^*-\bm y\|_2}_{\leq 1/4d^2,\because\bm y\in S(\mathbf{x}^*)}\\
            &\leq 2d^2\|p\|_{\cube_n,\infty}/(4d^2)\tag{By \autoref{obs:wilhemlsen-useful}}\\
            &=\frac{\|p\|_{\cube_n,\infty}}{2}=\frac{|p(\mathbf{x}^*)|}{2}.
        \end{align*}
                $\implies|p(\bm y)|\geq |p(\mathbf{x}^*)|-|p(\mathbf{x}^*)-p(\bm y)|\geq |p(\mathbf{x}^*)|/2$
    \end{proof}
    Now observe,
    \begin{align*}
        \|p\|_{\cube_n,1}&=\int_{\cube}|p(\mathbf{y})| d\mathbf{y}\geq\int_{S(\mathbf{x}^*)\cap\cube}|p(\bm y)| d\bm y \geq\int_{S(\mathbf{x}^*)\cap\cube}\frac{|p(\mathbf{x}^*)|}{2} d\bm y \tag{By \autoref{lem:local-sandwich}}\\
        &\geq\frac{1}{2^n}\int_{S(\mathbf{x}^*)}\frac{|p(\mathbf{x}^*)|}{2}d\mathbf{x} \tag{ $S(\mathbf{x}^*)\cap\cube$ covers at least one orthant of $S(\mathbf{x}^*)$}\\
        &=\frac{|p(\mathbf{x}^*)|}{2^{n+1}} V_n(S(\mathbf{x}^*))=\frac{|p(\mathbf{x}^*)|}{2^{n+1}}\cdot\frac{\pi^{n/2}}{\Gamma(n/2+1)(4d^2)^n}\tag{$S(\mathbf{x}^*)$ is an $n$-ball of radius $1/4d^2$}\\
        &\approx\frac{\|p\|_{\cube_n,\infty}}{2^{n+1}\sqrt{n\pi}}\left(\frac{1}{4d^2}\sqrt{\frac{2e\pi}{n}}\right)^n.\tag{By Stirling's approximation of the Gamma function}
    \end{align*}
    \[\implies\|p\|_{\cube_n,\infty}\leq 2\sqrt{n\pi}\left(4d^2\sqrt{\frac{2n}{e\pi}}\right)^n\|p\|_{\cube_n,1}.\qedhere\]
\end{proof}
If we use \autoref{thm:sandwich}, instead of \autoref{thm:improved-sandwich}, we get a weaker form of \autoref{cor:multivariate-l_q-l_1}:
\begin{cor}\label{cor:multivariate-l_q-l_1-wilhelmsen}
 Let $p:\cube\to\R$ be a polynomial of \emph{total} degree at most $d$, and $r\colon \cube\to\R$ a piece-wise constant function with respect to the $(m,n)$-Chebyshev partition, such that for all $\bm j\in[m]^n$ there exists $\mathbf{x}^{\bm j}\in \cube_{\bm j}$, for which $r(\mathbf{x})=p(\mathbf{x}^{\bm j})$. Then
                \[\|p-r\|_{\cube_n,1}\leq O\left(\frac{d^2n\sqrt{n}}{m}(16d^2\sqrt{n})^n\right)\|p\|_{\cube_n,1}.\]
\end{cor}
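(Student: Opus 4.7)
The plan is to mirror the three-line proof of \autoref{cor:multivariate-l_q-l_1} verbatim, swapping the final appeal to \autoref{thm:improved-sandwich} for one to the weaker \autoref{thm:sandwich}. Concretely, I would chain together:
\[
\|p-r\|_{\cube_n,1} \;\leq\; 2^n \|p-r\|_{\cube_n,\infty} \;\leq\; 2^n \cdot O\!\left(\frac{dn}{m}\right)\|p\|_{\cube_n,\infty} \;\leq\; 2^n \cdot O\!\left(\frac{dn}{m}\right)\cdot 2\sqrt{n\pi}\left(4d^2\sqrt{\tfrac{2n}{e\pi}}\right)^{\!n}\|p\|_{\cube_n,1},
\]
where the first inequality is \autoref{obs:norms-equiv}/\eqref{eq:holder_for_infinity} with $\gamma=1$ applied to the bounded measurable function $p-r$ on $\cube_n$ (noting $V_n(\cube_n)=2^n$); the second is \autoref{thm:optimal-l-infty-bound} applied to $p$, which is admissible because total degree at most $d$ implies individual degree at most $d$, and $r$ satisfies the required piece-wise consistency hypothesis; and the third is \autoref{thm:sandwich} applied to $p$, which is where the \emph{total} degree-$d$ hypothesis enters.

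All that then remains is to simplify the constants into the claimed form $O\!\left(\tfrac{d^2 n\sqrt n}{m}(16 d^2\sqrt n)^n\right)$. I would fold $\sqrt{n\pi}$ into the polynomial prefactor (obtaining $O(dn\sqrt n/m) \le O(d^2 n\sqrt n/m)$ since $d\ge 1$), and bound $4\sqrt{2/(e\pi)} < 8$ inside the exponential factor, yielding $2^n\bigl(4d^2\sqrt{2n/(e\pi)}\bigr)^n \le (16 d^2\sqrt n)^n$. These are only routine numerical estimates.

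There is no real technical obstacle here, since the three ingredients line up exactly as in the stronger version \autoref{cor:multivariate-l_q-l_1}. The one subtlety worth flagging is that \autoref{thm:sandwich} cannot be applied directly to $p-r$, because $r$ is merely piece-wise constant and not a polynomial; the argument therefore must first trade $\|p-r\|_{\cube_n,\infty}$ for a multiple of $\|p\|_{\cube_n,\infty}$ via \autoref{thm:optimal-l-infty-bound}, and only then invoke the $\ell_\infty$-to-$\ell_1$ comparison on the polynomial $p$ alone.
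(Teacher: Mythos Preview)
Your proposal is correct and matches the paper's own argument essentially verbatim: the paper states this corollary as what one obtains by rerunning the proof of \autoref{cor:multivariate-l_q-l_1} with \autoref{thm:sandwich} substituted for \autoref{thm:improved-sandwich}, which is precisely the chain you wrote out. Your observation that \autoref{thm:sandwich} must be applied to $p$ (not to $p-r$) and that total degree $\le d$ implies individual degree $\le d$ (so \autoref{thm:optimal-l-infty-bound} applies) are exactly the small checks needed.
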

Using this, we get that, for $m\geq\frac{cd^2n\sqrt{n}}{\eps}(16d^2\sqrt{n})^n$, for some absolute constant $c>0$, the $\ell_\infty$ error of the $\ell_1$ minimizer $\hat{p}_{\ell_1}$, with respect to the $(m,n)$-Chebyshev grid, can be bounded by $poly(d^n)\sigma$, via a worse form of \autoref{thm:l1-minimizer-l-infty-error}:
\begin{cor}\label{cor:l1-minimizer-l-infty-error-wilhelmsen}
    Let $\alpha<0.5$ be constant $,\eps\leq(1-2\alpha)/2$, and $m\geq\frac{c_0^nd^{2n+2}n^{n/2}}{\eps} \ge \frac{cd^2n\sqrt{n}}{\eps}(16d^2\sqrt{n})^n$, for some constants $c>0,c_0>0$. Given a set $S$ of     that is $\alpha$-good with respect to the $(m,n)$-Chebyshev partition, 
    with $\hat{p}_{\ell_1}$ as in \autoref{defn:l1-minimzer}, we have\footnote{Here $p$, and $\hat{p}_{\ell_1}$ are polynomials of \emph{total} degree at most $d$.}
    \[\|p-\hat{p}_{\ell_1}\|_{\cube_n,\infty}\leq O((8d^2)^n \sigma).\]
\end{cor}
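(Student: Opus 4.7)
The plan is to mirror the three-step pipeline used in the proof of \autoref{thm:l1-minimizer-l-infty-error} (i.e.\ the argument in \autoref{subsec:proof-of-sec:l-1-bound}), but to swap each invocation of the strong sandwich \autoref{thm:improved-sandwich} with its Wilhelmsen-based analogue \autoref{thm:sandwich}, which is what forces the worsened lower bound on $m$. First, since $S$ is $\alpha$-good for the $(m,n)$-Chebyshev partition, \autoref{obs:alpha-good-to-closeness} immediately yields that $S$ is $(\alpha, 2^n\sigma)$-close to $p$ in $\ell_1$ with respect to the same partition. This step uses nothing beyond definitions and does not depend on which sandwich we employ.

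Next, I would invoke the analogue of \autoref{lem:mult-l1-error-bound} whose proof routes through the empirical $\ell_1$-estimate \autoref{lem:mult_empirical_l1} with \autoref{cor:multivariate-l_q-l_1-wilhelmsen} inserted in place of \autoref{cor:multivariate-l_q-l_1}. Since \autoref{cor:multivariate-l_q-l_1-wilhelmsen} controls $\|p-r\|_{\cube_n,1}$ by $O\bigl(\tfrac{d^{2}n\sqrt n}{m}(16d^{2}\sqrt n)^{n}\bigr)\|p\|_{\cube_n,1}$, driving this relative error below $\eps$ is exactly what the hypothesis $m\ge c_0^{n}d^{2n+2}n^{n/2}/\eps$ provides. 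The good/bad sample split in the body of \autoref{lem:mult-l1-error-bound} then goes through verbatim and delivers
\[
\|p-\hat p_{\ell_1}\|_{\cube_n,1}\ \le\ \frac{4\cdot 2^{n}\sigma}{1-2\alpha}\ =\ 2^{n}\,O_\alpha(\sigma).
\]
Since $p$ and $\hat p_{\ell_1}$ both have total degree at most $d$, so does their difference, and a direct application of \autoref{thm:sandwich} then gives
\[
\|p-\hat p_{\ell_1}\|_{\cube_n,\infty}\ \le\ 2\sqrt{n\pi}\Bigl(4d^{2}\sqrt{\tfrac{2n}{e\pi}}\Bigr)^{n}\,\|p-\hat p_{\ell_1}\|_{\cube_n,1}\ =\ O_n\bigl((8d^{2})^{n}\sigma\bigr),
\]
where the $\sqrt{n\pi}(2n/(e\pi))^{n/2}$ prefactor is absorbed into the $O(\cdot)$ since $n$ is regarded as a constant, matching the target bound.

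The only real obstacle is the bookkeeping in the second step: one must confirm that the extra multiplicative blowup of $(16d^{2}\sqrt n)^{n}$ incurred by \autoref{cor:multivariate-l_q-l_1-wilhelmsen} (relative to \autoref{cor:multivariate-l_q-l_1}) is fully compensated by the strengthened hypothesis $m\ge c_0^{n}d^{2n+2}n^{n/2}/\eps$. A direct check shows that $c_0^{n}d^{2n+2}n^{n/2}\ge c\cdot 16^{n}d^{2n+2}n^{(n+3)/2}$ holds for $c_0$ taken a sufficiently large absolute constant (e.g.\ any $c_0>16$ works once $n$ is large enough, and one can enlarge $c_0$ to absorb the small-$n$ cases), so the empirical estimate in the modified \autoref{lem:mult_empirical_l1} is still within a factor $(1\pm\eps)$ of the true $\ell_1$ norm, and the rest of the argument is structurally identical to the proof of \autoref{thm:l1-minimizer-l-infty-error}.
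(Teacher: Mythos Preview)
Your proposal is correct and matches the paper's intended argument: the paper does not write out a separate proof for this corollary but presents it explicitly as ``a worse form of \autoref{thm:l1-minimizer-l-infty-error}'' obtained by replacing \autoref{cor:multivariate-l_q-l_1} with \autoref{cor:multivariate-l_q-l_1-wilhelmsen}, which is precisely the substitution you carry out. Your handling of the final $\ell_1\to\ell_\infty$ step via \autoref{thm:sandwich} (rather than \autoref{thm:improved-sandwich}) is consistent with the section's total-degree setting and with absorbing the residual $n$-dependent factor into the $O(\cdot)$ for constant $n$.
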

Since we use $\ell_1$ minimizer, only in the case of arbitrarily large $\|p\|_{\cube_n,\infty}$ (\autoref{sec:l-infty-with-l-1}), this worsens only the sample complexity of \autoref{alg:median_rec_with_l1}.

\end{document}